\newtheorem{theorem}{Theorem}
\newtheorem{corollary}{Corollary}
\newtheorem{proposition}{Proposition}
\newcommand{\comments}[1]{}
\begin{document}

\title{Predicting Arbitrary State Properties from Single Hamiltonian Quench Dynamics}

\begin{abstract}
Analog quantum simulation is an essential routine for quantum computing and plays a crucial role in studying quantum many-body physics.
Typically, the quantum evolution of an analog simulator is largely determined by its physical characteristics, lacking the precise control or versatility of quantum gates. This limitation poses challenges in extracting physical properties on analog quantum simulators, an essential step of quantum simulations.
To address this issue, we introduce the \emph{Hamiltonian shadow} protocol, which uses a single quench Hamiltonian for estimating arbitrary state properties, eliminating the need for ancillary systems and random unitaries.
Additionally, we derive the sample complexity of this protocol and show that it performs comparably to the classical shadow protocol. The Hamiltonian shadow protocol does not require sophisticated control and can be applied to a wide range of analog quantum simulators. We demonstrate its utility through numerical demonstrations with Rydberg atom arrays under realistic parameter settings. The new protocol significantly broadens the application of randomized measurements for analog quantum simulators without precise control and ancillary systems.

\end{abstract}
\date{\today}

\author{Zhenhuan Liu}
\email{liu-zh20@mails.tsinghua.edu.cn}
\affiliation{Center for Quantum Information, Institute for Interdisciplinary Information Sciences, Tsinghua University, Beijing 100084, China}

\author{Zihan Hao}
\email{haozh20@mails.tsinghua.edu.cn}
\affiliation{Center for Quantum Information, Institute for Interdisciplinary Information Sciences, Tsinghua University, Beijing 100084, China}

\author{Hong-Ye Hu}
\email{hongyehu@fas.harvard.edu}
\affiliation{Department of Physics, Harvard University, 17 Oxford Street, Cambridge, MA 02138, USA}

\maketitle

\textbf{Introduction.} Analog quantum simulation stands as one of the flagship applications of emerging quantum technology \cite{bernien2017dynamics,ebadi2021quantumphase,daley2022practicaladvantage,evered2023highfidelity}.
Up to date, various platforms for analog quantum simulation have been suggested, including optical tweezer with neutral atoms \cite{ebadi2021quantumphase,bernien2017dynamics} and molecules \cite{hu2019molecule,burchesky2021rotation}, optical lattices with quantum gas microscopes \cite{Bloch,thomas2023fermionpairing,lenard2023FQH} and solid-state materials \cite{Kiczynski2022quantumdot,wang2022quantumdot2}.
It provides insights into the realization of exotic phases of matter \cite{aidelsburger2015chernnumber,smits2018timecrystal,semeghini2021spinliquid,randall2021MBL,leonard2023FQH}, the investigation of superconductivity \cite{hart2015AFM,chiu2019strings,hartke2020hubbard,ji2021mobilehole}, and the simulation of quantum field theory \cite{martinez2016gaugefield,yang2020gauge,paulson20212Dgauge,meth2023simulating}, among others.

Analog simulators are specially designed to execute specific quantum evolutions dictated by their intrinsic Hamiltonians. Although analog quantum simulators excel at simulating certain complex quantum many-body systems, extracting physical properties from such platforms, such as correlation functions, quantum fidelity, and entanglement entropy, is not straightforward. The root of this challenge comes from the fact that the measurement bases of analog quantum simulators are usually limited to one or a few. For example, it is easy to perform computational basis measurements in Rydberg atoms arrays, or particle number basis in quantum gas microscopes. To extract more sophisticated physical properties, such as fidelity or correlation function, one needs to perform a unitary basis transformation to rotate such a desired physical observable to those simple measurement bases. However, such basis rotation is challenging because analog quantum simulators are restricted by their fixed Hamiltonian forms and have no universal quantum gates. For example, if one only has global laser pulse control of the Rydberg atom arrays, it is impossible to rotate arbitrary Pauli observables to computational measurement basis, making it hard to perform even quantum tomography.

\begin{figure}[htbp]
    \centering
    \includegraphics[width=0.95\linewidth]{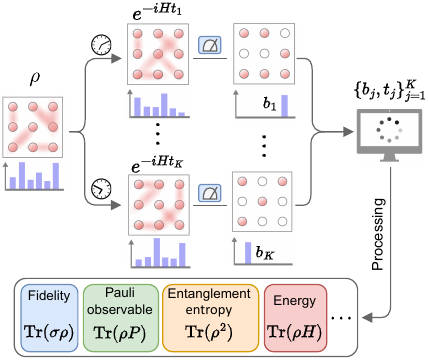}
    \caption{Overview of the Hamiltonian shadow protocol. In each experiment round, the target state $\rho$ evolves with a fixed Hamiltonian and a random time, followed by a computational basis measurement. The data pair of measurement result $b_j$ and evolution time $t_j$ is recorded on a classical computer to estimate arbitrary properties of $\rho$.}
    \label{fig:overview}
\end{figure}

Inspired by randomized measurement protocols which estimate observables via random bases measurements \cite{Huang2020predicting,Elben2023toolbox,Elben2019toolbox,elben2018quench,brydges2019entropy,chen2021robust,denzler2023learning,van2022hardware,hu2023locallybias,Akhtar2023scalableflexible,bertoni2022shallow}, researchers have made many efforts trying to solve this challenge. The key ideas are either using many chaotic Hamiltonians \cite{hu2022hamiltonian} or using ancillary systems \cite{tran2023arbitrary,mcginley2023shadow} to induce random bases measurements and extract information. However, it is neither experimentally favorable to calibrate many chaotic Hamiltonians nor enlarge the system size. It would be desirable to use a single Hamiltonian without ancillary systems to effectively measure many or arbitrary physical properties.

In this work, we address this challenge by presenting the 
\emph{Hamiltonian shadow} protocol. This protocol allows for predicting any state property through quench evolution with a single Hamiltonian and different evolution times, as illustrated in Fig.~\ref{fig:overview}. 
It essentially leverages the inherent randomness in eigen-energies of a single Hamiltonian, eliminating the need for multiple Hamiltonians or ancillary systems, which is a significant reduction in resources. 
We theoretically prove that computational basis measurements after quantum evolution under a single Hamiltonian can unbiasedly extract arbitrary properties of the target quantum state. 
Furthermore, the requirements for the quench Hamiltonian are minimal, such as no eigen-energy degeneracy and no computational basis eigenstates, conditions generally met by practical analog systems. 
Even with a single quench Hamiltonian, both theoretical and numerical analysis demonstrate that the performance of the Hamiltonian shadow is comparable to that of the original shadow with random Clifford unitaries. We apply the proposed method to a Rydberg atoms system with realistic parameter settings and demonstrate its efficacy in estimating several essential physical quantities with only global laser pulses, including quantum fidelity, local correlation function, and purity. Conventionally, these quantities are hard to measure since no basis rotation with global control can transform them to computational measurement basis. This advancement signifies a substantial leap forward in analog quantum simulation and classical shadow tomography, holding considerable promise for near-term applications. 

\begin{figure}
\centering
\includegraphics[width=0.4\linewidth]{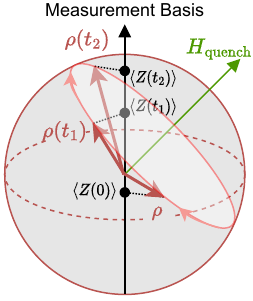}
\caption{Geometric intuition of Hamiltonian shadow in a single-qubit Bloch sphere. When $\rho$ is rotating around the Hamiltonian by a known velocity, we are able to recover the initial position of $\rho$ by its $z$ position at different times together with the information of the quench Hamiltonian.} 
\label{fig:dev_overview}
\end{figure}

\textbf{Hamiltonian shadow.} 
We use a toy model to geometrically illustrate the validation of state learning using a single Hamiltonian. 
As shown in Fig.~\ref{fig:dev_overview}, an unknown state $\rho$ is a vector in the Bloch sphere, rotating around a known axis with a known angular velocity determined by the quench Hamiltonian $H_{\text{quench}}$. 
Based on geometric intuitions, the dynamical trajectory of the $Z$-basis expectation value $\langle Z(t)\rangle$ together with $H_{\text{quench}}$ can recover $\rho$ \cite{special_cases}, which stands for the initial position of the rotating vector. 
Thus, the fundamental reason making the state learning with a single Hamiltonian possible is that the measurement basis does not align with the Hamiltonian. 
We can further ask whether it is possible to fully recover the unknown state $\rho$ by taking single-shot $Z$-basis measurements at different quench times $t$.

Operationally speaking, after quench evolution with $e^{-iHt}$, we take a single-shot measurement, and the state collapses to $\ket{b}$.
Then, the classical representation, $\hat{\sigma} = e^{iHt}\ketbra{b}{b}e^{-iHt}$, contains information of $\rho$.
As quantum mechanics is fundamentally linear, the average of the dataset $\{\hat{\sigma}_j = e^{iHt_j}\ketbra{b_j}{b_j}e^{-iHt_j}\}_{j}$ is related to $\rho$ through a linear map
\begin{equation}
\begin{aligned}
\mathcal{M}_{H}(\rho)=\mathbb{E}_{t,b}\left[e^{iHt}\ketbra{b}{b}e^{-iHt}\right].
\end{aligned}
\end{equation} 
If the map $\mathcal{M}_H$ is invertible, $\hat{\rho}=\mathcal{M}_H^{-1}(e^{iHt}\ketbra{b}{b}e^{-iHt})$ becomes an unbiased estimator of $\rho$. Then, one can use the classical dataset to predict arbitrary state properties, which means that the state learning with a single Hamiltonian is tomography-complete. Specifically, given $\{\hat{\rho}_j\}_{j=1}^K$, one can construct estimators such as $\frac{1}{K}\sum_j\Tr(O\hat{\rho}_j)$ and $\frac{1}{K(K-1)}\sum_{j\neq k}\Tr[O(\hat{\rho}_j\otimes\hat{\rho}_k)]$ to unbiasedly predict linear and nonlinear properties.
This logic aligns with the classical shadow tomography \cite{Huang2020predicting}. We thus name this measurement scheme as Hamiltonian shadow and the map $\mathcal{M}_H(\cdot)$ as Hamiltonian shadow map.

\comments{
The Hamiltonian shadow map is uniquely determined by its Choi matrix.
Substituting Born's rule and the spectral decomposition $e^{-iHt}=V_H \Lambda_t V_H^{\dagger}$ with $(\Lambda_t)_{k,l}=e^{-iE_kt}\delta_{k,l}$, we can rewrite the Hamiltonian shadow map as
\begin{equation}
\begin{aligned}
\mathcal{M}_{H}(\rho)=V_H \mathcal{N}\left( V_H^{\dagger} \rho V_{H}\right) V_H^{\dagger}.
\end{aligned}
\end{equation}
From tensor network representation shown in Fig.~\ref{fig:dev_overview}(b), the Choi matrix of $\mathcal{N}$ is $\mathbb{E}_{t}(\Lambda_t ^{\otimes 2}X_2 \bar{\Lambda}_t^{\otimes 2})$
with $X_2=\sum_bV_H^{\dagger\otimes 2}\ketbra{bb}{bb}V_H^{\otimes 2}$.
In general, $\mathcal{N}$ has a complicated form and is hard to invert.
However, if eigen-energies of $H$ and evolution time $t$ contain enough randomness to make $\Lambda_t$ a random diagonal unitary, the Choi matrix reduces to $\Phi_2^{\mathrm{D}}(X_2)=\sum_{i,j}(X_2)_{ij,ij}\ket{ij}\bra{ij}+\sum_{i\neq j}(X_2)_{ij,ji}\ket{ij}\bra{ji}$, where $\Phi_2^{\mathrm{D}}(\cdot)$ denotes the second-order integral over random diagonal unitaries.
Most elements cancel out due to the integral, and the left ones can be summarized using a much smaller matrix, the post-processing matrix $X_H$.}

Normally speaking, inverting a linear map is a tough task.
While, we realize that when taking the spectral decomposition, $e^{-iHt}=V_H\Lambda_t V_H^\dagger$, one gets a random diagonal unitary $\Lambda_t$  when $t$ can be randomly selected \cite{RDU}.
Therefore, assisted with the theory of random diagonal unitaries \cite{nakata2013diagonal}, we can largely reduce the complexity of inverting the Hamiltonian shadow map and arrive at:

\begin{theorem}\label{theorem:unbiased_estimator}
Let the target state $\rho$ evolve using $e^{-iHt}$ with a fixed Hamiltonian $H$ and random $t$, then measured in the computational basis, and the measurement result is $\ket{b}$. 
Suppose the spectral decomposition of $e^{-iHt}$ is $V_H\Lambda_t V_H^\dagger$, where the ensemble $\{\Lambda_t\}_t$ is a random diagonal unitary ensemble. Defining $V_H^{\mathrm{sq}}=\sum_{i,j}\abs{(V_H)_{i,j}}^2\ketbra{i}{j}$ with $\ket{i}$ and $\ket{j}$ being computational basis vectors, if matrix $X_H=(V_H^{\mathrm{sq}})^TV_H^{\mathrm{sq}}$ is invertible and all off-diagonal elements are nonzero, the following expression is the unbiased estimator of $\rho$
\begin{equation}\label{eq:estimator_theorem1}
\hat{\rho}=V_H\mathcal{N}^{-1}\left(V_H^{\dagger}\hat{\sigma}V_H\right)V_H^\dagger,
\end{equation}
where $\hat{\sigma}=e^{iHt}\ketbra{b}{b}e^{-iHt}$ and the action of map $\mathcal{N}^{-1}$ is
\begin{equation}\label{eq:N_inverse}
\mathcal{N}^{-1}(\sigma)=\sum_{i,j}(X_H^{-1})_{ij}\sigma_{j,j}\ketbra{i}{i}+\sum_{i\neq j}(X_H)_{i,j}^{-1}\sigma_{i,j}\ketbra{i}{j}.
\end{equation}
\end{theorem}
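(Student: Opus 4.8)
The plan is to establish unbiasedness by showing that the map $\hat\sigma\mapsto V_H\mathcal{N}^{-1}(V_H^\dagger\hat\sigma V_H)V_H^\dagger$ is exactly the inverse of the Hamiltonian shadow map $\mathcal{M}_H$. Because the expectation and all the maps involved are linear, $\mathbb{E}[\hat\rho]=V_H\mathcal{N}^{-1}\big(V_H^\dagger\,\mathbb{E}[\hat\sigma]\,V_H\big)V_H^\dagger=V_H\mathcal{N}^{-1}\big(V_H^\dagger\mathcal{M}_H(\rho)V_H\big)V_H^\dagger$, so it suffices to prove two facts: (i) that $\mathcal{M}_H$ factorizes as $\mathcal{M}_H(\rho)=V_H\,\mathcal{N}\!\big(V_H^\dagger\rho V_H\big)\,V_H^\dagger$ for an explicit map $\mathcal{N}$, and (ii) that the map written in Eq.~\eqref{eq:N_inverse} satisfies $\mathcal{N}^{-1}\circ\mathcal{N}=\mathrm{id}$. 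Together these give $\mathbb{E}[\hat\rho]=\rho$.

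First I would compute $\mathcal{N}$ explicitly. Substituting $e^{-iHt}=V_H\Lambda_t V_H^\dagger$ into $\mathcal{M}_H$, inserting Born's rule $p(b)=\bra{b}e^{-iHt}\rho e^{iHt}\ket{b}$, and conjugating by $V_H$ moves the problem into the eigenbasis, where only the diagonal phases in $\Lambda_t$ and the fixed vectors $V_H^\dagger\ket{b}$ remain. Writing $\tilde\rho=V_H^\dagger\rho V_H$, $v_{bk}=(V_H)_{bk}$, and $\lambda_k=(\Lambda_t)_{kk}$, the $(m,n)$ entry of $\mathcal{N}(\tilde\rho)$ becomes $\sum_b\sum_{k,l}v_{bk}\bar v_{bl}\bar v_{bm}v_{bn}\,\tilde\rho_{kl}\,\mathbb{E}_t[\lambda_k\bar\lambda_l\bar\lambda_m\lambda_n]$. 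The decisive step is evaluating the fourth-order phase moment: since $\{\Lambda_t\}_t$ reproduces the second moment of a random diagonal unitary ensemble, the phases effectively behave as independent uniform variables, and the moment collapses to the pairing sum $\delta_{kl}\delta_{mn}+\delta_{km}\delta_{ln}-\delta_{klmn}$, where the last term removes the double count when all four indices coincide.

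Next I would read off the block structure of $\mathcal{N}$ from this pairing sum. Introducing $p_{bm}=\abs{v_{bm}}^2=(V_H^{\mathrm{sq}})_{bm}$, the term $\delta_{km}\delta_{ln}$ contributes $(\sum_b p_{bm}p_{bn})\tilde\rho_{mn}=(X_H)_{mn}\tilde\rho_{mn}$ to every entry, while the term $\delta_{kl}\delta_{mn}$ contributes $\sum_k(\sum_b p_{bm}p_{bk})\tilde\rho_{kk}=\sum_k(X_H)_{mk}\tilde\rho_{kk}$ only on the diagonal; crucially, on the diagonal the correction $-\delta_{klmn}$ exactly cancels the $m=n$ part of the second term. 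Hence $\mathcal{N}$ acts as multiplication of the diagonal vector by the matrix $X_H$ and as entrywise (Hadamard) multiplication of the off-diagonal entries by $X_H$. These two blocks invert independently: the diagonal block by the matrix inverse $X_H^{-1}$ (which exists because $X_H$ is invertible) and each off-diagonal entry by the scalar reciprocal $1/(X_H)_{mn}=(X_H)_{mn}^{-1}$ (which exists because every off-diagonal element is nonzero). This reproduces Eq.~\eqref{eq:N_inverse} exactly, with the two hypotheses of the theorem matching the two distinct notions of inversion.

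I expect the main obstacle to be the moment computation together with the careful diagonal/off-diagonal bookkeeping---in particular, recognizing the cancellation of the self-overlap term on the diagonal and not conflating the matrix inverse $X_H^{-1}$ appearing on the diagonal with the entrywise reciprocal $(X_H)_{mn}^{-1}$ appearing off the diagonal. A secondary point requiring care is justifying that $\{\Lambda_t\}_t$ genuinely reproduces the second moment of a diagonal unitary $2$-design, so that the pairing sum is valid; this is where the assumed randomness of $t$ and the non-degeneracy of the eigen-energies enter.
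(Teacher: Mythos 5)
Your proposal is correct and follows essentially the same route as the paper's proof in Appendix~\ref{sec:shadow_map}: both reduce $\mathcal{M}_H$ to $V_H\mathcal{N}(V_H^\dagger\rho V_H)V_H^\dagger$, evaluate the fourth-order phase moment via the second-order integral over random diagonal unitaries (your pairing sum $\delta_{kl}\delta_{mn}+\delta_{km}\delta_{ln}-\delta_{klmn}$ is exactly Eq.~\eqref{eq:diagonal_second_twirling}), identify the diagonal block as matrix multiplication by $X_H=(V_H^{\mathrm{sq}})^TV_H^{\mathrm{sq}}$ and the off-diagonal block as entrywise multiplication, and invert each block separately. The only difference is presentational --- you work in explicit index notation where the paper uses the tensor-network/Choi-matrix formalism $\Phi_2^{\mathrm{D}}(X_2)$ --- and your bookkeeping of the diagonal cancellation and of the two distinct inverses is accurate.
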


In the literature, randomized measurements using Hamiltonian evolution either employ multiple chaotic Hamiltonians \cite{elben2018quench,hu2022hamiltonian} or large ancillary systems \cite{tran2023arbitrary,mcginley2023shadow} to induce randomness. 
Conversely, we capitalize on the inherent randomness in eigen-energies of a single Hamiltonian $H$.
This is more practical as implementing the coherent dynamics $e^{-iHt}$ governed by an intrinsic Hamiltonian $H$ is readily achievable for many analog quantum simulators.
\comments{Note that directly treating $e^{-iHt}$ as a random Haar unitary and performing the data post-processing of the original global classical shadow will lead to biased estimations.
For example, in Fig.~\ref{fig:dev_overview}(c), we show the fidelity estimation with two data post-processing methods, Hamiltonian shadow and global shadow, using the same measurement dataset collected with a single Hamiltonian quench evolution. It is clear that the estimation of Hamiltonian shadow approaches the real value while the global shadow does not.}

When dealing with local observables, such as local correlation functions, it becomes advantageous to employ a local version of the Hamiltonian shadow. Assume that the entire system can be segmented into multiple localized patches, with techniques such as atoms reconfiguration in Rydberg atom arrays. 
Then the whole system evolution is $\bigotimes_{p=1}^N e^{-iH_pt}$ with independent $H_p$. The unbiased estimator of $\rho$ can be constructed as
\begin{equation}
\hat{\rho}=\bigotimes_{p=1}^N V_{H_p}\mathcal{N}^{-1}_p(V_{H_p}^\dagger\hat{\sigma}_p V_{H_p})V_{H_p}^\dagger,
\end{equation}
where $\hat{\sigma}_p=e^{iH_pt}\ketbra{b_p}{b_p}e^{-iH_pt}$ and $\mathcal{N}_p^{-1}$ is defined in the same way using $X_{H_p}$. The local Hamiltonian shadow reduces classical computational resources by only dealing with local-patch Hamiltonians in the data post-processing. Besides, the experiment sample complexity is independent of the total system size when measuring local observables.

\textbf{Performance guarantee.} The performance of the Hamiltonian shadow protocol largely depends on the Hamiltonian $H$, including its eigenstates and eigen-energies. According to Eq.~\eqref{eq:N_inverse}, the Hamiltonian shadow is tomography-complete if the post-processing matrix $X_H$, which is determined solely by eigenstates, is invertible, and all off-diagonal elements are non-zero. One common situation where the Hamiltonian shadow becomes tomography-incomplete is when some eigenstates align with the measurement basis, making $X_H$ a block-diagonal matrix. Yet, most Hamiltonians without fine-tuning in general satisfy the requirements for $X_H$ \cite{tomography_incomplete}.
The requirement for eigen-energies originates from the assumption that $\Lambda_t$ approximates random diagonal unitary. We show that it can be summarized with the non-resonance condition \cite{Reimann2008resonance,degeneracy}, in which no eigen-energy pairs satisfy $E_{a_1}+E_{a_2}=E_{b_1}+E_{b_2}$ for $(a_1,a_2)\neq(b_1,b_2)$ and $(a_1,a_2)\neq(b_2,b_1)$. 
Similarly, this condition can generally be satisfied by a quench dynamics Hamiltonian without certain global symmetries \cite{Ueda2020ETH}. Considering the practical scenario where the evolution time is limited, one may further require eigen-energies of $H$ to have large level spacing.

When requirements for eigenstates and eigen-energies are all satisfied, one can efficiently predict many physical properties of the state simultaneously.

\begin{theorem}
To estimate expectation values of $M$ arbitrary observables $\{O_i\}_{i=1}^M$ to $\epsilon$ accuracy with the Hamiltonian shadow protocol, the sample complexity is upper bounded by $K=\mathcal{O}\left(\max_i\norm{O_i}_{\mathrm{HShadow}}^2\log(M)/\epsilon^2\right)$, where $\norm{O}_{\mathrm{HShadow}}$ is the Hamiltonian shadow norm determined by $V_H$ and $O$. 
\end{theorem}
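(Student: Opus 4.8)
The plan is to follow the median-of-means strategy used for the classical shadow \cite{Huang2020predicting}, reducing the global sample-complexity claim to a single-shot variance bound that defines the Hamiltonian shadow norm. First I would fix one observable $O$ (replacing it by its traceless part to suppress the trivial identity contribution) and invoke Theorem~\ref{theorem:unbiased_estimator}, which guarantees $\mathbb{E}_{t,b}[\hat{\rho}]=\rho$, to conclude that $\hat{o}=\Tr(O\hat{\rho})$ is an unbiased single-shot estimator of $\Tr(O\rho)$. The whole problem then rests on controlling its fluctuations: since $\mathrm{Var}[\hat{o}]=\mathbb{E}_{t,b}[\hat{o}^2]-\Tr(O\rho)^2\le\mathbb{E}_{t,b}[\hat{o}^2]$, it suffices to bound the single-shot second moment.

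I would then define the Hamiltonian shadow norm as the state-independent worst case of this second moment, $\norm{O}_{\mathrm{HShadow}}^2=\max_{\rho}\mathbb{E}_{t,b}[\Tr(O\hat{\rho})^2]$, the maximum running over valid input states so that the resulting bound holds uniformly. The core computation is to evaluate this quantity explicitly in terms of $V_H$. Moving into the eigenbasis via $\hat{\rho}=V_H\mathcal{N}^{-1}(V_H^\dagger\hat{\sigma}V_H)V_H^\dagger$ and pushing $O$ through the adjoint of $\mathcal{N}^{-1}$, whose action on the diagonal and off-diagonal sectors is fixed by the $X_H^{-1}$ coefficients of Eq.~\eqref{eq:N_inverse}, I would expand $\Tr(O\hat{\rho})^2$ and average over the random time $t$. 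Because $\{\Lambda_t\}_t$ is a random diagonal unitary ensemble, the time average is a second-order moment over random diagonal unitaries, so most cross terms vanish and the survivors collapse onto index contractions weighted by $X_H$ and $X_H^{-1}$. Carrying out the Born-rule sum over the outcome $b$ and maximizing over $\rho$ then yields a closed expression depending only on $V_H$ and $O$, which is the definition of $\norm{O}_{\mathrm{HShadow}}$.

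With the variance controlled, the remaining argument is the standard median-of-means boosting and is essentially verbatim from the classical shadow analysis. I would split $K=K_1K_2$ snapshots into $K_2$ batches of size $K_1$, take the empirical mean within each batch, and output the median of the batch means. Chebyshev's inequality makes each batch mean $\epsilon$-accurate with constant probability once $K_1=\mathcal{O}(\norm{O}_{\mathrm{HShadow}}^2/\epsilon^2)$, and a Chernoff bound on the median drives the failure probability down exponentially in $K_2$. Choosing $K_2=\mathcal{O}(\log M)$ and applying a union bound over the $M$ observables gives simultaneous $\epsilon$-accuracy, with $\max_i\norm{O_i}_{\mathrm{HShadow}}^2$ appearing because $K_1$ must accommodate the worst observable; multiplying $K_1K_2$ produces the stated bound.

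The main obstacle is the explicit second-moment computation of the second step: unlike the Clifford or Haar case, the inverse map $\mathcal{N}^{-1}$ is neither unital nor self-adjoint, and its $X_H^{-1}$-weighted action treats the diagonal and off-diagonal sectors differently, so the random-diagonal-unitary average does not reduce to a clean swap-operator contraction. Tracking which index patterns survive the second-order integral and reorganizing them into a form depending only on $V_H$, and hence defining $\norm{O}_{\mathrm{HShadow}}$ cleanly, is the delicate part; the maximization over $\rho$ that removes the state dependence must also be handled carefully so that the norm stays well-defined and finite precisely when $X_H$ meets the invertibility and nonzero-off-diagonal conditions of Theorem~\ref{theorem:unbiased_estimator}.
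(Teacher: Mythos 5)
Your plan matches the paper's proof essentially step for step: the paper likewise defines $\norm{O}_{\mathrm{HShadow}}^2$ as the worst case over input states of the single-shot second moment $\mathbb{E}_{t,b}[\hat{o}^2]$, bounds the variance by it, and then invokes the standard median-of-means argument of Huang et al.\ to get the $\log(M)/\epsilon^2$ scaling. Two small corrections to your discussion of the "delicate part": the required average is a \emph{third}-order (not second-order) moment over random diagonal unitaries, because the Born probability $\bra{b}e^{-iHt}\rho e^{iHt}\ket{b}$ carries its own factor of $\Lambda_t$ in addition to the two from $\hat{o}^2$ (the paper evaluates $\Phi_3^{\mathrm{D}}(X_3)$ and notes a diagonal-unitary $3$-design suffices); and $\mathcal{N}^{-1}$ \emph{is} self-adjoint under the Hilbert--Schmidt inner product since $X_H$ is real symmetric, which the paper uses to move $\mathcal{N}^{-1}$ onto $O$, so that obstacle is milder than you anticipate.
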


Details of the Hamiltonian shadow norm can be found in Appendix \ref{sec:variance}. Despite its complicated form, which obscures the connection between sample complexities and structures of Hamiltonians, we find that
\begin{equation}
f(O,V_H)=\sum_{i\neq j}\frac{1}{(X_H)_{i,j}}\abs{\bra{i}V_H^\dagger OV_H\ket{j}}^2
\label{eq:approximation}
\end{equation} 
can approximate $\norm{O}_{\mathrm{HShadow}}^2$ well. This formula can be easily extended to nonlinear observables and helps us build intuition between the sample complexity and the Hamiltonian. For example, when eigenstates of $H$ are close to the computational basis, $X_H$ approximates a diagonal matrix and $f(O,V_H)$ increases for having terms of $\frac{1}{(X_H)_{i,j}}$. 
It also agrees with the physical intuition since off-diagonal elements of $\rho$ are hard to probe when the unitary evolution $e^{-iHt}$ almost commutes with measurement basis. 

\begin{figure}
\centering
\includegraphics[width=0.95\linewidth]{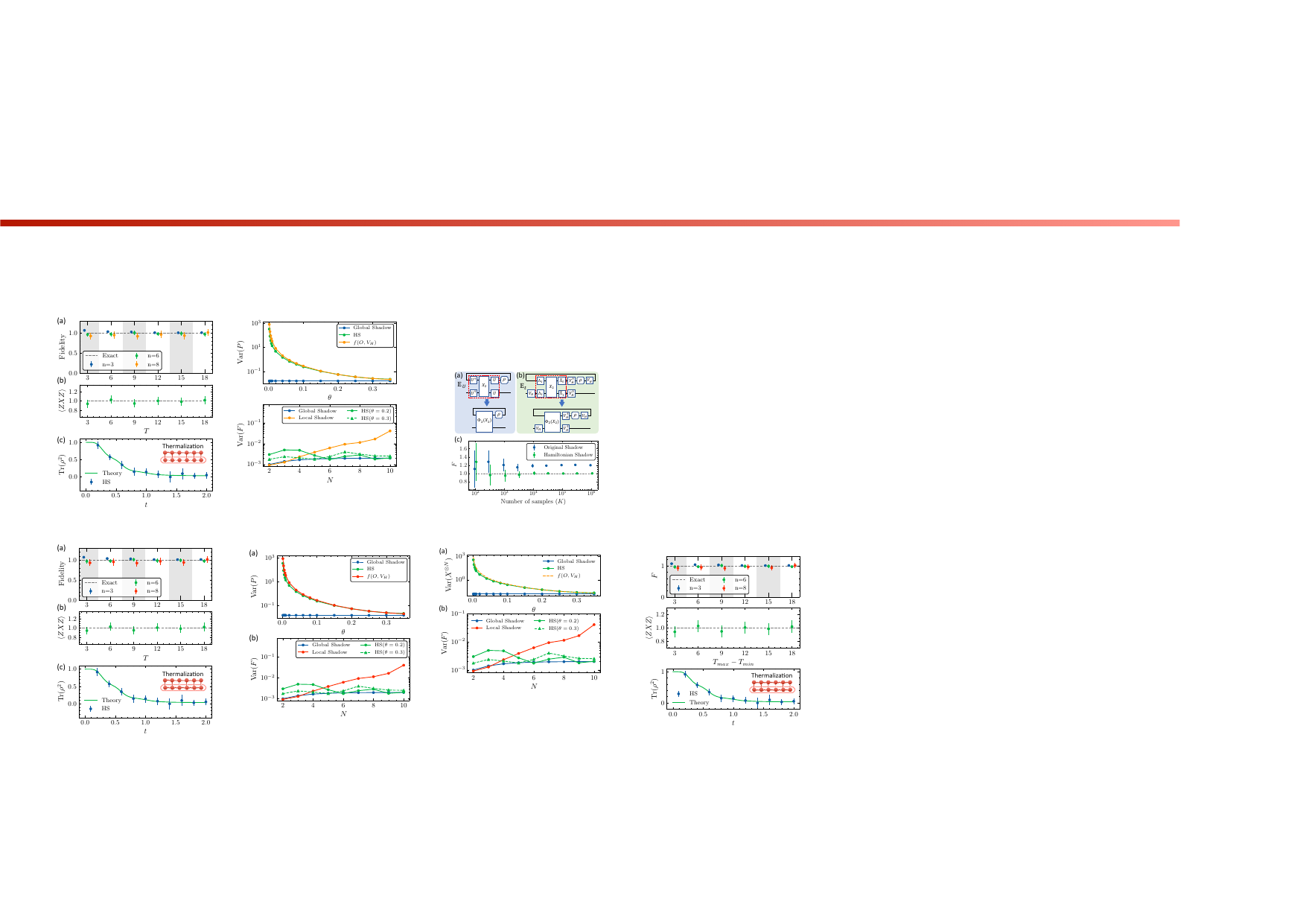}
\caption{Variance scaling. The evolution unitary is chosen to be $V_H\Lambda_t V_H^\dagger=e^{iP\theta}\Lambda_t e^{-iP\theta}$, where $\Lambda_t$ is a random diagonal unitary, $P$ is a fixed random Hermitian matrix and $\theta$ is a control parameter that determines the difference between measurement basis and eigen-basis of evolution unitary. The number of measurements is $K=1000$, and target states are set to be GHZ states \cite{median}. (a) The variance in estimating $P=X^{\otimes N}$, where $X$ is the Pauli-$X$ operator and $N=4$. (b) The variance scaling with qubit number in estimating fidelity. }
\label{fig:variance}
\end{figure}

To further validate Eq.~\eqref{eq:approximation}, we choose a Hamiltonian with the diagonalization unitary $V_H=e^{iP\theta}$, where $P$ is a fixed Hermitian matrix which is randomly sampled and $\theta$ is a tunable parameter. When $\theta$ approaches zero, eigen-basis of this Hamiltonian approaches the measurement basis. While $\theta$ is large, the eigen-basis approaches a random basis. In Fig.~\ref{fig:variance}(a), we use a Pauli observable to show that $f(O,V_H)$ is indeed a good approximation for the real variance. 
Another critical observation is that the variance of Hamiltonian shadow gradually approaches the original classical shadow protocol when $\theta$ increases. This can also be observed in Fig.~\ref{fig:variance}(b), where variances of Hamiltonian shadow for estimating fidelity do not increase with qubit number, reproducing a key feature of classical shadow. These observations show that, with a single Hamiltonian, the Hamiltonian shadow generally has a similar ability for state learning compared with the protocol using a set of many independent random unitaries. 
The similarity between the Hamiltonian shadow and the original shadow can also be theoretically verified through some special Hamiltonians \cite{wang2023MUB,zhou2023MUB}, which is discussed in Appendix \ref{sec:case_study}.

\textbf{Applications.} 
The Hamiltonian shadow is applicable to many analog systems, and here we choose Rydberg atom arrays with global laser pulse control as our platform. In the following, we will show how the Hamiltonian shadow accomplishes three important tasks for quantum many-body physics: (1) measuring quantum fidelity, (2) measuring local stabilizer observable of a topological state, and (3) measuring purity dynamics in quantum thermalization, all of which are thought to be hard to measure with only global controls.
Using the ground state $\ket{g}$ and the Rydberg state $\ket{r}$ of the neutral atom as a qubit, we model our system with Hamiltonian
\begin{equation}
H=\frac{\Omega}{2}\sum_j\left(e^{i\phi}\ketbra{g_j}{r_j}+h.c.\right)-\Delta \sum_j \hat{n}_j+\sum_{j<k}V_{jk}\hat{n}_j\hat{n}_k,
\end{equation}
where $\Omega=1.1\times 2\pi \ \mathrm{MHz}$, $\phi=2.1$, and $\Delta=1.2\times 2\pi  \ \mathrm{MHz}$ denote the Rabi frequency, laser phase, and detuning of the global driving laser field on atoms that couples the ground and Rydberg state. We will denote $\ket{g}$ as $\ket{0}$, and $\ket{r}$ as $\ket{1}$ afterwards. $V_{jk}=C/|\textbf{x}_j-\textbf{x}_k|^6$ describes the van der Waals interaction between two atoms, where $\textbf{x}$ is the position vector of an atom and the strength $C= 2\pi \times 862690 \ \mathrm{MHz}\cdot\mu m^6$ depends on the Rydberg state \footnote{All parameters of the Rydberg Hamiltonian are publically available from the website of \href{https://queracomputing.github.io/Bloqade.jl/dev/hamiltonians/}{Bloqade}}. With global control pulses, positions of atoms are randomly set to introduce randomness in eigen-energies of $H$, which is feasible for Rydberg atoms platforms \cite{Bluvstein2022moving}.
Specifically, we arrange atoms in a line and set the position of the $j$-th atom to be $j\times D+\delta d_j$, where $D=8.781 \mu m$ is near the blockade radius, and $\delta d_j$ is uniformly and independently sampled from $[-0.488,0.488](\mu m)$. Positions of atoms are kept the same throughout the process of Hamiltonian shadow. This setup is visualized in Fig.~\ref{fig:rydberg} (a). Ideally, the evolution time $t$ should be randomly chosen from a long time window $\Delta t=t_{\mathrm{max}}-t_{\mathrm{min}}$ such that $\Lambda_t=\text{diag}(e^{-iE_1 t},e^{-i E_2 t},\cdots)$ approximates random diagonal unitary. However, considering errors and decoherence in current platforms, $t$ can only be sampled in a limited time window. \comments{Thus, the distribution of $\Lambda_t$ may deviate from the ideal random diagonal unitary and introduce errors to the Hamiltonian shadow. }In the following numerical demonstrations, we will focus on the performance of the Hamiltonian shadow under a limited time window.

\begin{figure}[htbp]
    \centering
    \includegraphics[width=0.95\linewidth]{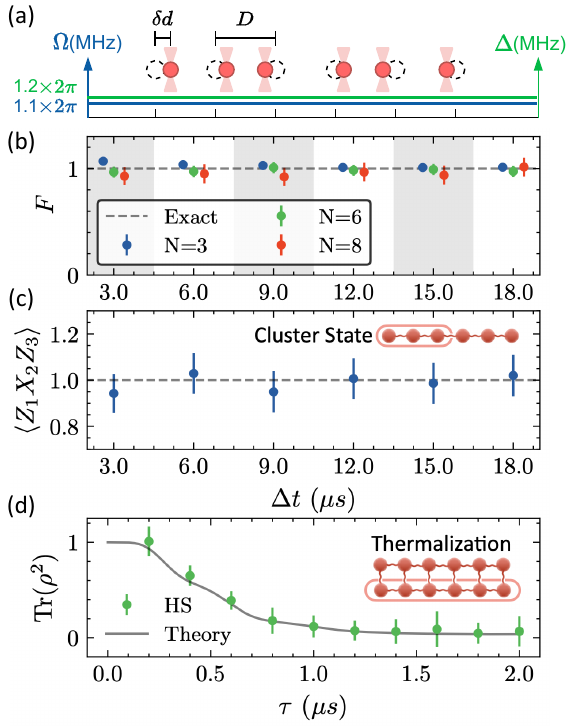}
    \caption{Performances of Hamiltonian shadow with Rydberg atom arrays. (a) The Rydberg atom array setup in Hamiltonian shadow. All atoms are controlled by global pulses ($\Omega,\Delta$) and arranged in a line with random positions. (b) Fidelity estimation with different time windows $\Delta t$ and qubit numbers $N$, where initial states are set to GHZ states. (c) Stabilizer expectation with different time windows, where the initial state is set to be a six-qubit cluster state. (d) Quantum thermalization of a 12-atom system, observed using purity of the 6-atom subsystem. All error bars are estimated with $K=10000$ samples (3 standard deviations) and $t_{\mathrm{min}}=2\mu s$. }
    \label{fig:rydberg}
\end{figure}

For measuring quantum fidelity, we initialize Greenberger–Horne–Zeilinger (GHZ) states with different qubit numbers, $\ket{\psi}=(\ket{010\cdots}+\ket{101\cdots})/2$, which can be prepared with Rydberg atom arrays using blockade mechanism \cite{omran2019catstate}. For measuring local observables, we prepare the underlying state to be the cluster state, a symmetry-protected topological state with $\langle Z_{i-1} X_{i} Z_{i+1}\rangle=1$ \cite{verresen2017spt}. To estimate local stabilizer, $\langle Z_{1} X_{2} Z_{3}\rangle$, we use the local-patch Hamiltonian shadow with a three-qubit Hamiltonian acting on target qubits. Both quantities are challenging to measure with conventional means using global pulse control. In Fig.~\ref{fig:rydberg} (b) and (c), we plot fidelity and stabilizer expectations as functions of the time window $\Delta t$. It shows that the Hamiltonian shadow protocol gives precise estimations for both quantities with realistic time windows. Moreover, in Appendix \ref{subsec:limit_time}, we show that the bias caused by the limited time window can be fixed by modifying the Hamiltonian shadow map. 

Besides linear properties of the quantum state, the Hamiltonian shadow can also predict nonlinear properties given only single-copy access to the quantum state each time. We demonstrate this by observing quantum thermalization under coherent dynamics. Twelve atoms are arranged in a two-leg ladder lattice with initial state $\ket{0}^{\otimes 6}\otimes\ket{1}^{\otimes 6}$, where the upper leg is $\ket{1}^{\otimes 6}$ and lower leg is $\ket{0}^{\otimes 6}$. The atoms are equally separated by $10.733\mu m$ and evolve under the Rydberg Hamiltonian for time $\tau$. With time evolution, entanglement between atoms on the upper and lower legs first grows and then saturates at a maximum value, which is reflected by the purities of atoms on the lower leg for different $\tau$. To estimate purities, after each evolution, we remove atoms on the upper leg and rearrange atoms on the lower leg to the selected random positions as shown in Fig.\ref{fig:rydberg} (a) and perform the Hamiltonian shadow protocol. The atom-moving has already been demonstrated in recent experiments \cite{Bluvstein2022moving}. To obtain the precise estimation as shown in Fig.\ref{fig:rydberg} (d), we set the maximal evolution time as $t_{\mathrm{max}}=60\mu s$. While compared to fidelity estimation in Fig.~\ref{fig:rydberg}(b) with the same qubit number, it indicates that a precise estimation of the nonlinear property may require a longer evolution time.

\textbf{Discussion.}
In this work, we show that it is possible to extract arbitrary quantum state properties on analog quantum simulators with a single Hamiltonian quench dynamics following simple computational basis measurements. 
The new protocol is experimentally appealing since one doesn't need to calibrate many Hamiltonians or use large ancillary systems. At the same time, the sample complexity is similar to the classical shadow protocol, which utilizes many independent Clifford gates. Moreover, the requirements for Hamiltonian are minimal, making it applicable to many different physical platforms.  
In the future, we will explore potential applications of Hamiltonian shadow in many different physical platforms, such as quantum gas microscopes with optical lattices \cite{Anton2017FH,christie2019string,thomas2023fermionpairing,lenard2023FQH}.
Besides, it is also desirable to find special Hamiltonians that can help to reduce the sample complexity and evolution time of the Hamiltonian shadow protocol.

\textbf{Acknowledgement} The authors would like to thank Satoya Imai, Otfried Gühne, Richard Kueng, Chao Yin, Soonwon Choi, Christian Kokail, Yi-Zhuang You, Susanne Yelin, Liang Jiang, Katherine Van Kirk, Yanting Teng, Jonathan Kunjummen, Stefan Ostermann, You Zhou, Fuchuan Wei, and Hongyi Wang, Yi Tan for insightful discussions and suggestions. Zhenhuan Liu is supported by the National Natural Science Foundation of China (Grant No. 12174216). H.Y.H. is grateful for the support from the Harvard Quantum Initiative Fellowship. The code is open-sourced on \href{https://github.com/Haozh20/Hamiltonian-Shadow-Repo.git}{GitHub}.

%

\appendix

\onecolumngrid
\newpage

\section{Preliminary}\label{sec:pre}
\subsection{Tensor Network Representation}
We introduce the tensor network representation in this section, which is an important tool for our derivation. In tensor network, a matrix is represented using a box with legs, shown in Fig.~\ref{fig:tensor}(a), where the left and right legs stand for the row and column indices, respectively. Different pairs of legs stand for different subsystems of the Hilbert space. Vectors are represented by the triangles with only one-side legs, as shown in Fig.~\ref{fig:tensor}(b) and (c). The connection of legs stands for the index contraction, such as the matrix production $AB$ shown in Fig.~\ref{fig:tensor}(d) and the trace function shown in Fig.~\ref{fig:tensor}(e). If two tensors are listed without connection of legs, like Fig.~\ref{fig:tensor}(f), this is the tensor product of two matrices $A\otimes B$.

\begin{figure}[hbp]
    \centering
    \includegraphics[width=0.7\textwidth]{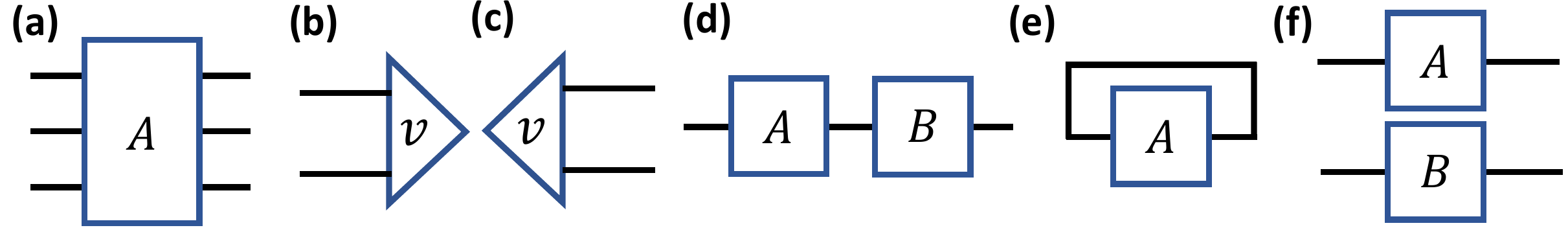}
    \caption{Tensor network representation. (a) A tripartite matrix. (b) A bipartite vector $\ket{v}$. (c) $\bra{v}$. (d) Matrix multiplication. (e) Trace. (f) Tensor product. }
    \label{fig:tensor}
\end{figure}

Tensor network is good at representing permutation operators, as shown in Fig.~\ref{fig:permutation}. A straight line is used to represent the identity operator $\mathbb{I}=\sum_i\ketbra{i}{i}$. A pair of cross lines shown in Fig.~\ref{fig:permutation}(b) represents the SWAP operator $S=\sum_{i,j}\ketbra{ij}{ji}$, which is a second order permutation operator. By adding more lines, we can represent higher-order permutation operators, like shown in Fig.~\ref{fig:permutation}(c). Tensor network representations can be used to simplify some calculations. Fig.~\ref{fig:permutation}(d) shows a graphical proof of the SWAP trick $\Tr(S\rho^{\otimes 2})=\Tr(\rho^2)$. 

\begin{figure}[htbp]
    \centering
    \includegraphics[width=0.7\textwidth]{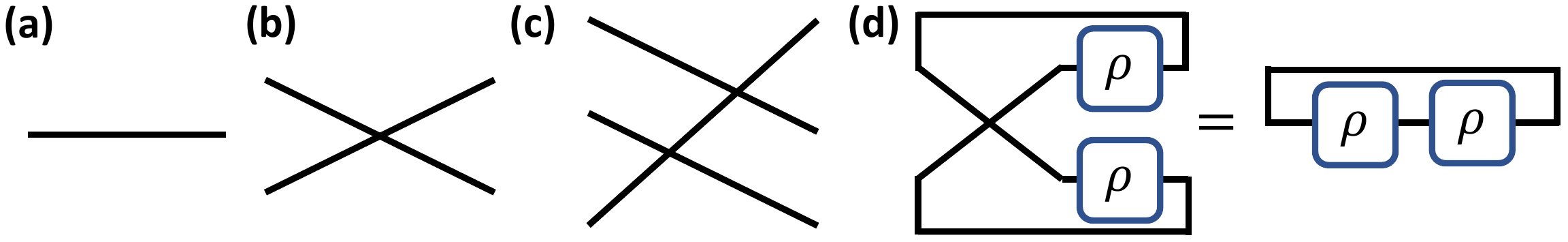}
    \caption{Tensor network representation of permutation operators. (a) The identity operator. (b) The SWAP operator. (c) The third-order cyclic permutation operator. (d) The graphical proof of the SWAP trick.}
    \label{fig:permutation}
\end{figure}

To facilitate our calculation of random diagonal unitaries, we need to introduce a new kind of labels. We use Fig.~\ref{fig:GHZ}(a) to represent the matrix of $\sum_i\ketbra{i,\cdots,i}{i,\cdots,i}$, which can be used to contract multiple indices. With this new tool, we can represent matrices which are generated by only preserving some specific elements of original matrices. Take a bipartite matrix, $M=\sum_{i,j,k,l}M_{ij,kl}\ketbra{ij}{kl}$, as an example, the tensor in Fig.~\ref{fig:GHZ}(b) gives a diagonal matrix $\sum_{i,j}M_{ij,ij}\ketbra{ij}{ij}$. The tensor in Fig.~\ref{fig:GHZ}(c) represents the matrix of $\sum_{i,j}X_{ij,ji}\ketbra{ij}{ji}$, whose nonzero elements lie in same positions of the SWAP operator $S=\sum_{i,j}\ketbra{ij}{ji}$. The tensor in Fig.~\ref{fig:GHZ}(d) is an EPR-like matrix $\sum_{i,j}X_{ii,jj}\ketbra{ii}{jj}$, whose nonzero elements are in positions of maximally entangled state. Fig.~\ref{fig:GHZ}(e) is constructed further by the EPR-like matrix, $\sum_{i}X_{ii,ii}\ketbra{ii}{ii}$.

\begin{figure}[hbp]
    \centering
    \includegraphics[width=0.6\textwidth]{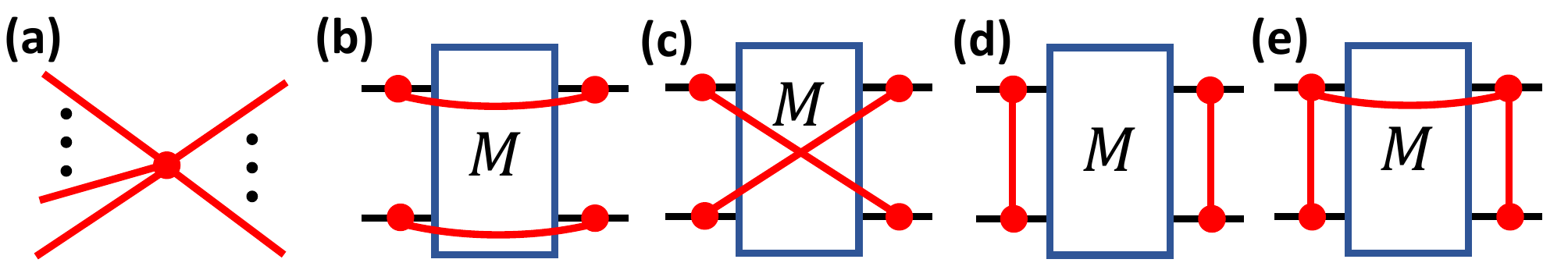}
    \caption{(a) A GHZ-like tensor, $\sum_i\ketbra{i,\cdots,i}{i,\cdots,i}$. (b)-(e), Matrices constructed using the GHZ-like tensor and a bipartite matrix $M=\sum_{i,j}M_{ij,ij}\ketbra{ij}{ij}$. }
    \label{fig:GHZ}
\end{figure}

Another tool that is important for our derivation is the Choi representation of a linear map. Shown in Fig~\ref{fig:choi}, the output of a linear map $C(\rho)$ can be represented using a higher-dimensional matrix $\mathcal{C}$ contracting with $\rho$. The matrix $\mathcal{C}$ is the Choi matrix of map $C(\cdot)$. In this work, we also refer to $\mathcal{C}^{T_1}$ as the Choi matrix of $C(\cdot)$.

\begin{figure}[htbp]
    \centering    \includegraphics[width=0.4\textwidth]{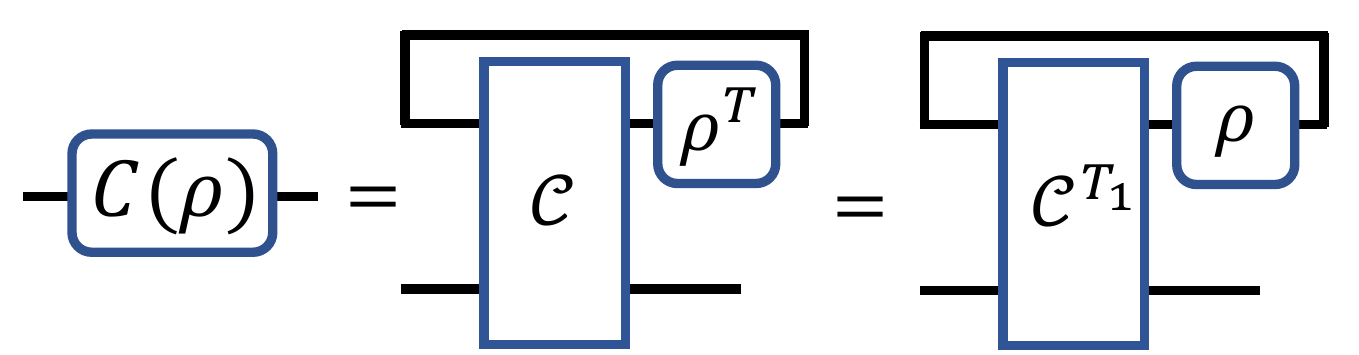}
    \caption{Tensor representation of Choi matrix.}
    \label{fig:choi}
\end{figure}

\subsection{Random Unitaries}
Haar measure random unitary is a uniform distribution in the unitary space, which satisfies 
\begin{equation}
\int_{U\sim\mathrm{Haar}}g(UV)dU=\int_{U\sim\mathrm{Haar}}g(U)dU
\end{equation}
for any unitary $V$ and function $g(\cdot)$. The Haar-measure random unitary is important for classical shadow due to its relation with permutation operators. According to Schur-Weyl duality, we define the $k$-th order twirling map
\begin{equation}
\Phi_k(M)=\int_{U\sim\mathrm{Haar}}U^{\otimes k}MU^{\dagger\otimes k}dU=\sum_{\pi,\sigma\in\mathcal{S}_k}\mathrm{Wg}_{\pi,\sigma}\Tr(\hat{W}_\pi M)\hat{W}_\sigma,
\end{equation}
where $\mathrm{Wg}$ stands for the Weingarten matrix \cite{gu2013moments}, $\mathcal{S}_k$ is the $k$-th order permutation group, $\pi$ and $\sigma$ stand for two elements in $\mathcal{S}_k$, and $\hat{W}_\pi$ and $\hat{W}_\sigma$ are corresponding permutation operators. In this work, we will denote the integral over Clifford group to be $\Phi_k^{\mathrm{C}}(\cdot)$ and random diagonal unitaries to be $\Phi_k^{\mathrm{D}}(\cdot)$. Specifically, the second order twirling function is
\begin{equation}\label{eq:second_twirling}
\Phi_2(M)=\frac{1}{d^2-1}\left(\Tr(\mathbb{I}M)\mathbb{I}-\frac{1}{d}\Tr(SM)\mathbb{I}-\frac{1}{d}\Tr(\mathbb{I}M)S+\Tr(SM)S\right)
\end{equation}

Instead of integrating over the Haar measure random unitary, we can use the average over a set containing a finite number of unitaries to get the same twirling map
\begin{equation}
\Phi_{k^\prime}^{\mathcal{E}_k}(M)=\frac{1}{|\mathcal{E}_k|}\sum_{U\in\mathcal{E}_k}U^{\otimes k^\prime}M U^{\dagger\otimes k^\prime}=\Phi_{k^\prime}(M)
\end{equation}
for all $k^\prime\le k$ and matrix $M$. We refer to $\mathcal{E}_k$ as the unitary $k$-design. The Clifford group has been proved to be a unitary $3$-design \cite{zhu2017clifford}.

\subsection{Random Diagonal Unitaries}\label{subsec:RDU}
A $d$-dimensional random diagonal unitary is $\Lambda=\mathrm{diag}(e^{i\theta_1}, \cdots, e^{i\theta_d})$, where $\theta_1,\cdots,\theta_d$ are random phases uniformly and independently sampled in $[0,2\pi)$. We define the map of $k$-th order integral over random diagonal unitaries as
\begin{equation}
\Phi_k^{\mathrm{D}}(M)=\int_{\Lambda\sim\mathrm{RDU}}\Lambda^{\otimes k}M\overline{\Lambda}^{\otimes k}d\Lambda,
\end{equation}
where $\overline{\Lambda}$ is the complex conjugate of $\Lambda$. 
In this work, $\Phi_2^{\mathrm{D}}(\cdot)$ and $\Phi_3^{\mathrm{D}}(\cdot)$ will be frequently employed to compute the unbiased estimator and the sample complexity. 
The action of the second-order map is 
\begin{equation}\label{eq:diagonal_second_twirling}
\Phi_2^{\mathrm{D}}(M)=\sum_{i,j}M_{ij,ij}\ketbra{ij}{ij}+\sum_{i,j}M_{ij,ji}\ketbra{ij}{ji}-\sum_{i}M_{ii,ii}\ketbra{ii}{ii}.
\end{equation}
We can easily prove this equation using the definition of random diagonal matrix. The element of $\Lambda^{\otimes2}M\overline{\Lambda}^{\otimes 2}$ is 
\begin{equation}
\left(\Lambda^{\otimes2}M\overline{\Lambda}^{\otimes 2}\right)_{ij,kl}=\Lambda_{i,i}\Lambda_{j,j}\overline{\Lambda}_{k,k}\overline{\Lambda}_{l,l}M_{ij,kl},
\end{equation}
which survives from the integral if and only if $i=k$ and $j=l$, or $i=l$ and $j=k$. These elements keep unchanged from the integral while others become zero. This concludes the proof of the second order integral, and the proof for higher order ones are similar. According to Ref.~\cite{nechita2021RDU}, the second and third order integrals can be graphically represented using Fig.~\ref{fig:integral}(a) and (b).

\begin{figure}[htbp]
    \centering
    \includegraphics[width=0.8\textwidth]{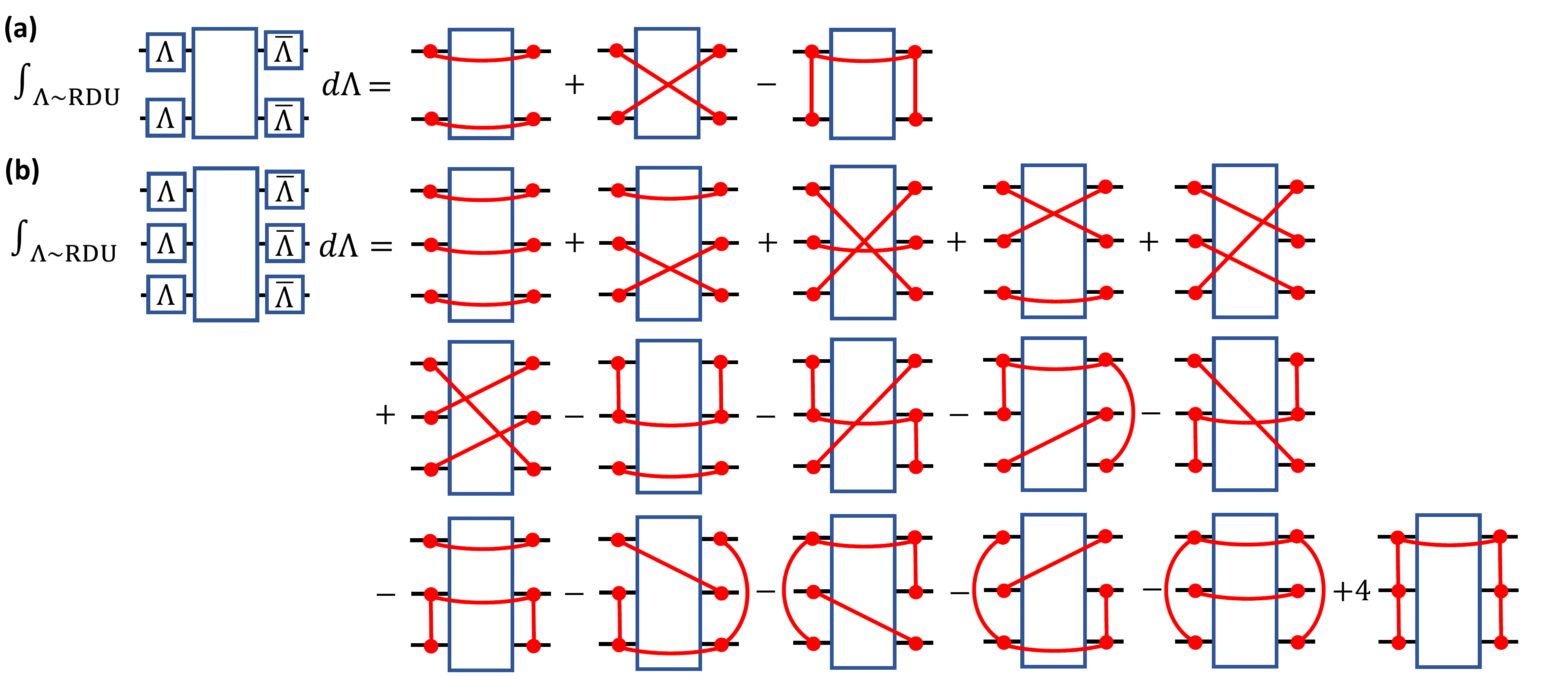}
    \caption{Integrals over random diagonal unitaries. (a) The second-order integral. (b) The third-order integral.}
    \label{fig:integral}
\end{figure}

Another important problem is that if the random diagonal unitary has degeneracy, how does the integral change? We consider a specific distribution of diagonal unitary $\mathcal{D}_1$. For all $\Lambda$ following the distribution of $\mathcal{D}_1$, 
\begin{equation}
\Lambda_{j,j}=\begin{cases}
e^{i\theta_j} & j\neq b\\
\Lambda_{a,a} & j=b
\end{cases},
\end{equation}
where $\theta_j$ is randomly and uniformly sampled from $[0,2\pi)$ and $a\neq b$. 
Following a similar proof of the second order integral over random diagonal unitaries, more elements can survive from the integral over $\mathcal{D}_1$,
\begin{equation}\label{eq:degeneracy}
\begin{aligned}
\int\limits_{\Lambda\sim\mathcal{D}_1}\Lambda^{\otimes2}M\overline{\Lambda}^{\otimes 2}d\Lambda&=\sum_{i,j}M_{ij,ij}\ketbra{ij}{ij}+\sum_{i,j}M_{ij,ji}\ketbra{ij}{ji}-\sum_{i}M_{ii,ii}\ketbra{ii}{ii}\\
&+\left(\sum_iM_{ai,bi}\ketbra{ai}{bi}+\sum_iM_{ai,ib}\ketbra{ai}{ib}+\sum_iM_{ia,bi}\ketbra{ia}{bi}+\sum_iM_{ia,ib}\ketbra{ia}{ib}+h.c.\right)\\
&-\left(M_{aa,ba}\ketbra{aa}{ba}+M_{aa,ab}\ketbra{aa}{ab}+M_{ba,bb}\ketbra{ba}{bb}+M_{ab,bb}\ketbra{ab}{bb}+h.c.\right)\\
&+(M_{aa,bb}\ketbra{aa}{bb}+M_{bb,aa}\ketbra{bb}{aa}).
\end{aligned}
\end{equation}
Besides, the second-order degeneracy can also affect the second-order integral. Consider a distribution of diagonal unitaries $\mathcal{D}_2$ without first-order degeneracy, whose element $\Lambda$ satisfies
\begin{equation}
\Lambda_{j,j}=\begin{cases}
e^{i\theta_j} & j\neq b_2\\
e^{i(\theta_{a_1}+\theta_{a_2}-\theta_{b_1})} & j=b_2
\end{cases},
\end{equation}
where $(a_1,a_2)\neq(b_1,b_2)$, $(a_1,a_2)\neq(b_2,b_1)$, $a_1\neq a_2$, and $b_1\neq b_2$. Similarly, every $\theta_j$ is sampled uniformly and independently from $[0,2\pi)$. It can be proved that
\begin{equation}\label{eq:second_degeneracy}
\begin{aligned}
&\int\limits_{\Lambda\sim\mathcal{D}_2}\Lambda^{\otimes2}M\overline{\Lambda}^{\otimes 2}d\Lambda=\sum_{i,j}M_{ij,ij}\ketbra{ij}{ij}+\sum_{i,j}M_{ij,ji}\ketbra{ij}{ji}-\sum_{i}M_{ii,ii}\ketbra{ii}{ii}\\
&+\left(M_{a_1a_2,b_1b_2}\ketbra{a_1a_2}{b_1b_2}+M_{a_2a_1,b_1b_2}\ketbra{a_2a_1}{b_1b_2}+M_{a_1a_2,b_2b_1}\ketbra{a_1a_2}{b_2b_1}+M_{a_2a_1,b_2b_1}\ketbra{a_2a_1}{b_2b_1}+h.c.\right).
\end{aligned}
\end{equation}
Moreover, if the third-order degeneracy exists, the second-order integral will not be affected. Following the same logic, one can prove that the k-th-order integral only requires that there exists no degeneracy less than k-th order. This tells us that integrating over some other distribution of diagonal unitaries instead of random diagonal unitaries can also give $\Phi_k^{\mathrm{D}}(\cdot)$. We can thus introduce the concept of diagonal unitary design.

Similar as the Haar measure random unitary, a set of diagonal unitaries $\mathcal{E}_k^{\mathrm{D}}$ is said to be a diagonal unitary $k$ design if
\begin{equation}
\Phi_{k^\prime}^{\mathcal{E}^{\mathrm{D}}_k}(M)=\frac{1}{|\mathcal{E}^{\mathrm{D}}_k|}\sum_{\Lambda\in\mathcal{E}^{\mathrm{D}}_k}\Lambda^{\otimes k^\prime}M \overline{\Lambda}^{\dagger\otimes k^\prime}=\Phi_{k^\prime}^{\mathrm{D}}(M)
\end{equation}
for all $k^\prime\le k$ and matrix $M$. We can prove that:
\begin{proposition}
Given a set of diagonal unitaries $\mathcal{E}_k^{\mathrm{D}}$, whose element $\Lambda=\mathrm{diag}(e^{i\theta_1}, \cdots, e^{i\theta_d})$. If every $\theta$ is sampled uniformly and independently from the set of $\{0,\frac{2\pi}{k+1},\frac{4\pi}{k+1},\cdots,\frac{2k\pi}{k+1}\}$, $\mathcal{E}_k^{\mathrm{D}}$ is a diagonal unitary $k$-design.
\end{proposition}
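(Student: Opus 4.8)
The plan is to compute the averaged map entrywise and exploit the independence of the phases, reducing everything to a one-dimensional orthogonality relation for roots of unity. Fix $k'\le k$ and a matrix $M$. Writing multi-indices $\vec{i}=(i_1,\dots,i_{k'})$ and $\vec{j}=(j_1,\dots,j_{k'})$, the entry of $\Lambda^{\otimes k'}M\overline{\Lambda}^{\otimes k'}$ at position $(\vec{i},\vec{j})$ equals $M_{\vec{i},\vec{j}}$ multiplied by the phase $\exp\!\big(i\sum_{a}(\theta_{i_a}-\theta_{j_a})\big)$. Collecting terms by index value, I would rewrite this exponent as $\sum_m n_m\theta_m$, where $n_m=\#\{a : i_a=m\}-\#\{a : j_a=m\}$ is the net multiplicity of the value $m$. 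Because each $\theta_m$ is drawn independently, the average over $\mathcal{E}_k^{\mathrm{D}}$ factorizes, giving $\big(\Phi_{k'}^{\mathcal{E}_k^{\mathrm{D}}}(M)\big)_{\vec{i},\vec{j}}=M_{\vec{i},\vec{j}}\prod_m \mathbb{E}\big[e^{in_m\theta_m}\big]$.

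The next step is to evaluate each single-variable average. With $\omega=e^{2\pi i/(k+1)}$ and $\theta_m$ uniform on $\{0,\tfrac{2\pi}{k+1},\dots,\tfrac{2k\pi}{k+1}\}$, the geometric-sum identity gives $\mathbb{E}[e^{in_m\theta_m}]=\tfrac{1}{k+1}\sum_{l=0}^{k}\omega^{n_m l}$, which equals $1$ if $(k+1)\mid n_m$ and $0$ otherwise. Hence an entry survives the discrete average precisely when $n_m\equiv 0 \pmod{k+1}$ for every $m$.

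The crucial observation — and the only place the hypothesis $k'\le k$ enters — is a range bound on $n_m$. Since there are exactly $k'$ row indices and $k'$ column indices, both $\#\{a : i_a=m\}$ and $\#\{a : j_a=m\}$ lie in $\{0,\dots,k'\}$, so $|n_m|\le k'\le k<k+1$. The only integer multiple of $k+1$ in the interval $[-k,k]$ is $0$, so $n_m\equiv 0\pmod{k+1}$ forces $n_m=0$. Comparing with the continuous case from Sec.~\ref{subsec:RDU}, where $\tfrac{1}{2\pi}\int_0^{2\pi}e^{in\theta}d\theta=\delta_{n,0}$ selects exactly the entries with all $n_m=0$, I conclude that the discrete and continuous averages act identically on every entry of every $M$, i.e.\ $\Phi_{k'}^{\mathcal{E}_k^{\mathrm{D}}}=\Phi_{k'}^{\mathrm{D}}$ for all $k'\le k$, which is the definition of a diagonal unitary $k$-design.

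I expect no substantial obstacle: the whole argument is the bookkeeping that rewrites the entrywise phase as $\sum_m n_m\theta_m$ together with the elementary bound $|n_m|\le k'$. The one point to state with care is why $k+1$ equally spaced phases are exactly the right number — the discrete average annihilates a net multiplicity only modulo $k+1$, and it is precisely the constraint $|n_m|\le k'\le k<k+1$, coming from having at most $k$ tensor factors, that upgrades this to genuine vanishing $n_m=0$ and so reproduces the continuous integral. With fewer than $k+1$ points a multiplicity of magnitude $k$ could survive spuriously and corrupt the surviving-entry pattern, lowering the design order; this is the reason the statement ties the allowed phases to the $(k+1)$-th roots of unity.
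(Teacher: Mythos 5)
Your proof is correct and follows essentially the same route as the paper's: reduce the averaged map to entrywise phase factors, factorize over the independent $\theta_m$, and use the vanishing of geometric sums of $(k+1)$-th roots of unity for exponents bounded in magnitude by $k'\le k$. Your version merely makes the paper's bookkeeping explicit by introducing the net multiplicities $n_m$ and the bound $|n_m|\le k'$, which tightens a step the paper states somewhat loosely.
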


\begin{proof}
To prove that $\mathcal{E}_k^{\mathrm{D}}$ is a diagonal unitary $k$-design, we need to prove that the equality $\mathbb{E}_{\Lambda\in\mathcal{E}_k^{\mathrm{D}}}\Lambda^{\otimes k^\prime}M\overline{\Lambda}^{\otimes k^\prime}=\mathbb{E}_{\Lambda\sim\mathrm{RDU}}\Lambda^{\otimes k^\prime}M\overline{\Lambda}^{\otimes k^\prime}$ holds for all $k^\prime\le k$ and $M$. It is easy to find that every nonzero element of the right hand side matrix equals to the element in left hand side matrix at the same positions as random phases cancel out. Therefore, we only need to prove that every element of the left matrix in the same position with the zero element of the right one equals to zero. By expanding the left matrix, every element of it can be written as the multiplication of polynomials of $\Lambda_{i,i}$ and $\overline{\Lambda}_{j,j}$ and elements of $M$. Orders of this polynomials are less than $k^\prime$. Thus, if we could prove the equality of
\begin{equation}
\sum_{\theta\in\{0,\frac{2\pi}{k+1},\cdots,\frac{2k\pi}{k+1}\}}(e^{i\theta})^{k^\prime}=0
\end{equation}
for all $k^\prime\le k$, we can conclude our proof. Instituting the sum formula for proportional sequence of numbers, we have
\begin{equation}
\sum_{\theta\in\{0,\frac{2\pi}{k+1},\cdots,\frac{2k\pi}{k+1}\}}e^{ik^\prime\theta}=\frac{1-e^{i\frac{2k^\prime(k+1)}{k+1}\pi}}{1-e^{i\frac{2k^\prime}{k+1}\pi}}=0.
\end{equation}
\end{proof}

To quantify the distance between a given distribution of diagonal unitaries $\mathcal{D}$ and the set of ideal random diagonal unitaries, we introduce the concept of frame potential for diagonal unitaries. Defining $Q_k=\int_{U\sim\mathcal{D}}(U^{\dagger})^{\otimes k}\otimes U^{\otimes k} dU-\int_{U\sim\mathrm{RDU}}(U^{\dagger})^{\otimes k}\otimes U^{\otimes k} dU$, we have
\begin{equation}
0\le\Tr(Q_kQ_k^\dagger)=\int_{U\sim\mathcal{D}}\int_{V\sim\mathcal{D}}\abs{\Tr(U^\dagger V)}^{2k}dUdV
-2\int_{U\sim\mathcal{D}}\int_{V\sim\mathrm{RDU}}\abs{\Tr(U^\dagger V)}^{2k}dUdV
+\int_{U\sim\mathrm{RDU}}\int_{V\sim\mathrm{RDU}}\abs{\Tr(U^\dagger V)}^{2k}dUdV.
\end{equation}
Inserting another integral into the second term, we have
\begin{equation}
\begin{aligned}
\int_{U\sim\mathcal{D}}\int_{V\sim\mathrm{RDU}}\abs{\Tr(U^\dagger V)}^{2k}dUdV
=&\int_{U\sim\mathcal{D}}\int_{V\sim\mathrm{RDU}}\int_{W\sim\mathrm{RDU}}\abs{\Tr(U^\dagger VW^\dagger)}^{2k}dUdVdW\\
=&\int_{U\sim\mathcal{D}}\int_{V\sim\mathrm{RDU}}\int_{W\sim\mathrm{RDU}}\abs{\Tr\left[(UW)^\dagger V\right]}^{2k}dUdVdW\\
=&\int_{U\sim\mathrm{RDU}}\int_{V\sim\mathrm{RDU}}\abs{\Tr(U^\dagger V)}^{2k}dUdV,
\end{aligned}
\end{equation}
where we adopt properties of random diagonal unitaries that a random diagonal unitary can be written as the product of two independent random diagonal unitaries, and $UW$ is a random diagonal unitary if $U$ is a fixed unitary and $W$ is a random diagonal unitary. Therefore, we can define the $k$-th order frame potential of a distribution of diagonal unitary as $F^{(k)}_\mathcal{D}=\int_{U\sim\mathcal{D}}\int_{V\sim\mathcal{D}}\abs{\Tr(U^\dagger V)}^{2k}dUdV$ and show that
\begin{equation}
0\le\Tr(Q_tQ_t^\dagger)=\int_{U\sim\mathcal{D}}\int_{V\sim\mathcal{D}}\abs{\Tr(U^\dagger V)}^{2k}dUdV-\int_{U\sim\mathrm{RDU}}\int_{V\sim\mathrm{RDU}}\abs{\Tr(U^\dagger V)}^{2k}dUdV=F^{(k)}_\mathcal{D}-F^{(k)}_{\mathrm{RDU}}.
\end{equation}
This means that the $k$-th order frame potential of a diagonal unitary distribution is always larger than the frame potential of the distribution of ideal random diagonal unitaries. Only when the distribution is a diagonal unitary $k$-design, these two frame potentials are equivalent. Thus, we can use $F_\mathcal{D}^{(k)}$ to show the difference between $\mathcal{D}$ and diagonal unitary $k$-design.

Another thing we need to calculate is the value of frame potential for random diagonal unitaries. Denoting a random diagonal unitary by $U=\mathrm{diag}\left(z_1,\cdots,z_d\right)$, where these terms satisfy $\mathbb{E}\left[z_i^n (z_j^*)^m\right]=\delta_{n,m}\delta_{i,j}$, we have
\begin{equation}\label{eq:frame_potential_RDU}
\begin{aligned}
F_{\mathrm{RDU}}^{(k)}=&\int_{U\sim\mathrm{RDU}}\int_{V\sim\mathrm{RDU}}\abs{\Tr(U^\dagger V)}^{2k}dUdV\\
=&\int_{U\sim\mathrm{RDU}}\abs{\Tr(U)}^{2k}dU\\
=&\mathbb{E}\left[(z_1+\cdots+z_d)^k(z_1^*+\cdots+z_d^*)^k\right]\\
=&\mathbb{E}\left(\sum_{n_1,\cdots,n_d\in\mathbb{N}, n_1+\cdots+n_d=k}\binom{k}{n_1,\cdots,n_d}z_1^{n_1}\cdots z_d^{n_d}\right)\left(\sum_{n_1^\prime,\cdots,n_d^\prime\in\mathbb{N}, n_1^\prime+\cdots+n_d^\prime=k}\binom{k}{n_1^\prime,\cdots,n_d^\prime}(z_1^{n_1^\prime}\cdots z_d^{n_d^\prime})^*\right)\\
=&\sum_{n_1,\cdots,n_d\in\mathbb{N}, n_1+\cdots+n_d=k}\binom{k}{n_1,\cdots,n_d}^2.
\end{aligned}
\end{equation}
In this work, we mainly focus on $k=1,2,3$, the frame potentials can be calculated to be
\begin{equation}
F_{\mathrm{RDU}}^{(1)}=d \ , \ F_{\mathrm{RDU}}^{(2)}=2d^2-d \ , \ F_{\mathrm{RDU}}^{(3)}=6d^3-9d^2+4d.
\end{equation}

\section{Brief Introduction to Classical Shadow}\label{sec:classical_shadow}
In this section, we give a brief and graphical review of the derivation of original shadow map and its variance. This review will be helpful for our construction of Hamiltonian shadow protocol.

Here we focus on the original shadow protocol with global random Clifford gates, which can be easily extended to local version. One first acts a global random unitary $U$, which is sampled from the Haar measure random unitary or a Clifford group, on the target quantum state $\rho$. Then one performs the projective measurement on the evolved state to get the measurement result $\ket{b}$. The shadow map is defined as 
\begin{equation}
\mathcal{M}(\rho)=\mathbb{E}_{U,b}\left(U^\dagger\ketbra{b}{b}U\right)=\mathbb{E}_U\left(\sum_b\bra{b}U\rho U^\dagger\ket{b}U^\dagger\ketbra{b}{b}U\right).
\end{equation}
Using tensor network, we can graphically represent the shadow map as
\begin{equation}
\begin{aligned}
\mathcal{M}(\rho)=&\mathbb{E}_U\left(\sum_b\begin{tabular}{c}
\includegraphics[scale=0.25]{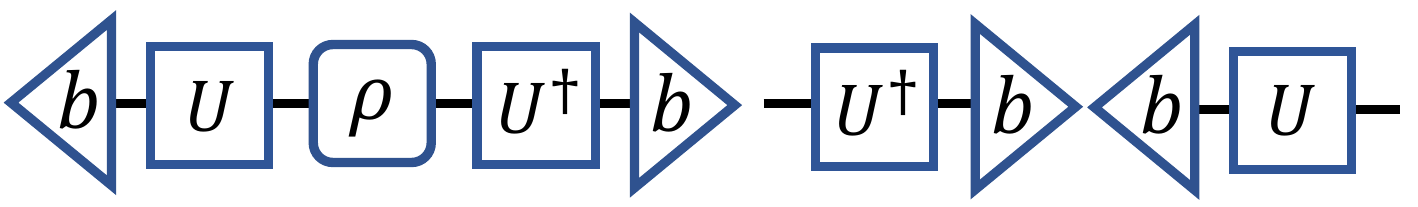}\end{tabular}\right)\\
=&\mathbb{E}_U\left(\sum_b\begin{tabular}{c}
\includegraphics[scale=0.25]{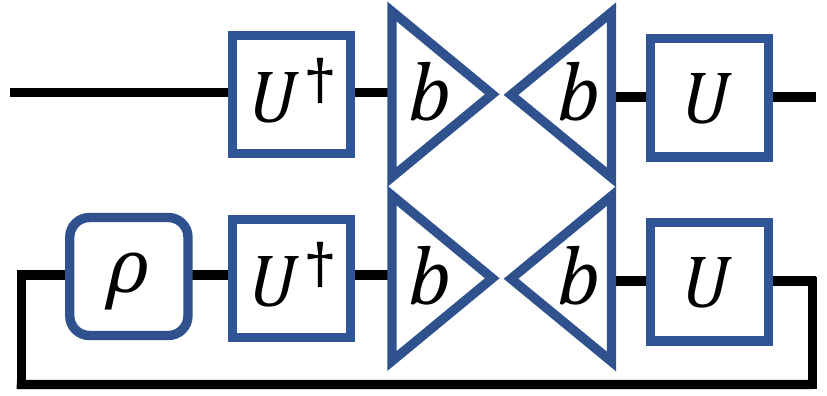}\end{tabular}\right)
=\mathbb{E}_U\left(\begin{tabular}{c}
\includegraphics[scale=0.25]{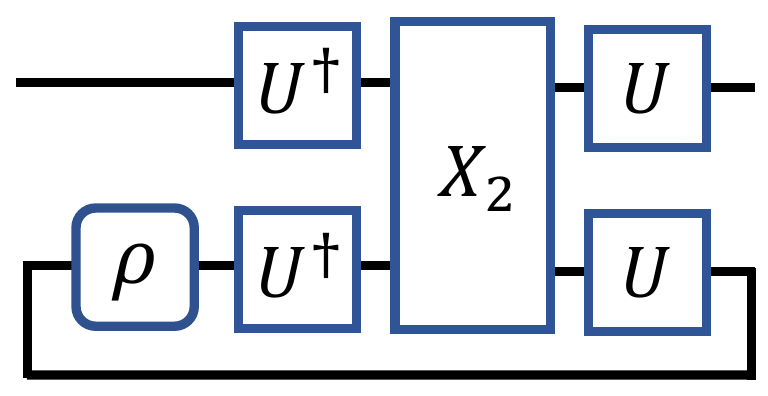}\end{tabular}\right),
\end{aligned}
\end{equation}
where $X_2=\sum_b\ketbra{bb}{bb}$. Combining the second order twirling function in Eq.~\eqref{eq:second_twirling} with the property of $X_2$, $\Tr(\mathbb{I}X_2)=\Tr(SX_2)=d$, we have
\begin{equation}
\mathcal{M}(\rho)=\begin{tabular}{c}
\includegraphics[scale=0.25]{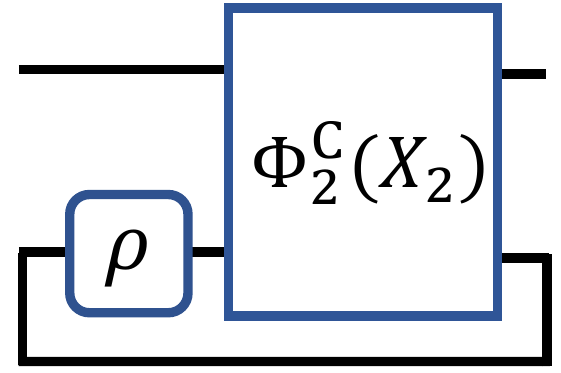}\end{tabular}=\frac{d-1}{d^2-1}\left(\begin{tabular}{c}
\includegraphics[scale=0.25]{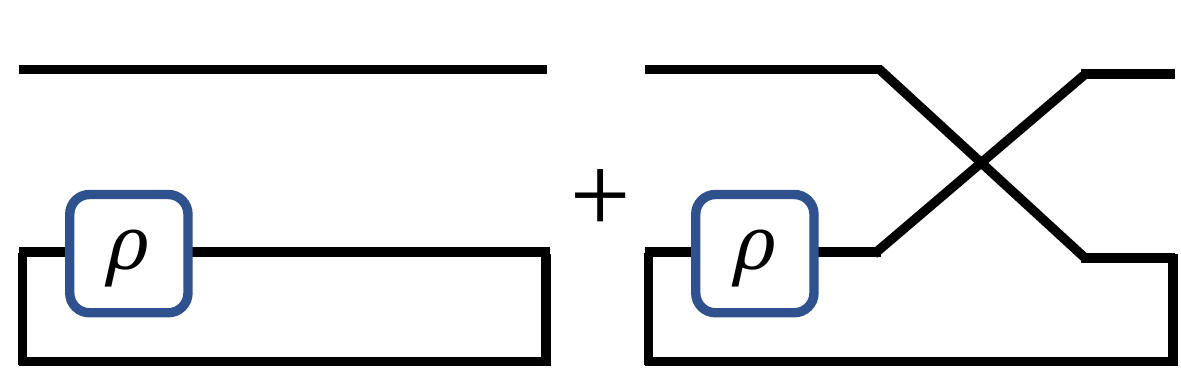}\end{tabular}\right)=\frac{1}{d+1}\left(\mathbb{I}+\rho\right).
\end{equation}
Therefore, the unbiased estimator of $\rho$ constructed by global shadow is
\begin{equation}
\hat{\rho}=\mathcal{M}^{-1}\left(U^\dagger\ketbra{b}{b}U\right)=(d+1)U^\dagger\ketbra{b}{b}U-\mathbb{I},
\end{equation}
and the unbiased estimator of $\Tr(O\rho)$ is $\hat{o}=\Tr(O\hat{\rho})=(d+1)\bra{b}UOU^\dagger\ket{b}-\Tr(O)$. The unbiasedness of $\hat{\rho}$ is shown by $\mathbb{E}_{U,b}\left(\hat{\rho}\right)=\mathcal{M}^{-1}\left[\mathbb{E}_{U,b}\left(U^\dagger\ketbra{b}{b}U\right)\right]=\mathcal{M}^{-1}[\mathcal{M}(\rho)]=\rho$.

Now we spend some time to see how to derive the variance of shadow estimator $\hat{o}$. By definition,
\begin{equation}\label{eq:shadow_variance_def}
\begin{aligned}
\mathrm{Var}(\hat{o})=&\mathbb{E}_{U,b}\left[(d+1)\bra{b}UOU^\dagger\ket{b}-\Tr(O)\right]^2-\Tr(O\rho)^2\\
=&\mathbb{E}_{U,b}\left[(d+1)^2\bra{b}UOU^\dagger\ket{b}^2\right]-2\Tr(O)\left[\Tr(O\rho)+\Tr(O)\right]+\Tr(O)^2-\Tr(O\rho)^2\\
=&\mathbb{E}_{U,b}\left[(d+1)^2\bra{b}UOU^\dagger\ket{b}^2\right]-2\Tr(O)\Tr(O\rho)-\Tr(O)^2-\Tr(O\rho)^2.
\end{aligned}
\end{equation}
We can expand the first term as
\begin{equation}\label{eq:shadow_variance_diag}
\begin{aligned}
&\mathbb{E}_{U,b}\left[(d+1)^2\bra{b}UOU^\dagger\ket{b}^2\right]=(d+1)^2\mathbb{E}_U\left[\sum_b\bra{b}U\rho U^\dagger\ket{b}\bra{b}UOU^\dagger\ket{b}^2\right]\\
=&(d+1)^2\mathbb{E}_U\left(\sum_b\begin{tabular}{c}
\includegraphics[scale=0.25]{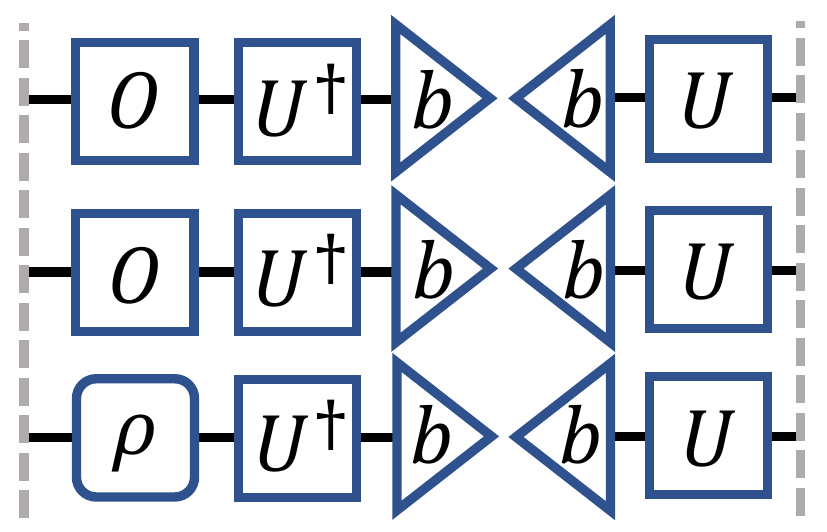}\end{tabular}\right)=(d+1)^2\begin{tabular}{c}
\includegraphics[scale=0.25]{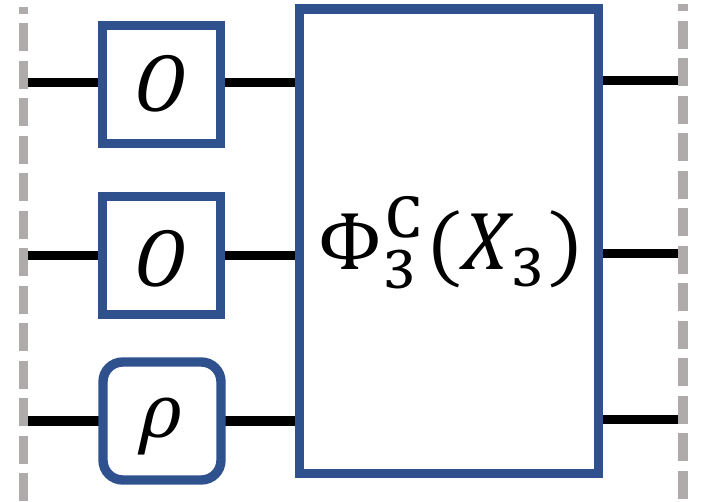}\end{tabular}
=\frac{d+1}{d+2}\sum_{\pi\in\mathcal{S}_3}\begin{tabular}{c}
\includegraphics[scale=0.25]{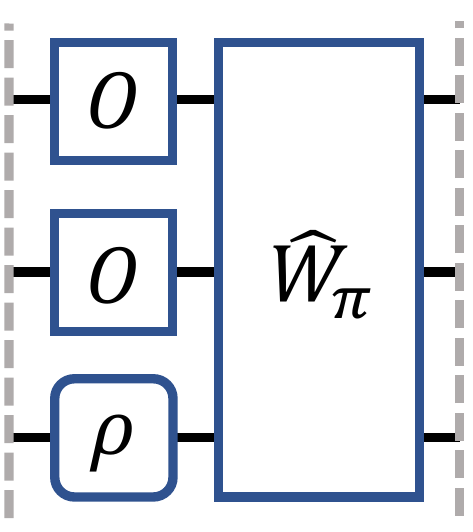}\end{tabular}\\
=&\frac{d+1}{d+2}\left[\Tr(O)^2+\Tr(O^2)+2\Tr(O\rho)\Tr(O)+2\Tr(O^2\rho)\right],
\end{aligned}
\end{equation}
where grey dashed lines represent the trace function, $X_3=\sum_b\ketbra{bbb}{bbb}$, $\mathcal{S}_3$ is the third-order permutation group, and $\Phi_3^{\mathrm{C}}(\cdot)$ is the third-order twirling map over global Clifford group whose explicit form can be found in Ref.~\cite{gu2013moments}. The last but one equal sign is due to the fact of $\tr(W_\pi X_3)=d$ for all $\pi\in\mathcal{S}_3$. Combining Eq.~\eqref{eq:shadow_variance_def} and Eq.~\eqref{eq:shadow_variance_diag}, we have
\begin{equation}
\mathrm{Var}(\hat{o})=\frac{d+1}{d+2}\left[\Tr(O^2)+2\Tr(O^2\rho)\right]-\frac{1}{d+2}\left[2\Tr(O)\Tr(O\rho)+\Tr(O)^2\right]-\Tr(O\rho)^2\le 3\Tr(O^2),
\end{equation}
where we use the relation of $\Tr(O^2\rho)\le\Tr(O^2)$. While, except for some special cases like $O=\rho=\ketbra{\psi}{\psi}$, $\Tr(O^2\rho)$ is normally much smaller than $\Tr(O^2)$. So, we obtain the key observation that the leading term of variance is
\begin{equation}
\Tr\left[\left(O^{\otimes 2}\otimes\rho\right)\left(S\otimes\mathbb{I}\right)\right]=\begin{tabular}{c}
\includegraphics[scale=0.25]{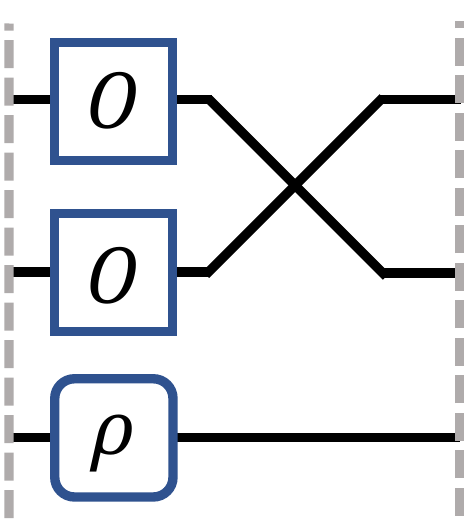}\end{tabular},
\end{equation}
which stands for the indices contraction between two observable matrices. This will provide us important intuition when we calculate the variance of our Hamiltonian shadow.

\section{Hamiltonian Shadow Map}\label{sec:shadow_map}

In this section, we will derive the unbiased Hamiltonian shadow estimators for cases of global and local Hamiltonian evolution, shown in Fig.~\ref{fig:Hamiltonian}. We will start from the global case and extend it to the local case.

\begin{figure}[htbp]
    \centering
    \includegraphics[width=0.7\textwidth]{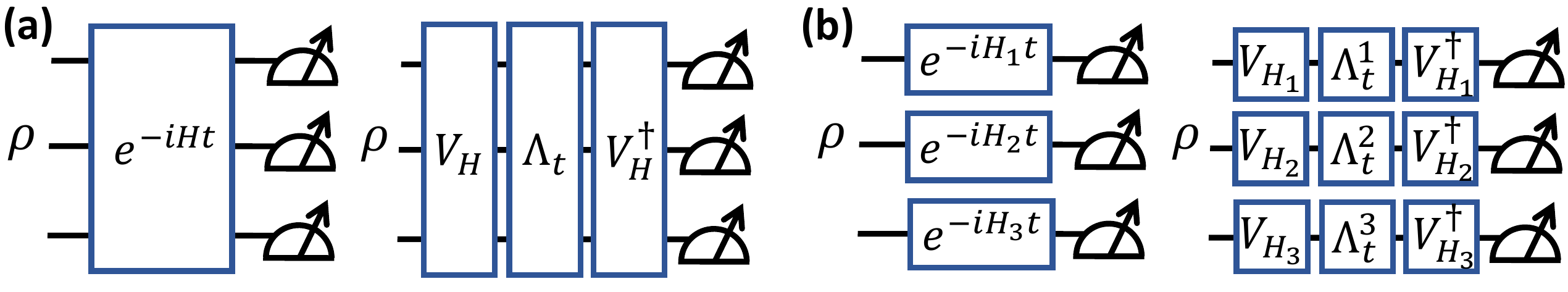}
    \caption{The Hamiltonian evolution with different times. (a) Global Hamiltonian. (b) Local Hamiltonian.}
    \label{fig:Hamiltonian}
\end{figure}

In a single round of experiment, the state $\rho$ will be evolved using a unitary $e^{-iHt}$ with a fixed Hamiltonian and a random $t$. This unitary can be decomposed as $V_H \Lambda_t V_H^\dagger$, where $V_H$ is a fixed unitary that is independent of $t$ and $\Lambda_t=\mathrm{diag}(e^{-iE_1t},\cdots,e^{-iE_dt})$. As shown in Fig.~\ref{fig:Hamiltonian}, after each experiment, suppose the measurement outcome is $b$, the shadow map is
\begin{equation}
\mathcal{M}_H(\rho)=\mathbb{E}_{b,t}\left(e^{iHt}\ketbra{b}{b}e^{-iHt}\right)=V_H\left[\mathbb{E}_{b,t}\left(\overline{\Lambda}_tV_H^\dagger\ketbra{b}{b}V_H\Lambda_t\right)\right]V_H^\dagger
\end{equation}
Substituting the Born's rule, we have
\begin{equation}
\begin{aligned}
\mathcal{M}_H(\rho)=&\mathbb{E}_t\left(\sum_b\bra{b}e^{-iHt}\rho e^{iHt}\ket{b}e^{iHt}\ketbra{b}{b}e^{-iHt}\right)\\
=&\mathbb{E}_t\left(\sum_b\begin{tabular}{c}
    \includegraphics[scale=0.25]{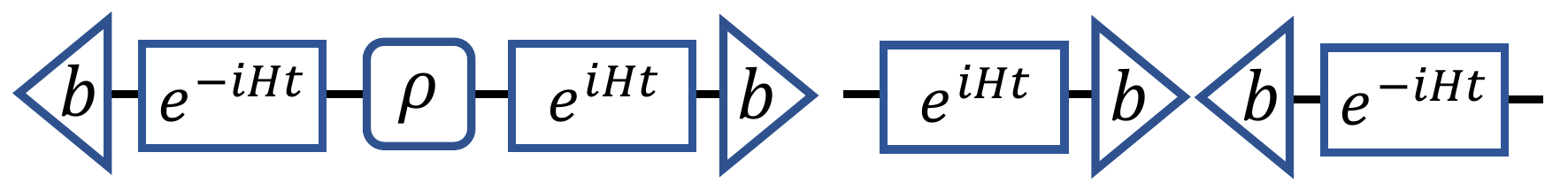}
\end{tabular}\right)\\
=&\mathbb{E}_t\left(\sum_b\begin{tabular}{c}
    \includegraphics[scale=0.25]{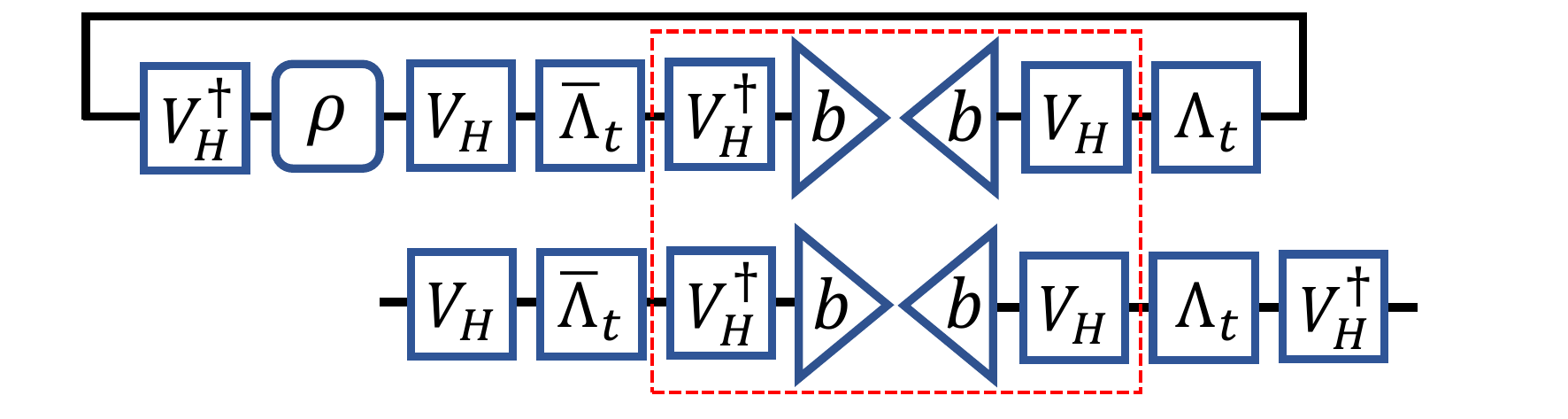}
\end{tabular}\right).
\end{aligned}
\end{equation}
By slightly abusing the indicators, we define $X_2=\sum_b(V_H^\dagger)^{\otimes 2}\ketbra{bb}{bb}V_H^{\otimes 2}$ hereafter. The specific form of Hamiltonian shadow map highly depends on the property of $\Lambda_t$. According to Eq.~\eqref{eq:degeneracy} and Eq.~\eqref{eq:second_degeneracy}, correlations among different eigenvalues of $H$, like the first and second-order degeneracy, and limited evolution time can significantly affect $\mathcal{M}_H$ and sometimes can even make it irreversible. We will discuss this in detail later. Assuming $\Lambda_t$ to be an ideal random diagonal unitaries, we have
\begin{equation}
\begin{aligned}
\mathcal{M}_H(\rho)=&\mathbb{E}_t\left(\begin{tabular}{c}
    \includegraphics[scale=0.25]{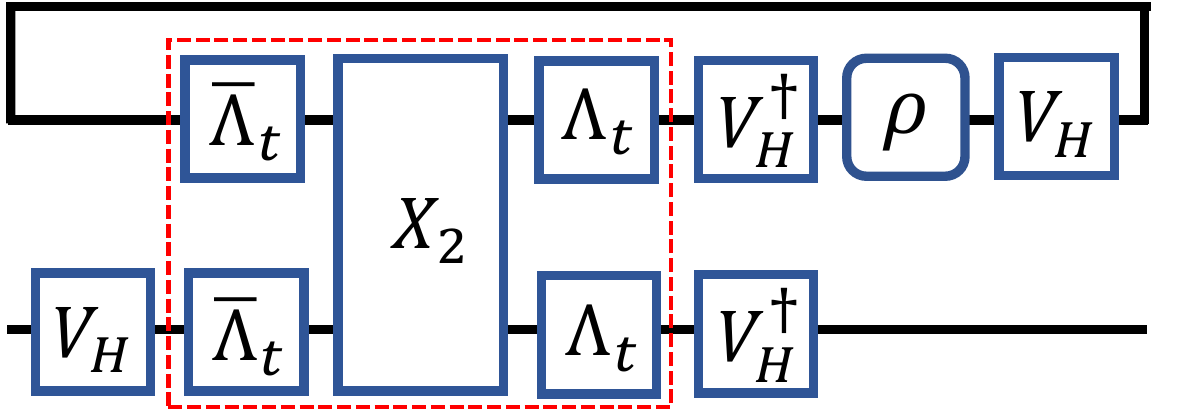}.
\end{tabular}\right)
=\begin{tabular}{c}
    \includegraphics[scale=0.25]{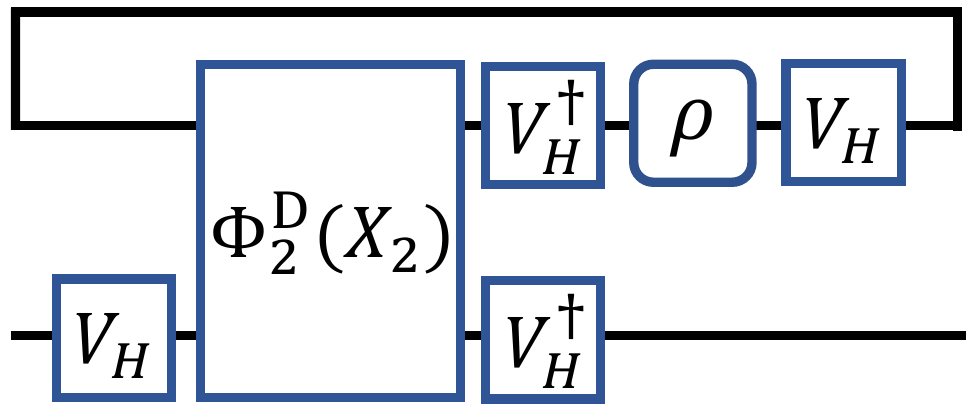}
\end{tabular}\\
=&\begin{tabular}{c}
    \includegraphics[scale=0.25]{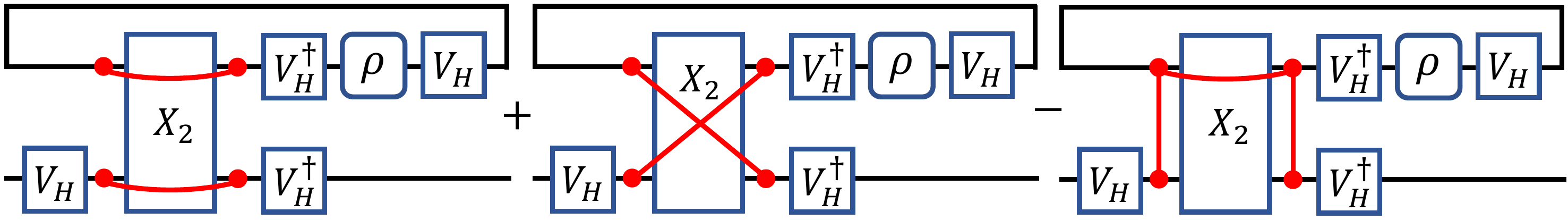}
\end{tabular}\\
=& V_H\mathcal{N}(V_H^\dagger\rho V_H)V_H^\dagger,
\end{aligned}
\end{equation}
where the Choi matrix of $\mathcal{N}$ is $\Phi_2^{\mathrm{D}}(X_2)$. Notice that whether this classical shadow protocol is tomography-complete depends on the reversibility of $\mathcal{N}$ and the Hamiltonian shadow estimator is
\begin{equation}\label{eq:estimator_in_app}
\hat{\rho}=\mathcal{M}_H^{-1}\left(e^{iHt}\ketbra{b}{b}e^{-iHt}\right)=
V_H\mathcal{N}^{-1}\left(V_H^\dagger e^{iHt}\ketbra{b}{b} e^{-iHt} V_H\right)V_H^\dagger=
V_H\mathcal{N}^{-1}\left(\overline{\Lambda}_tV_H^\dagger\ketbra{b}{b} V_H \Lambda_t\right)V_H^\dagger.
\end{equation}

Generally speaking, it is hard to invert a general linear map. To ease the computational complexity, we need to utilize the structure of $\Phi_2^{\mathrm{D}}(X_2)$. Using the second-order integral over random diagonal unitaries, Eq.~\eqref{eq:diagonal_second_twirling},
the action of map $\mathcal{N}$ can be simplified as
\begin{equation}
\begin{aligned}
\mathcal{N}(\sigma)=&\sum_{i,j}(X_2)_{ij,ij}\sigma_{i,i}\ketbra{j}{j}+\sum_{i,j}(X_2)_{ij,ji}\sigma_{j,i}\ketbra{j}{i}-\sum_i(X_2)_{ii,ii}\sigma_{i,i}\ketbra{i}{i}\\
=&\sum_j\left(\sum_i(X_2)_{ij,ij}\sigma_{i,i}\right)\ketbra{j}{j}+\sum_{i\neq j}(X_2)_{ij,ji}\sigma_{j,i}\ketbra{j}{i}.
\end{aligned}
\end{equation}
It is shown that only a small part of elements of $X_2$ contributes to the definition of map $\mathcal{N}$. We thus define a new matrix $X_H$ with $(X_H)_{i,j}=(X_2)_{ji,ji}$. Substituting the definition of $X_2$, we have
\begin{equation}
\begin{aligned}
(X_H)_{i,j}&=(X_2)_{ji,ji}=\sum_b\bra{ji}(V_H^\dagger)^{\otimes 2}\ketbra{bb}{bb}V_H^{\otimes 2}\ket{ji}=\sum_b|(V_H)_{i,b}|^2|(V_H)_{j,b}|^2=(X_2)_{ij,ij}=(X_H)_{j,i},\\
(X_2)_{ij,ji}&=\sum_b\bra{ij}(V_H^\dagger)^{\otimes 2}\ketbra{bb}{bb}V_H^{\otimes 2}\ket{ji}=\sum_b|(V_H)_{i,b}|^2|(V_H)_{j,b}|^2=(X_2)_{ji,ij}=(X_H)_{i,j}.
\end{aligned}
\end{equation}
Then, the map $\mathcal{N}$ can be further simplified as
\begin{equation}
\mathcal{N}(\sigma)=\sum_i\left(\sum_j(X_H)_{i,j}\sigma_{j,j}\right)\ketbra{i}{i}+\sum_{i\neq j}(X_H)_{i,j}\sigma_{i,j}\ketbra{i}{j}.
\end{equation}
The inverse map can thus be written as 
\begin{equation}\label{eq:inverse_N_map}
\mathcal{N}^{-1}(\sigma)=\sum_i\left(\sum_j(X_H^{-1})_{i,j}\sigma_{j,j}\right)\ketbra{i}{i}+\sum_{i\neq j}(X_H)_{i,j}^{-1}\sigma_{i,j}\ketbra{i}{j}.
\end{equation}
Notice that the matrix $X_H$ is related with $V_H$ in a more straightforward way. Defining $V_H^{\mathrm{sq}}=\sum_{i,j}\abs{(V_H)_{i,j}}^2\ketbra{i}{j}$, we can prove that
\begin{equation}
X_H=\sum_{i,j}\left(\sum_b\abs{(V_H)_{b,i}}^2\abs{(V_H)_{b,j}}^2\right)\ketbra{i}{j}=(V_H^{\mathrm{sq}})^TV_H^{\mathrm{sq}}.
\end{equation}

Based on Eq.~\eqref{eq:inverse_N_map}, we can derive the conditions under which the Hamiltonian shadow map is reversible.
Firstly, the matrix $X_H$ needs to be invertible, which makes sure that diagonal elements of $V_H^\dagger\rho V_H$ can be estimated. Secondly, off-diagonal terms of $X_H$ cannot be zero, which ensures that all off-diagonal terms of $V_H^\dagger\rho V_H$ can be estimated. Note that, if some condition is not satisfied, although the shadow map is not invertible, we can still estimate some elements of $V_H^\dagger\rho V_H$ by taking the peudo-inverse of $\mathcal{N}$.

\subsection{Local Version}
These conclusions can be easily extended to local version, where the evolution unitary is $\bigotimes_{p=1}^Ne^{-iH_pt}$. We assume all the single-patch Hamiltonians $H_p$ are independent and the evolution time $t$ is also randomly sampled. Denoting the measurement result as $\bigotimes_{p=1}^N\ketbra{b_p}{b_p}$, the corresponding shadow map is 
\begin{equation}
\begin{aligned}
\mathbb{E}_{t}\left(\bigotimes_{p=1}^N\mathbb{E}_{b_p}e^{iH_pt}\ketbra{b_p}{b_p}e^{-iH_pt}\right)
=\mathbb{E}_t\left(\begin{tabular}{c}
    \includegraphics[scale=0.22]{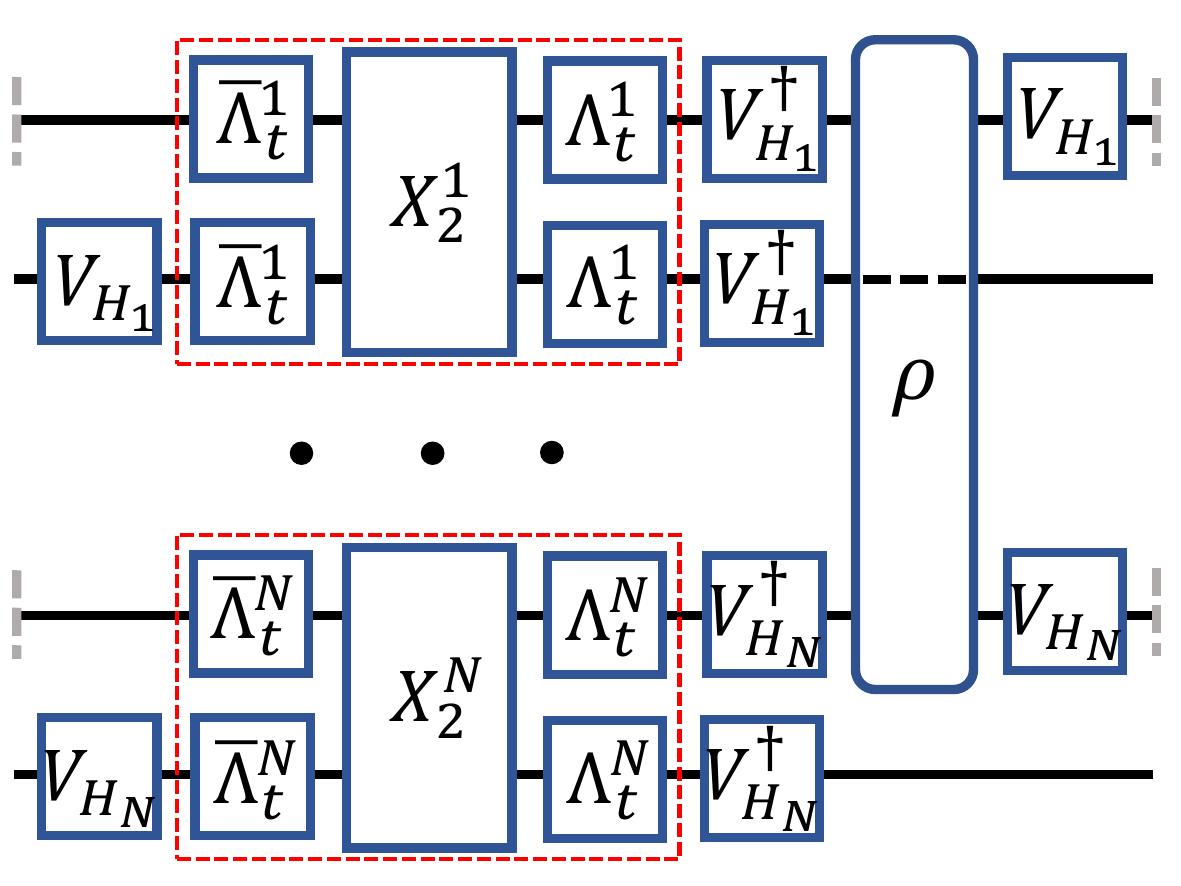}
\end{tabular}\right)
=\begin{tabular}{c}
    \includegraphics[scale=0.22]{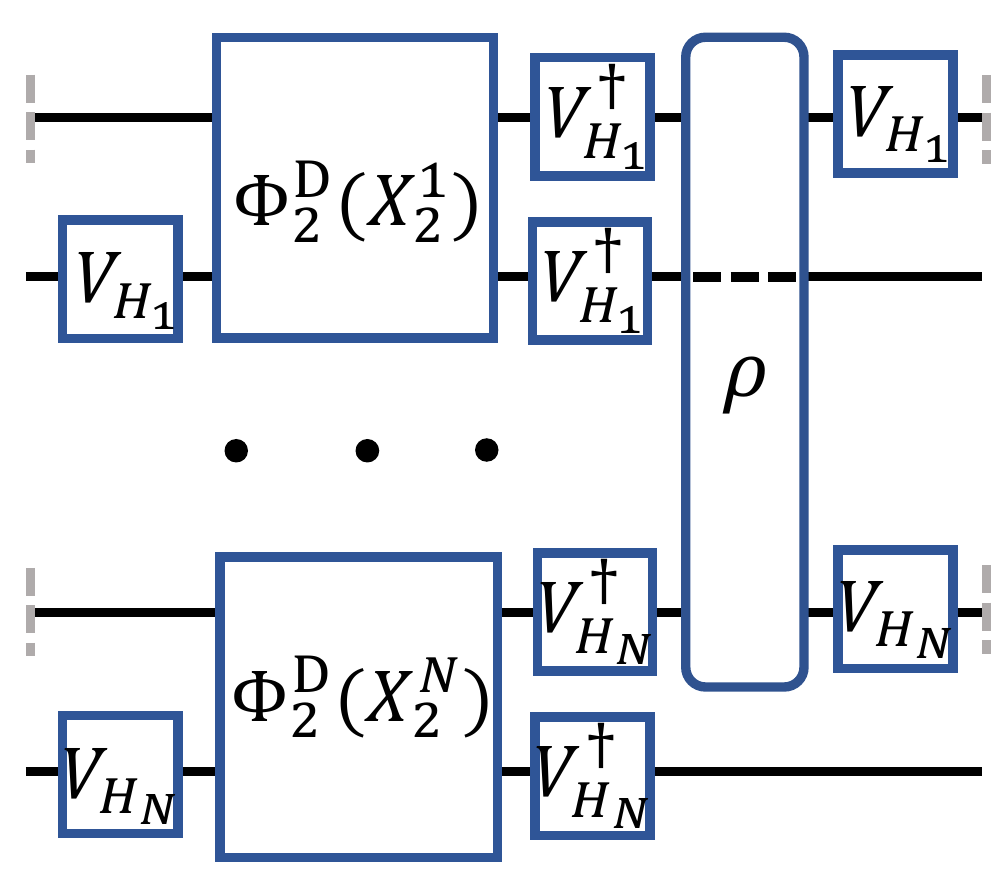}
\end{tabular}
=\left(\bigotimes_{p=1}^N\mathcal{M}_{H_p}\right)(\rho),
\end{aligned}
\end{equation}
where $X_2^p=\sum_{b_p}V_{H_p}^{\dagger\otimes 2}\ketbra{b_pb_p}{b_pb_p}V_{H_p}^{\otimes 2}$. 
It can be similarly proved that the following gives the unbiased estimator of $\rho$
\begin{equation}
\hat{\rho}=\bigotimes_{p=1}^NV_{H_p}\mathcal{N}_p^{-1}\left(\overline{\Lambda}_{t}^pV_{H_p}^\dagger\ketbra{b_p}{b_p}V_{H_p}\Lambda_{t}^p\right)V_{H_p}^\dagger,
\end{equation}
where $\mathcal{N}_p$ is defined in the same way as the global version using the single-patch Hamiltonian $H_p$.

Note that the tensor product structure of the Hamiltonian shadow map, i.e. the tensor product structure of Choi matrix, highly depends on the requirement that all $\Lambda_t^p$ are mutually independent. If this condition is not satisfied, such as some eigenvalues of $H_p$ are equivalent with some eigenvalues of $H_{p^\prime}$, we will fail to get a Choi matrix with a tensor product structure. Considering the difficulty in choosing different local Hamiltonians for some analog systems, one can also set different evolution times $t_p$ for different patches to achieve a same target.

\subsection{Numerical Demonstration}
\begin{figure}
\centering
\includegraphics[width=0.6\linewidth]{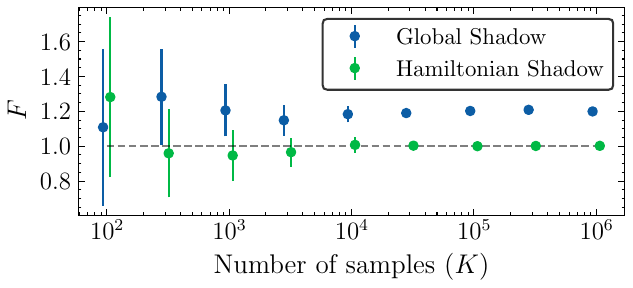}
\caption{Performance comparison between two data post-processing methods, the Hamiltonian shadow and the original global shadow, in predicting fidelity. A four-qubit GHZ state is evolved under $e^{-iHt}$ with a single random Hermitian matrix $H$ and random evolution time $t$ and measured in computational basis to get $\ket{b}$. Then, both methods use the dataset of $\{e^{-iHt_j},b_j\}_{j=1}^K$ to construct their fidelity estimators. The error bar indicates a 99.7$\%$ confidence interval (3 standard deviations).}
\label{fig:unbiase_numerics}
\end{figure}

We use a simple numerical experiment to show the unbiasedness of Hamiltonian shadow estimator.
It is also worth emphasising that directly treating $e^{-iHt}$ as a random Haar unitary and performing the data post-processing of the original global classical shadow will lead to biased estimations.
In Fig.~\ref{fig:unbiase_numerics}, we show the fidelity estimation with two data post-processing methods, Hamiltonian shadow and global shadow, using the same measurement dataset collected with a single Hamiltonian quench evolution. 
It is clear that the estimation of Hamiltonian shadow approaches the real value while the global shadow does not.

\subsection{Limited Evolution Time}\label{subsec:limit_time}
Derivations till now are based on the assumption that $\Lambda_t$ is an ideal random diagonal unitary. While, in some cases where the time period $t\in [t_{\mathrm{min}},t_{\mathrm{max}}]$ is not long enough, $\Lambda_t$ has certain distance with ideal random diagonal unitary and $X_H$ cannot fully describe the action of $\mathcal{N}$. We need to recalculate the integral of $\Phi_2^{\mathrm{D}}(X_2)$ as 
\begin{equation}
\Phi_2^{\mathrm{\Delta t}}(X_2)=\frac{1}{t_{\mathrm{max}}-t_{\mathrm{min}}}\int_{t_{\mathrm{min}}}^{t_{\mathrm{max}}}\Lambda_t^{\otimes 2}X_2\overline{\Lambda}_t^{\otimes 2}dt,
\end{equation}
where $\Lambda_t=\mathrm{diag}(e^{-iE_1t},e^{-iE_1t},\cdots,e^{-iE_dt})$. Elements of $\Phi_2^{\Delta t}(X_2)_{ij,ij}$ and $\Phi_2^{\Delta t}(X_2)_{ij,ji}$ are the same as $\Phi_2^{\mathrm{D}}(X_2)_{ij,ij}$ and $\Phi_2^{\mathrm{D}}(X_2)_{ij,ji}$ as phases cancel out. While, other terms are not zero when $\Delta t=t_{\mathrm{max}}-t_{\mathrm{min}}$ is finite. This can be proved from the integral,
\begin{equation}\label{eq:element_integral}
\begin{aligned}
\Phi_2^{\Delta t}(X_2)_{ij,kl}=&\frac{1}{t_{\mathrm{max}}-t_{\mathrm{min}}}\int_{t_{\mathrm{min}}}^{t_{\mathrm{max}}}e^{-i(E_i+E_j-E_k-E_l)t}(X_2)_{ij,kl}dt\\
=&\frac{(X_2)_{ij,kl}}{i(t_{\mathrm{max}}-t_{\mathrm{min}})(E_k+E_l-E_i-E_j)}\left(e^{-i(E_i+E_j-E_k-E_l)t_{\mathrm{max}}}-e^{-i(E_i+E_j-E_k-E_l)t_{\mathrm{min}}}\right).
\end{aligned}
\end{equation}
Thus, except for cases where $E_{i}+E_j=E_k+E_l$, other matrix elements of $\Phi_2^{\Delta t}(X_2)$ all decrease with the time period $\Delta t$. While for finite $\Delta t$, these elements are normally not zero. 

The first information we get from Eq.~\eqref{eq:element_integral} is that the only requirement for eigenvalues of $H$ is that except for $(i,j)=(k,l)$ and $(i,j)=(l,k)$, $E_i+E_j\neq E_k+E_l$. 
In addition to this, other correlations of different eigenvalues do not affect the unbiasedness of Hamiltonian shadow when $\Delta t$ is sufficiently large. 
This is because when $E_i+E_j\neq E_k+E_l$, $[\Phi_2^{\Delta t}(X_2)]_{ij,kl}$ will decay to zero when $\Delta t\to\infty$. Such conclusion also meets our analysis in Appendix \ref{subsec:RDU}, stated that the degeneracy higher than order two does not affect the second-order integral.
Besides, when $\Delta t$ is finite, we can adjust the post-processing of Hamiltonian shadow protocol to remove the bias. 
In this case, the matrix $X_H$ cannot fully describe the map $\mathcal{N}$, as the Choi matrix of $\mathcal{N}$ is replaced from $\Phi_2^{\mathrm{D}}(X_2)$ to $\Phi_2^{\Delta t}(X_2)$. While, combining Eq.~\eqref{eq:element_integral} and information of $\Delta t$ and $H$, we can still numerically determine the map $\mathcal{N}$ and its inverse $\mathcal{N}^{-1}$. Then, following the same logic, Eq.~\eqref{eq:estimator_in_app} with a new definition of $\mathcal{N}^{-1}$ can give the unbiased estimator of $\rho$.

In the case of finite time scale, we can also derive an analytical expression of the frame potential, which can be used to show the difference between $\Lambda_t$ and the ideal random diagonal unitary. Assuming $z_j=e^{-iE_j(t_2-t_1)}$ for $1\le j\le d$, the $k$-th order frame potential is
\begin{equation}
\begin{aligned}
&F^{(k)}_{\Delta t}=\mathbb{E}_{t_1,t_2}\abs{\Tr(\overline{\Lambda}_{t_1}\Lambda_{t_2})}^{2k} \\
=&\mathbb{E}_{t_1,t_2}\left(\sum_{n_1,\cdots,n_d\in\mathbb{N}, n_1+\cdots+n_d=k}\binom{k}{n_1,\cdots,n_d}z_1^{n_1}\cdots z_d^{n_d}\right)\left(\sum_{n_1^\prime,\cdots,n_d^\prime\in\mathbb{N}, n_1^\prime+\cdots+n_d^\prime=k}\binom{k}{n_1^\prime,\cdots,n_d^\prime}(z_1^{n_1^\prime}\cdots z_d^{n_d^\prime})^*\right)\\
=&\frac{1}{(t_{\max}-t_{\min})^2}\sum_{n_1,\cdots,n_d\in\mathbb{N}, n_1+\cdots+n_d=k}\sum_{n_1^\prime,\cdots,n_d^\prime\in\mathbb{N}, n_1^\prime+\cdots+n_d^\prime=k}\binom{k}{n_1,\cdots,n_d}\binom{k}{n_1^\prime,\cdots,n_d^\prime}\int_{t_{\min}}^{t_{\max}}\int_{t_{\min}}^{t_{\max}}z_1^{n_1-n_1^\prime}\cdots z_d^{n_d-n_d^\prime}dt_1dt_2\\
=&\frac{1}{(t_{\max}-t_{\min})^2}\sum_{n_1,\cdots,n_d}\sum_{n_1^\prime,\cdots,n_d^\prime}\binom{k}{n_1,\cdots,n_d}\binom{k}{n_1^\prime,\cdots,n_d^\prime}\int_{t_{\min}}^{t_{\max}}\int_{t_{\min}}^{t_{\max}}e^{-i(n_1-n_1^\prime)E_1(t_2-t_1)}\cdots e^{-i(n_d-n_d^\prime)E_d(t_2-t_1)}dt_1dt_2\\
=&\sum_{n_1,\cdots,n_d}\sum_{n_1^\prime,\cdots,n_d^\prime}\binom{k}{n_1,\cdots,n_d}\binom{k}{n_1^\prime,\cdots,n_d^\prime}\abs{\frac{\left(e^{i\left(\sum_{j=1}^d(n_j^\prime-n_j)E_j\right)t_{\max}}-e^{i\left(\sum_{j=1}^d(n_j^\prime-n_j)E_j\right)t_{\min}}\right)}{i(t_{\max}-t_{\min})\sum_{j=1}^d(n_j^\prime-n_j)E_j}}^2,
\end{aligned}
\end{equation}
where we use $\sum_{n_1,\cdots,n_d}$ to simplify the notation of $\sum_{n_1,\cdots,n_d\in\mathbb{N}, n_1+\cdots+n_d=k}$. Lets consider a practical Hamiltonian $H=\sum_iH_i$, where each $H_i$ has small locality. Normally speaking, eigenvalues of the Hamiltonian scale polynomially while the number of eigenvalues scales exponentially with the qubit number. If we fix the time scale $\Delta t=t_{\max}-t_{\min}$ and $k$, the value of $\binom{k}{n_1,\cdots,n_d}\binom{k}{n_1^\prime,\cdots,n_d^\prime}\abs{\frac{\left(e^{i\left(\sum_{j=1}^d(n_j^\prime-n_j)E_j\right)t_{\max}}-e^{i\left(\sum_{j=1}^d(n_j^\prime-n_j)E_j\right)t_{\min}}\right)}{i(t_{\max}-t_{\min})\sum_{j=1}^d(n_j^\prime-n_j)E_j}}^2$ decays only polynomially with qubit number. However, the number of summation terms contributing to the $k$-th order frame potential scales asymptotically $2^{2kN}$ with qubit number. Compared with $2^{kN}$ terms left in the frame potential of random diagonal unitary, Eq.~\eqref{eq:frame_potential_RDU}, the frame potential of finite time will be exponentially larger than the ideal random diagonal unitary.

Taking the Rydberg atom array Hamiltonian as an example,
\begin{equation}
H=\frac{\Omega}{2}\sum_j\left(e^{i\phi}\ketbra{g_j}{r_j}+h.c.\right)-\Delta \sum_j \hat{n}_j+\sum_{j<k}V_{jk}\hat{n}_j\hat{n}_k,
\end{equation}
we follow the same parameter setting as the main context and fix the time scale $\Delta t=20\mu s$. The evolution unitary can be decomposed as $e^{-iHt}=V_H\Lambda_t V_H^\dagger$ and $\Lambda_t$ follows a finite-time random diagonal unitary distribution. We numerically calculate the scaling of frame potential of $\Lambda_t$ with the qubit number, shown in Fig.~\ref{fig:FP}. It is obvious that the finite-time frame potential is exponentially larger than the frame potential of ideal random diagonal unitary. Besides, the scaling slope becomes larger when $k$ increases, which matches our prediction. However, as shown in main context, the Hamiltonian shadow using the Rydberg atom array Hamiltonian with $\Delta t\le 20\mu s$ performs well in estimating observables. Thus, it is still an open problem to determine the relation between the frame potential and performance of Hamiltonian shadow.

\begin{figure}
    \centering
    \includegraphics[width=0.4\textwidth]{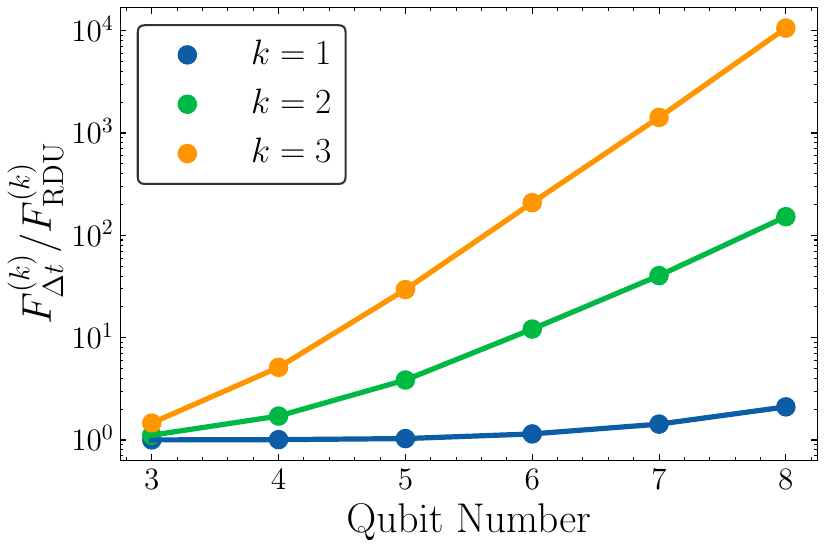}
    \caption{The scaling of finite-time frame potential with qubit number, with different $k$.}
    \label{fig:FP}
\end{figure}

\section{Case Studies}\label{sec:case_study}
\subsection{Single-Qubit Case}\label{sec:single_qubit}

In this section, we use the simplest example, the single-qubit case, to show how Hamiltonian shadow works and give the intuition of the factors influencing its performance. 
As $e^{-iHt}\rho e^{iHt}=V_H\Lambda_tV_H^\dagger\rho V_H\overline{\Lambda}_tV_H^\dagger$ and $V_H$ is a fixed unitary which is independent of $t$, we can define $\rho_H=V_H^\dagger\rho V_H$ and regard the whole process as $\rho_H$ under the evolution of $V_H\Lambda_t$. 
The complete learning of $\rho$ is equivalent with the complete learning of $\rho_H$. 

Considering a single-qubit quantum state $\rho_H = \begin{bmatrix}\rho_{00}&\rho_{01} \\ \rho_{10}&\rho_{11} \end{bmatrix}$, we evolve it using a random diagonal unitary $\Lambda_t=\mathrm{diag}(e^{i\theta_1},e^{i\theta_2})$ followed by a fixed unitary $V_H = \begin{bmatrix}\cos{\phi}&\sin{\phi} \\ -\sin{\phi}&\cos{\phi} \end{bmatrix}=\begin{bmatrix} c & s  \\ -s & c \end{bmatrix}$. After such evolution, the density matrix will become
\begin{equation}
\begin{aligned}
V_H\Lambda_t\rho_H\overline{\Lambda}_tV_H^\dagger&=\begin{bmatrix} c e^{i\theta_1} & s e^{i\theta_2} \\ -s e^{i\theta_1} & c e^{i\theta_2} \end{bmatrix} \rho \begin{bmatrix} c e^{-i\theta_1} & -s e^{-i\theta_1}  \\ s e^{-i\theta_2} & c e^{-i\theta_2} \end{bmatrix}\\
&=\begin{bmatrix} c^2\rho_{00}+s^2\rho_{11}+cse^{i(\theta_2-\theta_1)}\rho_{10}+cse^{i(\theta_1-\theta_2)}\rho_{01} & cs(\rho_{11}-\rho_{00})+c^2e^{i(\theta_1-\theta_2)}\rho_{01}-s^2e^{i(\theta_2-\theta_1)}\rho_{10}  \\ cs(\rho_{11}-\rho_{00})+c^2e^{i(\theta_2-\theta_1)}\rho_{10}-s^2e^{i(\theta_1-\theta_2)}\rho_{01} & s^2\rho_{00}+c^2\rho_{11}-cse^{i(\theta_2-\theta_1)}\rho_{10}-cse^{i(\theta_1-\theta_2)}\rho_{01} \end{bmatrix}.
\end{aligned}
\end{equation}
Assuming $\rho_{01}=a+ib$, diagonal terms of $V_H\Lambda_t\rho\overline{\Lambda}_tV_H^\dagger$ are $c^2\rho_{00}+s^2\rho_{11}+2cs\cos{(\theta_2-\theta_1)}a+2cs\sin{(\theta_2-\theta_1)}b$ and $c^2\rho_{00}+s^2\rho_{11}-2cs\cos{(\theta_2-\theta_1)}a-2cs\sin{(\theta_2-\theta_1)}b$, respectively. After measuring the evolved state in the Pauli-$Z$ basis for different values of $\theta_1$ and $\theta_2$, we get many equations to solve $\rho_{00}$, $\rho_{11}$, $a$, and $b$. If these equations are independent and complete, we can use them to learn $\rho$ completely.  

From this case study, it can be noticed that the Hamiltonian shadow does not work in some special cases, depending on the form of $V_H$ and $\Lambda_t$. When $c=s=\frac{1}{\sqrt{2}}$, diagonal terms become $\frac{1}{2}+a\cos{(\theta_2-\theta_1)}+b\sin{(\theta_2-\theta_1)}$ and $\frac{1}{2}-a\cos{(\theta_2-\theta_1)}-b\sin{(\theta_2-\theta_1)}$. In this scenario, the extraction of diagonal terms, $\rho_{00}$ and $\rho_{11}$, becomes infeasible. 
When $c=0$ or $s=0$, these two diagonal terms will be independent with $a$ and $b$, making the learning of off-diagonal terms infeasible. 
Supposing $\theta_1=E_1t$ and $\theta_2=E_2t$, when $E_1=E_2$, diagonal elements of evolved state become $c^2\rho_{00}+s^2\rho_{11}+2csa$ and $c^2\rho_{00}+s^2\rho_{11}-2csa$. These two terms are independent of time $t$ and contain three unknown parameters of $\rho$. Thus, measuring the evolved state in computational basis cannot help us to determine these unknown parameters. While, when $E_1$ and $E_2$ have correlation, like $E_1=-E_2$, the Hamiltonian shadow also works. This is because we can also adjust $\theta_1-\theta_2=(E_1-E_2)t$ to arbitrary value by adjusting the evolution time $t$ and get many independent equations.

These properties can also be derived from the theory constructed in Sec.~\ref{sec:shadow_map}. By definition, the matrix $X_2$ has the form of 
\begin{equation}
X_2 = (V^\dagger\ketbra{0}{0}V)^{\otimes 2}+(V^\dagger\ketbra{1}{1}V)^{\otimes 2} = \begin{bmatrix}
c^4 & c^3s & c^3s & c^2s^2\\
c^3s & c^2s^2 & c^2s^2 & cs^3\\
c^3s & c^2s^2 & c^2s^2 & cs^3\\
c^2s^2 & cs^3 & cs^3 & s^4
\end{bmatrix}
+
\begin{bmatrix}
s^4 & -s^3c & -s^3c & s^2c^2\\
-s^3c & s^2c^2 & s^2c^2 & -sc^3\\
-s^3c & s^2c^2 & s^2c^2 & -sc^3\\
s^2c^2 & -sc^3 & -sc^3 & c^4
\end{bmatrix}.
\end{equation}
Thus, the matrix $X_H$ is $X_H = \begin{bmatrix}
    c^4+s^4 & 2s^2c^2\\ 2s^2c^2 & c^4+s^4
\end{bmatrix}$. In most cases, $X_H$ is invertible and its off-diagonal terms of are nonzero. When $c=s=\frac{1}{\sqrt{2}}$, $X_H = \frac{1}{2}\begin{bmatrix}
    1 & 1\\ 1 & 1
\end{bmatrix}$ is not invertible, which means that the diagonal terms of $\rho_H$ cannot be derived from the Hamiltonian shadow protocol. When $c=0$ or $s=0$, $(X_H)_{i,j}=0$ for $i\neq j$, which means that the Hamiltonian shadow protocol fails to extract off-diagonal elements of $\rho_H$. 

In asymptotic scenario where $c$ or $s$ approaches zero, the diagonal elements of evolved state, $V\Lambda\rho_H\overline{\Lambda}V^\dagger$ are $c^2\rho_{00}+s^2\rho_{11}+2cs\cos{(\theta_2-\theta_1)}a-2cs\sin{(\theta_2-\theta_1)}b$, contains little information of $a$ and $b$. Therefore, although it is still possible to learn off-diagonal elements, the sample complexity will be much higher. Similarly, when $c$ and $s$ approach $\frac{1}{\sqrt{2}}$, it is hard for Hamiltonian shadow to learn diagonal terms of $\rho_H$. As a result, in addition to the detection feasibility, the form of $V_H$ also influences the performance of Hamiltonian shadow. 

We can use the single-qubit case to numerically observe the sample complexity performance of Hamiltonian shadow. Suppose the Hamiltonian we use in Hamiltonian shadow is $H(\theta)=\cos(\theta)Z+\sin(\theta)X$, where $Z$ and $X$ are single-qubit Pauli matrices. We initialize the state to be $\rho=\ketbra{\psi}{\psi}$ with $\ket{\psi}$ being a random pure state. When using this Hamiltonian to estimate the expectation value of $X+Y+Z$, the variance of Hamiltonian shadow scales as Fig.~\ref{fig:single_variance}. Three peaks in the diagram can be explained using the discussions in this section. When $\theta$ approaches $\pi/2$, $V_H$ approaches the Hadamard gate and the post-processing matrix $X_H$ approaches $\frac{1}{2}\begin{bmatrix}
    1 & 1\\ 1 & 1
\end{bmatrix}$, which is invertible and the Hamiltonian shadow fails to be tomography-complete. When $\theta$ approaches zero and $\pi$, $V_H$ and $X_H$ both approach the identity matrix, which makes it impossible to estimate off-diagonal terms of $\rho$. This results in two peaks at $\theta=0$ and $\pi$. When choosing an appropriate value of $\theta$, the variance of Hamiltonian shadow is close to the variance of the original shadow using random Clifford unitaries, which shows the potential of Hamiltonian shadow.

\begin{figure}[htbp]
    \centering
    \includegraphics[width=0.4\textwidth]{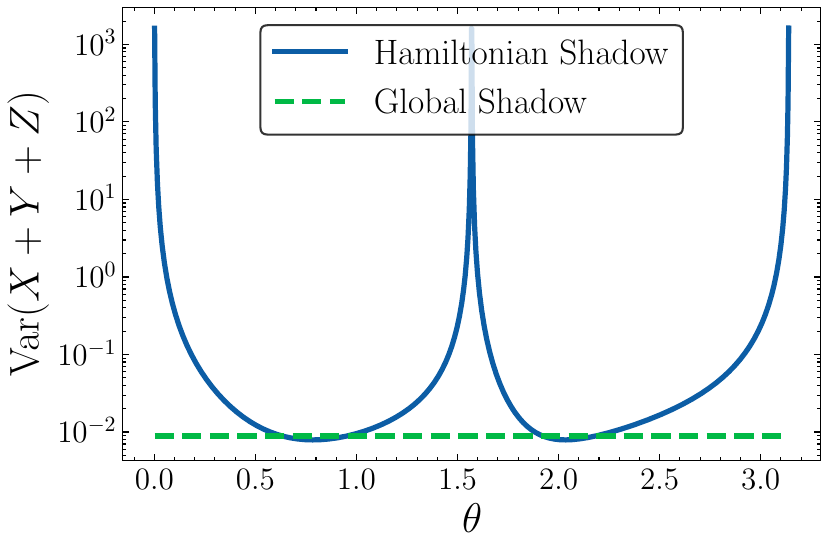}
    \caption{The variance performance of estimating $X+Y+Z$ using the Hamiltonian shadow with $H=\cos(\theta) Z+\sin(\theta)X$ and the original shadow. The measurement times is set to be $K=1000$ and the target state is a random pure state.}
    \label{fig:single_variance}
\end{figure}

\subsection{Multiqubit Hadamard Gates}\label{sec:hadamard}
In this section, we use another example to give the evidence that the performance of the Hamiltonian shadow is similar with the original shadow. The protocol is shown in Fig.~\ref{fig:Hadamard}, where the state $\rho$ evolves with a random diagonal unitary followed by a layer of Hadamard gates. Here $\rho$ can be regarded as the $\rho_H$ introduced in the previous section.

\begin{figure}[htbp]
    \centering
    \includegraphics[width=0.22\textwidth]{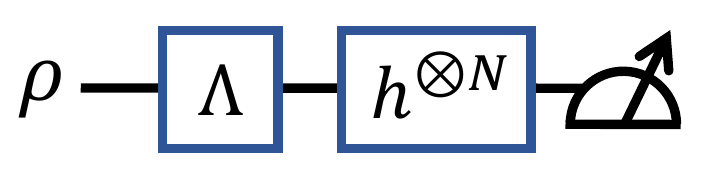}
    \caption{The Hamiltonian shadow with a random diagonal unitary followed by a layer of Hadamard gates.}
    \label{fig:Hadamard}
\end{figure}

Following the derivation in Sec.~\ref{sec:shadow_map}, the shadow map for this setting is $\mathcal{M}(\rho)=\mathcal{N}(\rho)$, as the evolving unitary is $V_H\Lambda$ instead of $V_H\Lambda V_H^\dagger$. Using the matrix form of Hadamard gate, $h=\frac{1}{\sqrt{2}}\begin{bmatrix}
    1 & 1 \\ 1 & -1
\end{bmatrix}$, every element of $X_H$ is
\begin{equation}
(X_H)_{i,j}=\sum_b|\bra{i}h^{\otimes N}\ket{b}|^2\times|\bra{j}h^{\otimes N}\ket{b}|^2=2^{-N}.
\end{equation}
Thus, the shadow map is
\begin{equation}
\mathcal{M}(\rho)=\sum_i\left(\sum_j(X_H)_{i,j}\rho_{j,j}\right)\ketbra{i}{i}+\sum_{i\neq j}X_{ij,ji}\rho_{j,i}\ketbra{j}{i}=\frac{\mathbb{I}}{2^N}+\frac{1}{2^N}\sum_{i\neq j}\rho_{j,i}\ketbra{j}{i}.
\end{equation}
This Hamiltonian shadow map losses information of diagonal terms of $\rho$ and is thus not invertible. This can also be derived from the fact that the post-processing matrix $X_H$ is not invertible. At the same time, all off-diagonal terms of $X_H$ are nonzero. Therefore, $2^N\overline{\Lambda}h^{\otimes N}\ketbra{b}{b}h^{\otimes N}\Lambda-\mathbb{I}$ gives the unbiased estimator of $\rho-\mathrm{diag}(\rho)$. This estimator can be used to estimate the expectation value of some observables with zero diagonal elements by $\hat{o}=2^n\Tr\left[\overline{\Lambda}h^{\otimes n}\ketbra{b}{b}h^{\otimes n}\Lambda O\right]$. The variance of $\hat{o}$ is 
\begin{equation}
\mathrm{Var}\left(\hat{o}\right)=2^{2N}\mathbb{E}_{\Lambda}\sum_b\Tr\left[\overline{\Lambda}h^{\otimes N}\ketbra{b}{b}h^{\otimes N}\Lambda \rho\right]\Tr\left[\overline{\Lambda}h^{\otimes N}\ketbra{b}{b}h^{\otimes N}\Lambda O\right]^2-\Tr(O\rho)^2
=2^{2N}\begin{tabular}{c}
    \includegraphics[scale=0.25]{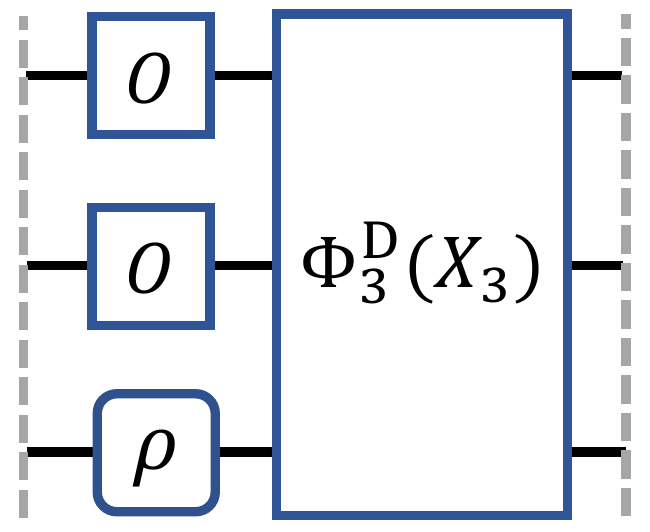}
\end{tabular}-\Tr(O\rho)^2,
\end{equation}
where $X_3 = \sum_b h^{\otimes 3N}\ketbra{bbb}{bbb}h^{\otimes 3N}$. 
To calculate the variance, we need to utilize the properties of $\Phi_3^{\mathrm{D}}(X_3)$. As shown in Fig.~\ref{fig:integral}, the third order integral has many terms. While, good thing is that, nonzero elements of all matrices are $2^{-2N}$. For example,
\begin{equation}
\left(\begin{tabular}{c}
    \includegraphics[scale=0.25]{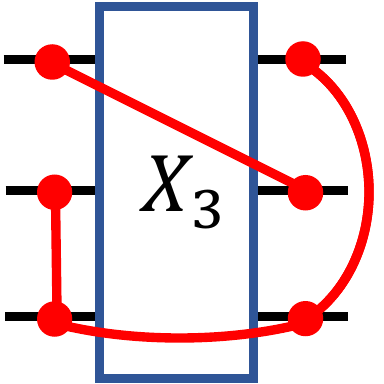}
\end{tabular}\right)_{ijj,jij}=\sum_b\bra{ijj}h^{\otimes 3N}\ket{bbb}\bra{bbb}h^{\otimes 3N}\ket{jij}=\sum_b|\bra{i}h^{\otimes N}\ket{b}|^2|\bra{j}h^{\otimes N}\ket{b}|^4=2^{-2N}.
\end{equation}
The proof for other terms are same. Therefore, with the indices contraction rule, the variance can be written as
\begin{equation}
\begin{aligned}
&\mathrm{Var}(\hat{o})=\sum_{i,j,k}\left(O_{i,i}O_{j,j}\rho_{k,k}+O_{i,i}O_{j,k}\rho_{k,j}+O_{i,k}O_{j,j}\rho_{k,i}+O_{i,j}O_{j,i}\rho_{k,k}+O_{i,j}O_{j,k}\rho_{k,i}+O_{i,k}O_{j,i}\rho_{k,j}\right)\\
-&\sum_{i,j}\left(O_{i,i}O_{i,i}\rho_{j,j}+O_{i,j}O_{j,j}\rho_{j,i}+O_{i,i}O_{j,i}\rho_{i,j}+O_{i,j}O_{i,i}\rho_{j,i}+O_{i,i}O_{j,j}\rho_{j,j}+O_{i,j}O_{j,i}\rho_{i,i}+O_{i,i}O_{i,j}\rho_{j,i}+O_{i,j}O_{j,i}\rho_{j,j}+O_{i,i}O_{j,j}\rho_{i,i}\right)\\
+&4\sum_iO_{i,i}O_{i,i}\rho_{i,i}-\Tr(O\rho)^2.
\end{aligned}
\end{equation}
As $O$ has no diagonal terms, $O_{i,i}=0$, only a few terms in the above equation survives
\begin{equation}
\begin{aligned}
\mathrm{Var}(\hat{o})=&\sum_{i,j,k}\left(O_{i,j}O_{j,i}\rho_{k,k}+O_{i,j}O_{j,k}\rho_{k,i}+O_{i,k}O_{j,i}\rho_{k,j}\right)-\sum_{i,j}\left(O_{i,j}O_{j,i}\rho_{j,j}+O_{i,j}O_{j,i}\rho_{i,i}\right)-\Tr(O\rho)^2\\
\le& \Tr(O^2)+2\Tr(O^2\rho)\le 3\Tr(O^2),
\end{aligned}
\end{equation}
where the first inequality holds  as $O_{j,i}O_{i,j}\rho_{i,i}=|O_{j,i}|^2\rho_{i,i}\ge 0$. Notice that $\Tr(O^2)+2\Tr(O^2\rho)$ is exactly the upper bound for original shadow when using global Clifford unitaries \cite{Huang2020predicting}. This result shows a strong evidence that the random diagonal shadow can have a similar performance with the original shadow protocol. Besides, the inability to estimate diagonal information of $\rho$ is not a big problem in practice, as we can directly perform computational basis measurements on $\rho$ to extract these information. The final thing we want to emphasis is that, similar with the original shadow protocol, the leading term of variance is also 
\begin{equation}
\sum_{i,j,k}(X_3)_{ijk,jik}O_{i,j}O_{j,i}\rho_{k,k}=\begin{tabular}{c}
    \includegraphics[scale=0.25]{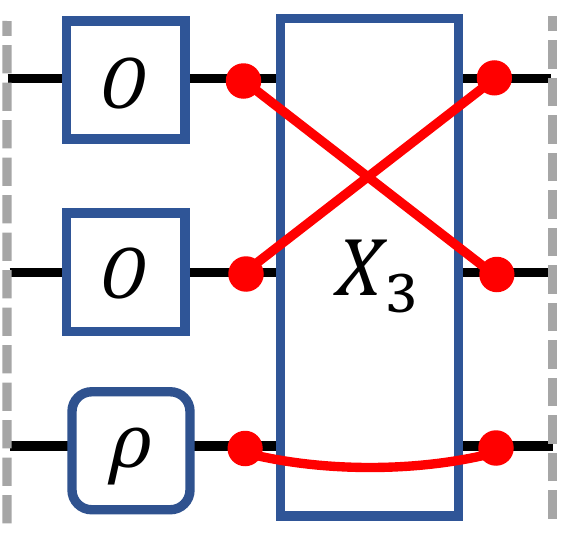}
\end{tabular},
\end{equation}
which also introduces the indices contraction between two observable matrices.

We also numerically show that the performance of Hadamard-based shadow is similar to the global shadow. In Fig.~\ref{fig:Hadamard_variance}, we adopt the original shadow with global Clifford gates and the Hadamard-based Hamiltonian shadow to estimate the Pauli observable $X^{\otimes N}$, which has no diagonal elements and satisfies the requirement of detecting with Hadamard shadow. It is shown that the variance scaling of these two protocols are highly similar.
\begin{figure}
    \centering
    \includegraphics[width=0.4\textwidth]{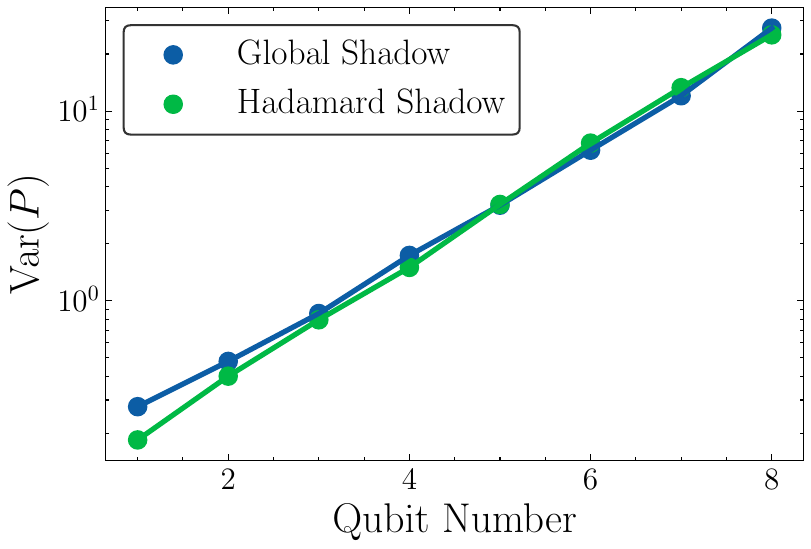}
    \caption{The comparison between the global shadow and Hadamard-based Hamiltonian shadow in estimating the expectation value of $P=X^{\otimes N}$. We set $K=10$ and the target state to be a random pure state. }
    \label{fig:Hadamard_variance}
\end{figure}

\section{Variance Analysis}\label{sec:variance}
In this section, we provide a systematic analysis of the variance of Hamiltonian shadow and use some approximation to simplify its expression. We also start from global version of Hamiltonian shadow, shown in Fig.~\ref{fig:Hamiltonian}(a), and generalize the result to local one, shown in Fig.~\ref{fig:Hamiltonian}(b).

Given the target observable $O$, the shadow estimator is
\begin{equation}\label{eq:estimator_o}
\hat{o}=\Tr\left[O V_H\mathcal{N}^{-1}(\overline{\Lambda}_tV_H^\dagger\ketbra{b}{b}V_H\Lambda_t)V_H^\dagger\right]=\Tr\left[(\mathcal{N}^{-1})^\dagger(V_H^\dagger O V_H)\overline{\Lambda}_tV_H^\dagger\ketbra{b}{b}V_H\Lambda_t\right].
\end{equation}
With the definition of $\mathcal{N}^{-1}$, we have
\begin{equation}
\Tr[A\mathcal{N}^{-1}(B)]=\sum_{i,j}A_{i,i}(X_H^{-1})_{i,j}B_{j,j}+\sum_{i\neq j}(X_H)_{i,j}^{-1}B_{i,j}A_{j,i}=\sum_{i,j}B_{i,i}(X_H^{-1})_{i,j}A_{j,j}+\sum_{i\neq j}(X_H)_{i,j}^{-1}A_{i,j}B_{j,i}=\Tr[\mathcal{N}^{-1}(A)B]
\end{equation}
for two square matrices $A$ and $B$, where the second equality holds as $X_H$ is a real symmetric matrix. This equation shows that $(\mathcal{N}^{-1})^\dagger=\mathcal{N}^{-1}$.
Combining this property and Eq.~\eqref{eq:estimator_o}, the variance can be written as
\begin{equation}\label{eq:H_variance}
\begin{aligned}
\mathrm{Var}(\hat{o})&=\mathbb{E}_{b,t}\hat{o}^2-\Tr(O\rho)^2\\
&=\mathbb{E}_t\sum_b\bra{b}e^{-iHt}\rho e^{iHt}\ket{b}\hat{o}^2-\Tr(O\rho)^2\\
&=\mathbb{E}_t\sum_b\Tr\left\{\left[\mathcal{N}^{-1}(V_H^\dagger O V_H)^{\otimes 2}\otimes V_H^\dagger\rho V_H\right]\left[\left(\overline{\Lambda}_tV_H^\dagger\right)^{\otimes 3}\ketbra{bbb}{bbb}\left(V_H\Lambda_t\right)^{\otimes 3}\right]\right\}-\Tr(O\rho)^2.
\end{aligned}
\end{equation}
Define $X_3=\sum_b\left(V_H^\dagger\ketbra{b}{b}V_H\right)^{\otimes 3}$, the leading term of variance can be graphically represented as 
\begin{equation}
\begin{aligned}
\mathbb{E}_{b,t}\hat{o}^2&=\mathbb{E}_t\left(\sum_b\begin{tabular}{c}
    \includegraphics[scale=0.25]{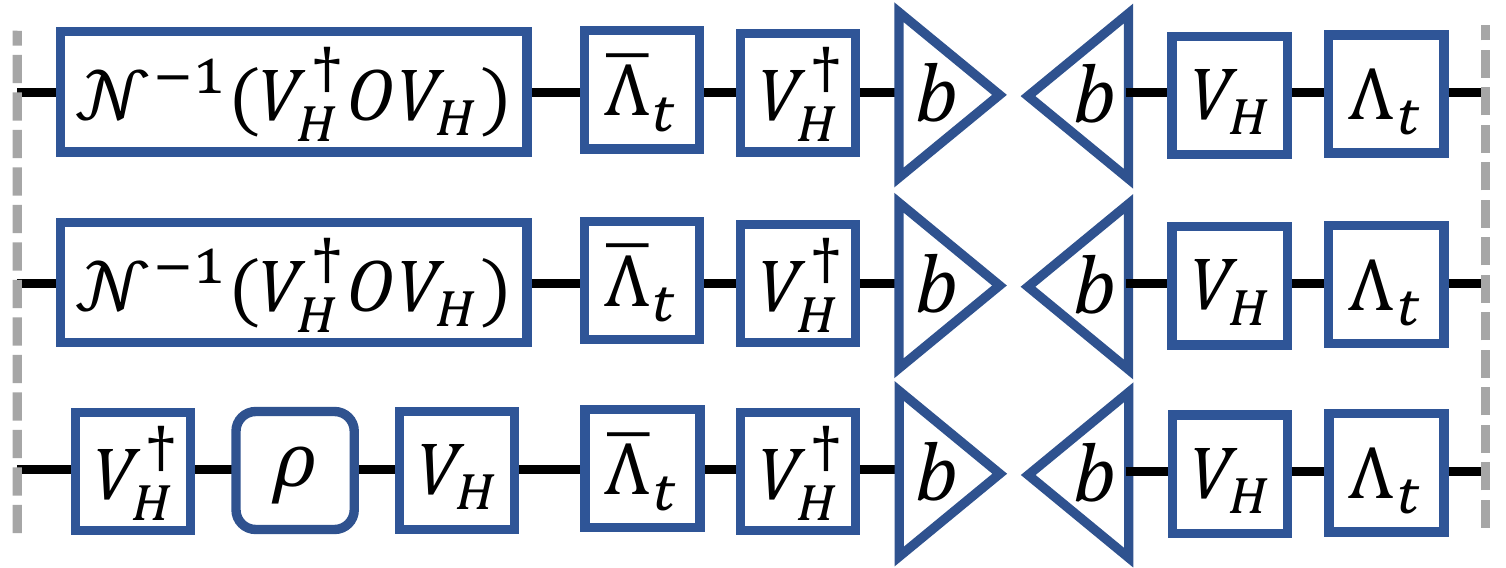}
\end{tabular}\right)
=\begin{tabular}{c}
    \includegraphics[scale=0.25]{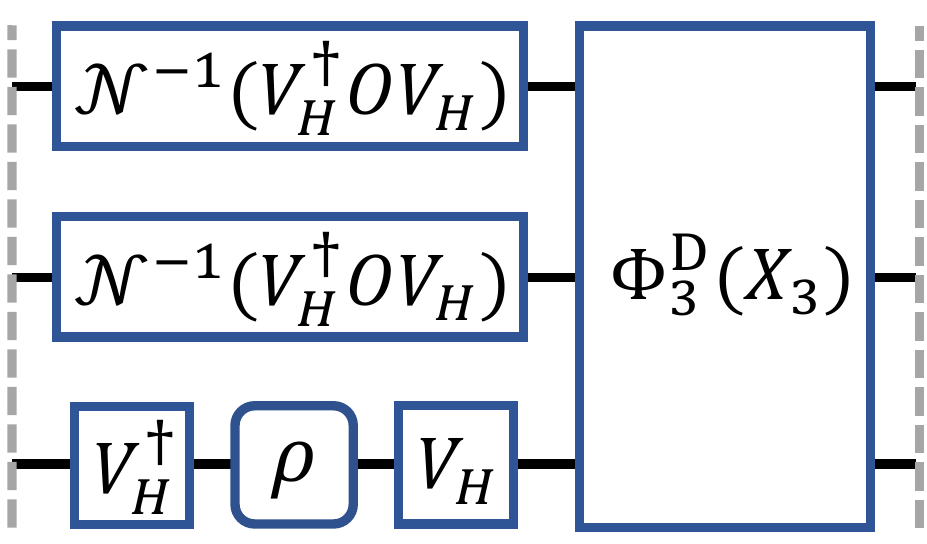}
\end{tabular}\\
&=\Tr\left\{\left[\mathcal{N}^{-1}(V_H^\dagger O V_H)^{\otimes 2}\otimes V_H^\dagger\rho V_H\right]\Phi_3^{\mathrm{D}}(X_3)\right\},
\end{aligned}
\end{equation}
where grey dashed lines represent the trace function. In addition to the observable and state, the variance of random diagonal shadow highly depends on the diagonal unitary $V_H$. As discussed in Appendix \ref{sec:single_qubit}, when $V_H$ approaches the identity matrix, it is easy for the Hamiltonian shadow to extract information of diagonal elements of $V_H^\dagger\rho V_H$ while hard for off-diagonal elements. And as shown in Appendix \ref{sec:hadamard}, when $V_H$ approaches the tensor product of Hadamard gates, it is hard to extract information of diagonal terms of $V_H^\dagger\rho V_H$. 

Similar with the original shadow protocol, the variance of the Hamiltonian shadow protocol does not necessitate that $\Lambda_t$ be a perfect random diagonal unitary; a diagonal unitary three-design is sufficient. This is because that in the derivation of the variance, we only require the use of $\Phi_3^{\mathrm{D}}(\cdot)$. Another consequence of the third-order integral is that, although the third-order degeneracy of Hamiltonian does not affect the unbiasedness of Hamiltonian shadow, it does affect the variance of it.

An important property of Hamiltonian shadow variance is that $\mathbb{E}_{b,t}\hat{o}^2=1$ when $O=\mathbb{I}$. This property can be easily derived from the fact that $\mathcal{N}$ and $\mathcal{N}^{-1}$ are both trace-preserving. We give another proof for this property. Given $O=\mathbb{I}$, the estimator can be written as
\begin{equation}
\begin{aligned}
\hat{o}=&\Tr\left[\mathcal{N}^{-1}(V_H^\dagger\mathbb{I}V_H)\overline{\Lambda}_tV_H^\dagger\ketbra{b}{b}V_H\Lambda_t\right]
=\bra{b}V_H\mathcal{N}^{-1}(\mathbb{I})V_H^\dagger\ket{b}\\
=&\sum_i\sum_j(X_H^{-1})_{ij}\bra{b}V_H\ketbra{i}{i}V_H^\dagger\ket{b}\\
=&\sum_i\sum_j\left[(V_H^{\mathrm{sq}})^{-1}((V_H^{\mathrm{sq}})^{T})^{-1}\right]_{i,j}(V_H^{\mathrm{sq}})_{b,i}\\
=&\sum_j(V_H^{\mathrm{sq}})^{-1}_{j,b},
\end{aligned}
\end{equation}
where we adopt the definition of $\mathcal{N}^{-1}$ and $X_H=(V_H^{\mathrm{sq}})^TV_H^{\mathrm{sq}}$. As $(V_H^{\mathrm{sq}})^{-1}V_H^{\mathrm{sq}}=\mathbb{I}$, we have 
\begin{equation}
\sum_j(V_H^{\mathrm{sq}})^{-1}_{j,b}=\sum_j(V_H^{\mathrm{sq}})^{-1}_{j,b}\sum_b(V_H^{\mathrm{sq}})_{b,i}=\sum_j\left(\sum_b(V_H^{\mathrm{sq}})^{-1}_{j,b}(V_H^{\mathrm{sq}})_{b,i}\right)=\sum_j(\delta_{i,j})=1,
\end{equation}
where we use the property of unitary matrix that $\sum_b(V_H^{\mathrm{sq}})_{b,i}=1$. From above derivations, we know that, when setting $O=\mathbb{I}$, the estimator and $\mathbb{E}_{t,b}\hat{o}^2$ are always $1$, no matter of the measurement result $b$. Thus, the corresponding variance is zero. This is an important property for local version of Hamiltonian shadow. Suppose we use the evolution of $\bigotimes_{p=1}^Ne^{-iH_pt}$ to perform Hamiltonian shadow on $\rho$, shown in Fig.~\ref{fig:Hamiltonian}, the variance of estimating $O=\bigotimes_{p=1}^NO_p$ can be easily generalized from the result of global version Hamiltonian shadow, 
\begin{equation}
\mathbb{E}_{b,t}\hat{o}^2=\prod_{p=1}^N\mathbb{E}_{t,b_p}\hat{o}_p^2.
\end{equation}
Using the property of $\mathbb{E}_{b,t}\hat{o}^2=1$ for $O=\mathbb{I}$, the variance of estimating some local observable does not depend on the total qubit number of $\rho$, but only the locality of $O$.

\subsection{Hamiltonian Shadow Norm}
To benefit our description of the sample complexity of Hamiltonian shadow, we define the Hamiltonian shadow norm of an observable as
\begin{equation}
\begin{aligned}
\norm{O}_{\mathrm{HShadow}}=\max_\sigma\left(\mathbb{E}_t\sum_b\bra{b}e^{-iHt}\sigma e^{iHt}\ket{b}\hat{o}^2\right)^{1/2}=&\max_\sigma\left(\mathbb{E}_t\sum_b\bra{b}e^{-iHt}\sigma e^{iHt}\ket{b}\bra{b}V_H\Lambda_t\mathcal{N}^{-1}(V_H^\dagger O V_H)\overline{\Lambda}_tV_H^\dagger\ket{b}^2\right)^{1/2}\\
=&\max_\sigma\left(\begin{tabular}{c}
    \includegraphics[scale=0.25]{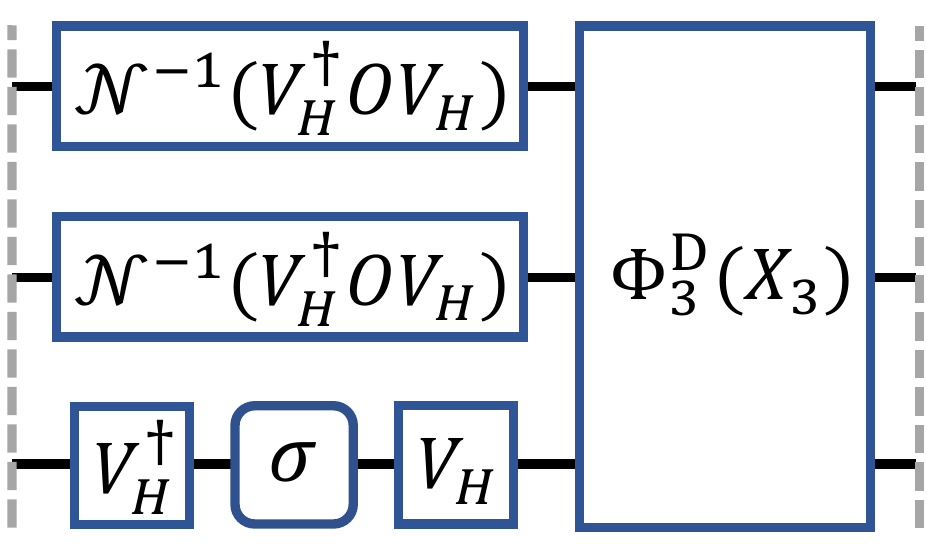}
\end{tabular}\right)^{1/2}.
\end{aligned}
\end{equation}
It is easy to prove that the Hamiltonian shadow norm is non-negative and homogeneous, $\norm{0}_{\mathrm{HShadow}}=0$. We need to prove the triangle inequality of Hamiltonian shadow norm. Denoting $x_b=\bra{b}e^{-iHt}\sigma e^{iHt}\ket{b}^{1/2}\bra{b}V_H\Lambda_t\mathcal{N}^{-1}(V_H^\dagger O_1 V_H)\overline{\Lambda}_tV_H^\dagger\ket{b}$, and $y_b=\bra{b}e^{-iHt}\sigma e^{iHt}\ket{b}^{1/2}\bra{b}V_H\Lambda_t\mathcal{N}^{-1}(V_H^\dagger O_2 V_H)\overline{\Lambda}_tV_H^\dagger\ket{b}$, the Hamiltonian shadow norm of $O_1+O_2$ is thus
\begin{equation}
\begin{aligned}
\norm{O_1+O_2}_{\mathrm{HShadow}}=\max_{\sigma}\left(\mathbb{E}_t\sum_b(x_b+y_b)^2\right)^{1/2}.
\end{aligned}
\end{equation}
According to the triangle inequality of coordinate norm, we have
\begin{equation}
\left(\sum_b(x_b+y_b)^2\right)^{1/2}=\abs{\vec{x}+\vec{y}}\le\abs{\vec{x}}+\abs{\vec{y}}=\left(\sum_bx_b^2\right)^{1/2}+\left(\sum_by_b^2\right)^{1/2}.
\end{equation}
According to the property of root function, one can easily prove that $\left(\mathbb{E}_t\sum_b(x_b+y_b)^2\right)^{1/2}\le\left(\mathbb{E}_t\sum_bx_b^2\right)^{1/2}+\left(\mathbb{E}_t\sum_by_b^2\right)^{1/2}$. Therefore, 
\begin{equation}
\begin{aligned}
\norm{O_1+O_2}_{\mathrm{HShadow}}\le&\max_{\sigma}\left[\left(\mathbb{E}_t\sum_bx_b^2\right)^{1/2}+\left(\mathbb{E}_t\sum_by_b^2\right)^{1/2}\right]\\
\le&\max_{\sigma}\left(\mathbb{E}_t\sum_bx_b^2\right)^{1/2}+\max_{\sigma}\left(\mathbb{E}_t\sum_bx_b^2\right)^{1/2}\\
=&\norm{O_1}_{\mathrm{HShadow}}+\norm{O_2}_{\mathrm{HShadow}},
\end{aligned}
\end{equation}
which concludes the triangle inequality. 

According to Eq.~\eqref{eq:H_variance}, when estimating $\tr(O\rho)$ with Hamiltonian shadow, the variance is upper bounded by $\norm{O}_{\mathrm{HShadow}}^2$.
Therefore, adopting the median-of-mean data processing method \cite{Huang2020predicting}, we can bound the sample complexity $K$ of estimating $M$ observables using the Hamiltonian shadow norm    
\begin{equation}
K=\mathcal{O}\left(\mathrm{log}(M)\frac{\max_i\norm{O_i}_{\mathrm{HShadow}}^2}{\epsilon^2}\right),
\end{equation}
where $\epsilon$ is the additive error.

\subsection{Variance Approximation}

It is hard to derive a simple expression for the variance of Hamiltonian shadow. While, we can use some reasonable approximations to give a function that can largely reflect the scaling of variance. According to derivations of original shadow protocol and the Hadamard-based diagonal shadow, we can reasonably assume that the leading term of Hamiltonian shadow variance is
\begin{equation}
\begin{tabular}{c}
    \includegraphics[scale=0.25]{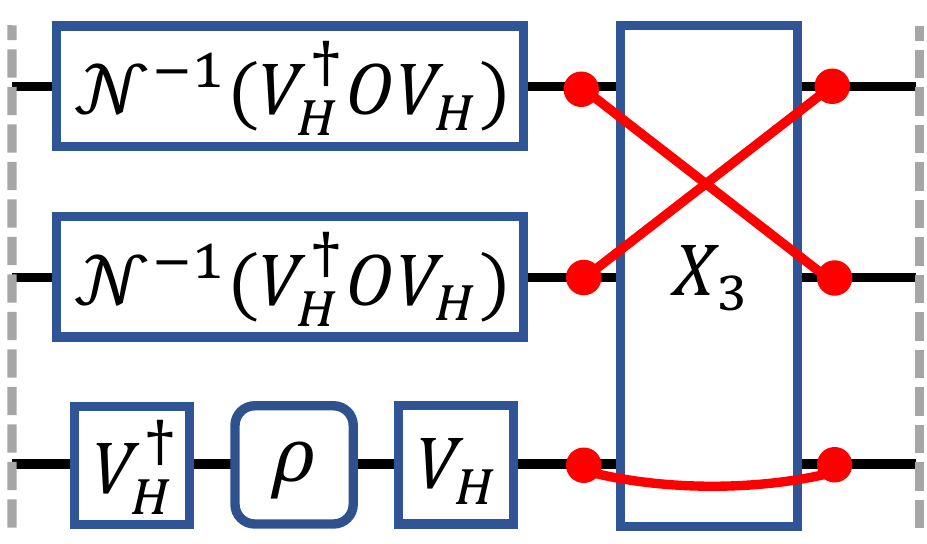}
\end{tabular}
=\Tr\left\{\left(\mathcal{N}^{-1}\otimes\mathcal{N}^{-1}\otimes\mathcal{I}\right)\left[\sum_{i,j,k}(X_3)_{ijk,jik}\ketbra{ijk}{jik}\right]\left[(V_H^\dagger OV_H)^{\otimes 2}\otimes V_H^\dagger \rho V_H\right]\right\},
\end{equation}
where $\mathcal{I}$ represents the identity map. This term is the leading term because it introduces the indices contraction between two observables $O$ while traces over the density matrix $\rho$. Similar terms in variances of the original shadow protocol and the Hadamard diagonal shadow protocol give the term of $\Tr(O^2)$, which is the leading term of those two variances. Now, we will try to simplify this term.

We first divide this term into two parts
\begin{equation}\label{eq:leading_terms}
\begin{aligned}
&
\Tr\left\{\left(\mathcal{N}^{-1}\otimes\mathcal{N}^{-1}\otimes\mathcal{I}\right)\left[\sum_{i,j,k}(X_3)_{ijk,jik}\ketbra{ijk}{jik}\right]\left[(V_H^\dagger OV_H)^{\otimes 2}\otimes V_H^\dagger \rho V_H\right]\right\}\\
=&\sum_{i\neq j}\sum_k(X_3)_{ijk,jik}\Tr\left[\mathcal{N}^{-1}\left(\ketbra{i}{j}\right)V_H^\dagger OV_H\right]\Tr\left[\mathcal{N}^{-1}\left(\ketbra{j}{i}\right)V_H^\dagger OV_H\right]\bra{k}V_H^\dagger \rho V_H\ket{k}\\
+&\sum_{i}\sum_k(X_3)_{iik,iik}\Tr\left[\mathcal{N}^{-1}(\ketbra{i}{i})V_H^\dagger OV_H\right]^2\bra{k}V_H^\dagger \rho V_H\ket{k}.
\end{aligned}
\end{equation}
The first part contains exponentially more terms and is generally exponentially larger than the second part. Thus, we only focus on the first part and expand it. By definition, the element of $X_3$ is
\begin{equation}
(X_3)_{ijk,jik}=\sum_b\bra{ijk}(V_H^\dagger)^{\otimes 3}\ketbra{bbb}{bbb}V_H^{\otimes 3}\ket{jik}=\sum_b|(V_H)_{b,i}|^2|(V_H)_{b,j}|^2|(V_H)_{b,k}|^2.
\end{equation}
Substituting it into the Eq.~\eqref{eq:leading_terms}, we find the first part to be
\begin{equation}
\begin{aligned}
&\sum_{i\neq j}\sum_k\sum_b|(V_H)_{b,i}|^2|(V_H)_{b,j}|^2|(V_H)_{b,k}|^2 (X_H)_{i,j}^{-2}\bra{j}V_H^\dagger OV_H\ket{i}\bra{i}V_H^\dagger OV_H\ket{j}\bra{k}V_H^\dagger\rho V_H\ket{k}\\
=&\sum_{i\neq j}\sum_b|(V_H)_{b,i}|^2|(V_H)_{b,j}|^2 (X_H)_{i,j}^{-2}\bra{j}V_H^\dagger OV_H\ket{i}\bra{i}V_H^\dagger OV_H\ket{j}\sum_k|(V_H)_{b,k}|^2\bra{k}V_H^\dagger\rho V_H\ket{k}.
\end{aligned}
\end{equation}
Here, $\bra{k}V_H^\dagger\rho V_H\ket{k}$ is the diagonal term of $V_H^\dagger\rho V_H$ and $\rho$ is the target quantum state which normally has no relationship with $V_H$. Thus, it is reasonable to assume $\bra{k}V_H^\dagger\rho V_H\ket{k}\approx\frac{1}{d}$ for all $k$. With this approximation, we have $\sum_k|(V_H)_{b,k}|^2\bra{k}V_H^\dagger\rho V_H\ket{k}=\frac{1}{d}$. Then, the first part becomes
\begin{equation}\label{eq:variance_leading}
\begin{aligned}
&\sum_{i\neq j}\sum_b|(V_H)_{b,i}|^2|(V_H)_{b,j}|^2 (X_H)_{i,j}^{-2}\bra{j}V_H^\dagger OV_H\ket{i}\bra{i}V_H^\dagger OV_H\ket{j}\sum_k|(V_H)_{b,k}|^2\bra{k}V_H^\dagger\rho V_H\ket{k}\\
\approx&\frac{1}{d}\sum_{i\neq j}\sum_b|(V_H)_{b,i}|^2|(V_H)_{b,j}|^2 (X_H)_{i,j}^{-2}\bra{j}V_H^\dagger OV_H\ket{i}\bra{i}V_H^\dagger OV_H\ket{j}\\
=&\frac{1}{d}\sum_{i\neq j}(X_H)_{i,j} (X_H)_{i,j}^{-2}\bra{j}V_H^\dagger OV_H\ket{i}\bra{i}V_H^\dagger OV_H\ket{j}\\
=&\frac{1}{d}\sum_{i\neq j}(X_H)_{i,j}^{-1}|\bra{i}V_H^\dagger OV_H\ket{j}|^2,
\end{aligned}
\end{equation}
which is the function of $f(O,V_H)$ we introduced in the main context.

\subsection{Nonlinear Observables}

It is straightforward to use the Hamiltonian shadow to estimate nonlinear quantities, like $\Tr(O\rho^{\otimes 2})$. After $K$ times of experiments, we get a total of $K$ unbiased estimators of $\rho$, labeled as $\{\hat{\rho}_i\}_{i=1}^K$. Using these estimators, we can construct the unbiased estimator of $\Tr(O\rho^{\otimes 2})$ as
\begin{equation}
\hat{o}_2=\frac{1}{K(K-1)}\sum_{i\neq j}\Tr\left[O\left(\hat{\rho}_i\otimes\hat{\rho}_j\right)\right].
\end{equation}
The variance is
\begin{equation}
\mathrm{Var}(\hat{o}_2)=\frac{1}{K(K-1)}\mathrm{Var}\left(\Tr\left[O\left(\hat{\rho}\otimes\hat{\rho}^\prime\right)\right]\right)=\frac{1}{K(K-1)}\left\{\mathbb{E}\Tr\left[O\left(\hat{\rho}\otimes\hat{\rho}^\prime\right)\right]^2-\Tr(O\rho^{\otimes 2})^2\right\},
\end{equation}
where $\hat{\rho}$ and $\hat{\rho}^\prime$ represent two independent estimators constructed using Hamiltonian shadow.

When $O=S$, $\tr(O\rho^{\otimes 2})$ gives the value of purity, $\tr(\rho^2)$, which is important for our numerical demonstration in main context. So, we start from $O=S$ and derive the approximate variance of it, which can be easily extended to general nonlinear observables. Substituting the form of Hamiltonian shadow estimator, we have
\begin{equation}
\mathbb{E}\Tr\left[S\left(\hat{\rho}\otimes\hat{\rho}^\prime\right)\right]^2=\mathbb{E}_{t,t^\prime}\sum_{b,b^\prime}\bra{b}V_H\Lambda_t V_H^\dagger\rho V_H\overline{\Lambda}_tV_H^\dagger\ket{b}\bra{b^\prime}V_H\Lambda_{t^\prime} V_H^\dagger\rho V_H\overline{\Lambda}_{t^\prime}V_H^\dagger\ket{b^\prime}\Tr\left[S\left(\hat{\rho}\otimes\hat{\rho}^\prime\right)\right]^2,
\end{equation}
which can be calculated in a graphical way
\begin{equation}
\begin{aligned}
\mathbb{E}\Tr\left[S\left(\hat{\rho}\otimes\hat{\rho}^\prime\right)\right]^2=&\mathbb{E}_{t,t^\prime}\left[\sum_{b,b^\prime}\begin{tabular}{c}
    \includegraphics[scale=0.25]{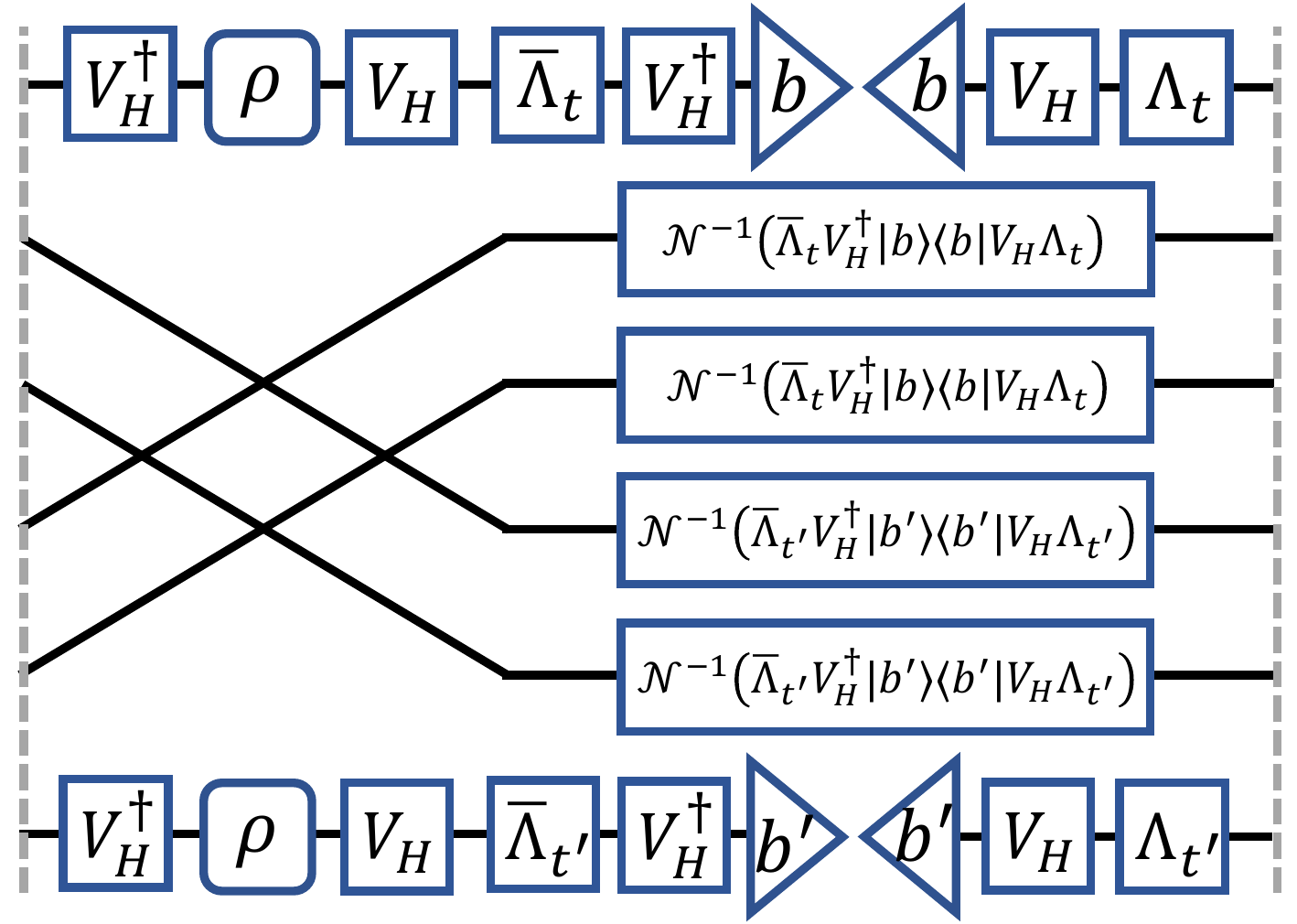}
\end{tabular}\right]
=\begin{tabular}{c}
    \includegraphics[scale=0.25]{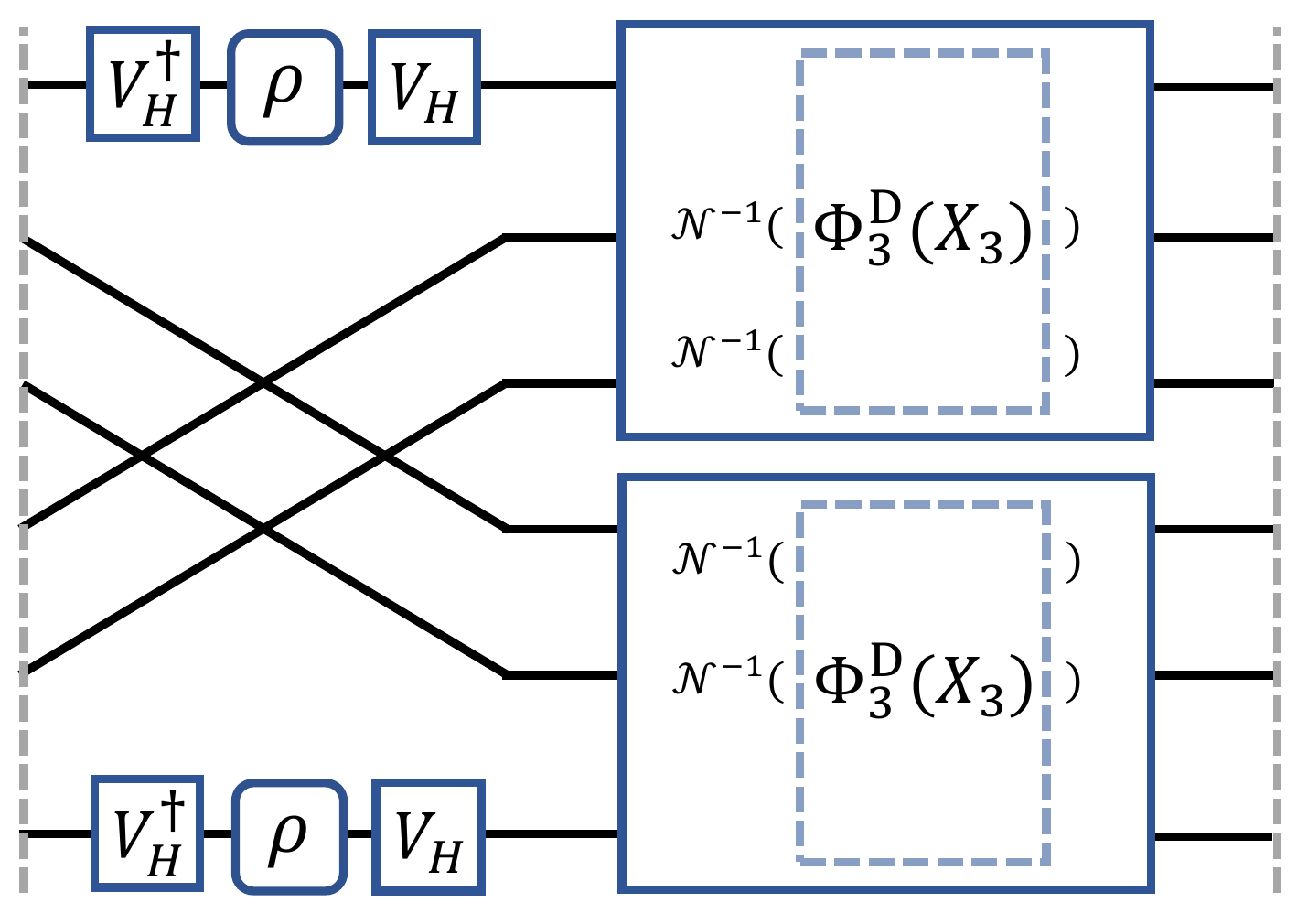}
\end{tabular}\\
=&\Tr\left\{\Tr_3\left\{\mathcal{N}^{-1}\otimes\mathcal{N}^{-1}\otimes\mathcal{I}\left[\Phi_3^{\mathrm{D}}(X_3)\right] \left(\mathbb{I}^{\otimes 2}\otimes V_H^\dagger\rho V_H\right)\right\}^2\right\}.
\end{aligned}
\end{equation}
Adopting the same approximation for linear observables, the matrix $\Tr_3\left\{\mathcal{N}^{-1}\otimes\mathcal{N}^{-1}\otimes\mathcal{I}\left[\Phi_3^{\mathrm{D}}(X_3)\right] \left(\mathbb{I}^{\otimes 2}\otimes V_H^\dagger\rho V_H\right)\right\}$ can be approximated by $\frac{1}{d}\sum_{i\neq j}\frac{1}{(X_H)_{i,j}}\ketbra{ij}{ji}$. Substituting this approximation, we have
\begin{equation}
\begin{aligned}
\mathbb{E}\Tr\left[S\left(\hat{\rho}\otimes\hat{\rho}^\prime\right)\right]^2\sim&
\begin{tabular}{c}
    \includegraphics[scale=0.25]{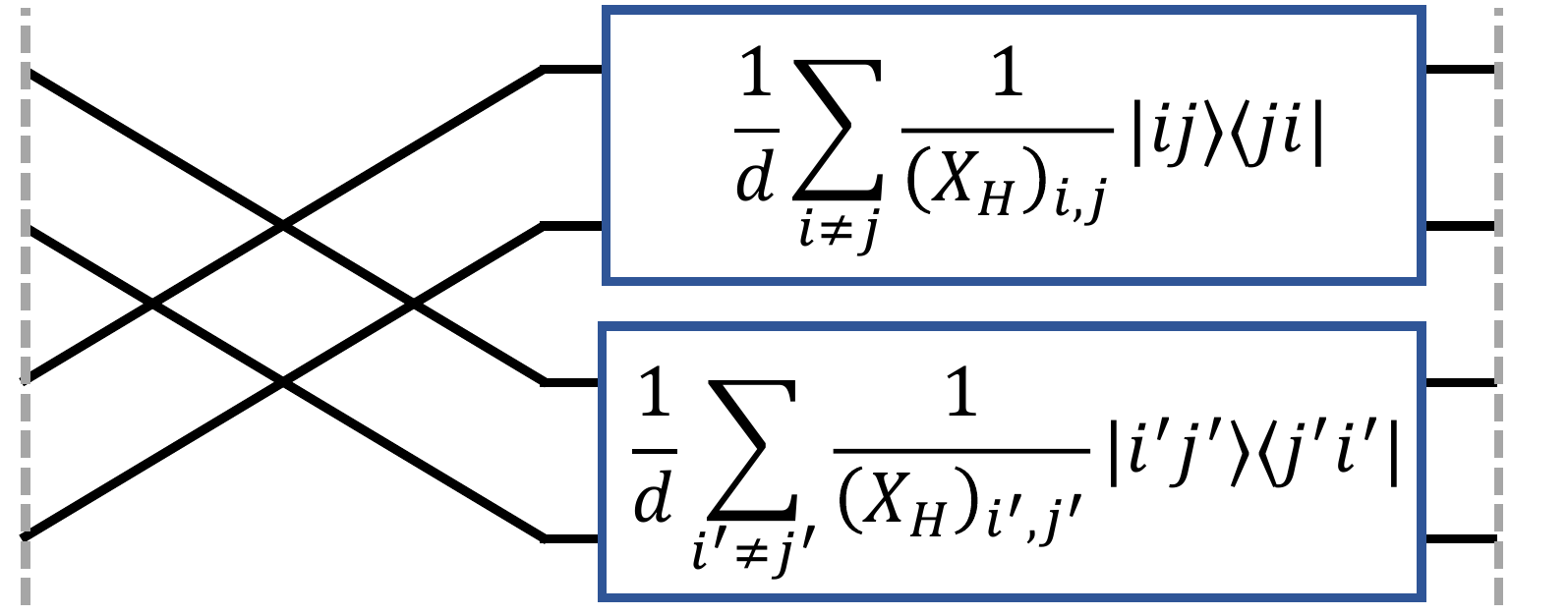}
\end{tabular}\\
=&\frac{1}{d^2}\sum_{i\neq j}\sum_{i^\prime\neq j^\prime}\frac{1}{(X_H)_{i,j}(X_H)_{i^\prime,j^\prime}}\Tr(S\ketbra{ii^\prime}{jj^\prime})\Tr(S\ketbra{jj^\prime}{ii^\prime})\\
=&\frac{1}{d^2}\sum_{i\neq j}\frac{1}{(X_H)_{i,j}^2}.
\end{aligned}
\end{equation}
For a general nonlinear observable $O$, the exact and approximate variances can be derived in a similar way,
\begin{equation}
\begin{aligned}
\mathbb{E}\Tr\left[O\left(\hat{\rho}\otimes\hat{\rho}^\prime\right)\right]^2=
&\Tr\left\{V_H^{\dagger\otimes 4}\left(S_{23}O^{\otimes 2}S_{23}\right)V_H^{\otimes 4}\Tr_3\left\{\mathcal{N}^{-1}\otimes\mathcal{N}^{-1}\otimes\mathcal{I}\left[\Phi_3^{\mathrm{D}}(X_3)\right] \left(\mathbb{I}^{\otimes 2}\otimes V_H^\dagger\rho V_H\right)\right\}^{\otimes 2}\right\}\\
\sim&\frac{1}{d^2}\sum_{i\neq j}\sum_{i^\prime\neq j^\prime}\frac{1}{(X_H)_{i,j}(X_H)_{i^\prime,j^\prime}}\abs{\bra{jj^\prime}(V_H^{\dagger})^{\otimes 2}OV_H^{\otimes 2}\ket{ii^\prime}}^2,
\end{aligned}
\end{equation}
where $S_{23}$ is the swap operator acting on the second party of the first $O$ and the first party of the second $O$.

We also use the numerical experiment to show the accuracy of our approximation, as shown in Fig.~\ref{fig:purity_theta}. Besides, it is also shown that the performance of Hamiltonian shadow can approach the global shadow with a proper Hamiltonian, even in predicting nonlinear observables.
\begin{figure}
\centering
\includegraphics[width=0.4\textwidth]{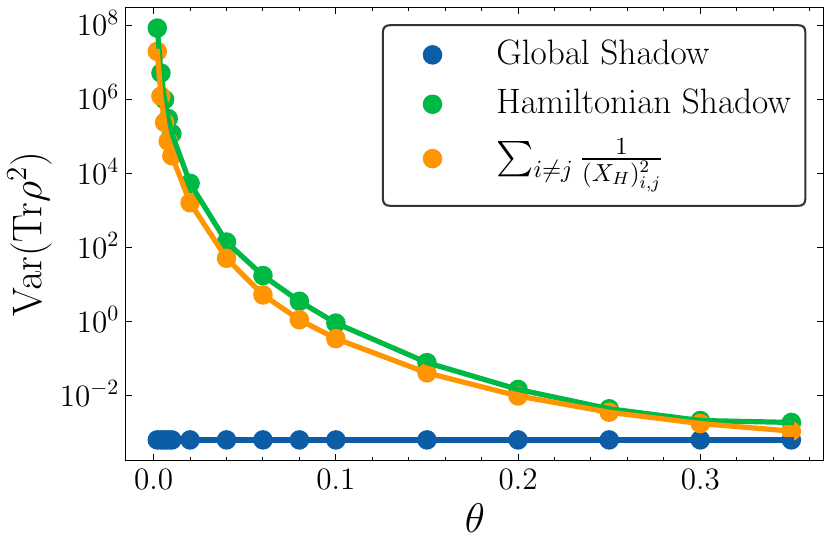}
\caption{The scaling of variance for estimating state purity. Similar with the setting in main context, we set $V_H=e^{iP\theta}$, where $P$ is a random Hermitian matrix and $\theta$ quantifies the difference between $V_H$ and a diagonal unitary. The value of every point is chosen by taking the median of ten independently sampled $P$. The target state is set to be a four-qubit GHZ state and the experiment times is set to be $K=1000$. }
\label{fig:purity_theta}
\end{figure}

\section{Detection Capability}\label{sec:capability}
After systematically demonstrating the effectiveness of Hamiltonian shadow protocol and its performances, we now discuss several scenarios where the Hamiltonian shadow protocol does not work. In Appendix \ref{sec:shadow_map}, we have shown how to determine Choi matrices of Hamiltonian shadow map in different scenarios, including those involving limited evolution time and existence of degeneracy. In principle, we can directly employ these Choi matrices and mathematically determine if they correspond to invertible maps. In this section, we analyze the reversibility of Hamiltonian shadow map from the perspective of state learning instead of linear algebra. These analysis not only help us to choose appropriate Hamiltonians for Hamiltonian shadow, but also deepen our understanding for Hamiltonian shadow protocol. We discuss four main factors that affect the reversibility of Hamiltonian shadow map, the eigenvalues and eigenstates of Hamiltonian $H$, the incomplete measurement, and the target state.

\subsection{Eigenvalues}\label{subsec:eigenvalues}
Degeneracy is the first property of real-world Hamiltonian that would affect the reversibility of Hamiltonian shadow map, which makes several eigenvalues of $e^{-iHt}$ correlated. 
We have introduced the integral rule of random diagonal unitary with degeneracy by Eq.~\eqref{eq:degeneracy} and Eq.~\eqref{eq:second_degeneracy}. 
In these scenarios, even with infinite time scale $\Delta t$, the shadow map $\mathcal{M}_H$, or equivalently, $\mathcal{N}$, will deviate from the case of ideal random diagonal unitaries, as $\Phi_2^{\Delta t}(X_2)\neq\Phi_2^{\mathrm{D}}(X_2)$ for $\Delta t\to\infty$. 
This not only changes the Choi matrix of Hamiltonian shadow map, but may even make it irreversible. 
Based on the logic of state learning, we can formally prove that:

\begin{proposition}\label{prop:degeneracy}
If $H$ has at least two same eigenvalues, the Hamiltonian shadow protocol is not tomography-complete.
\end{proposition}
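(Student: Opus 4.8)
The plan is to argue at the level of state learning rather than linear algebra: I will exhibit a nonzero Hermitian operator in the kernel of $\mathcal{M}_H$, equivalently two states generating identical statistics for every evolution time and outcome, which immediately rules out tomography-completeness. The starting observation is that the only data the protocol ever accesses are the outcome probabilities $p(b|t)=\bra{b}e^{-iHt}\rho e^{iHt}\ket{b}$. Writing $e^{-iHt}=V_H\Lambda_tV_H^\dagger$ and letting $\Delta$ be a candidate Hermitian perturbation expressed in the eigenbasis of $H$, the induced change in the signal is
\begin{equation}
p(b|t;\Delta)=\bra{b}V_H\Lambda_t\Delta\overline{\Lambda}_tV_H^\dagger\ket{b}=\sum_{k,l}(V_H)_{b,k}\overline{(V_H)_{b,l}}\,\Delta_{kl}\,e^{-i(E_k-E_l)t},
\end{equation}
so each element $\Delta_{kl}$ enters the observable signal only through the gap frequency $E_k-E_l$.

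Next I would exploit the degeneracy. Let $E_p=E_q$ with $p\neq q$, and restrict the support of $\Delta$ to the diagonal together with the single pair $(p,q),(q,p)$, setting all other entries to zero. Since every index pair in this support has $E_k-E_l=0$, the function $p(b|t;\Delta)$ collapses to a single $t$-independent real number $M^{(b)}:=\sum_k|(V_H)_{b,k}|^2\Delta_{kk}+2\,\mathrm{Re}\!\left[(V_H)_{b,p}\overline{(V_H)_{b,q}}\Delta_{pq}\right]$ for each $b$. The perturbation is therefore automatically invisible at every nonzero frequency, with no assumption whatsoever on the remainder of the spectrum.

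The crux is then a counting step. The restricted $\Delta$ carries $d+2$ real parameters ($d$ real diagonal entries plus the real and imaginary parts of $\Delta_{pq}$), while invisibility demands only the $d$ homogeneous real equations $M^{(b)}=0$. Since $d+2>d$ the system has a nonzero solution, and imposing tracelessness $\sum_k\Delta_{kk}=0$ still leaves $d+1<d+2$ constraints, so a nonzero traceless Hermitian $\Delta$ with $p(b|t;\Delta)\equiv 0$ exists. Setting $X=V_H\Delta V_H^\dagger$ produces a nonzero Hermitian operator annihilated by $\mathcal{M}_H$; hence $\mathcal{M}_H$ is not invertible and $\rho$ is indistinguishable from $\rho+\epsilon X$, establishing that the protocol is not tomography-complete.

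The main obstacle, and the conceptual heart of the argument, is recognizing that degeneracy overloads the zero-frequency channel: the static part of the signal must simultaneously encode all $d$ diagonal populations and the degenerate coherence $\Delta_{pq}$, producing strictly more unknowns than equations. The one technical subtlety is to confine the support of the perturbation so that it is trivially invisible at every nonzero gap; with that confinement the conclusion follows from elementary dimension counting and needs neither Fourier inversion, the non-resonance condition, nor any random-diagonal-unitary integral.
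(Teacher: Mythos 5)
Your proposal is correct and follows essentially the same route as the paper: both arguments hinge on the observation that the degenerate pair makes the coherence $\Delta_{pq}$ (the paper's $(\rho_H)_{a_1,a_2}$) enter the outcome probabilities with a $t$-independent coefficient, so the static channel must determine $d+2$ real parameters from only $d$ outcomes. The only difference is presentational: you run the dimension count in the kernel direction and exhibit an explicitly invisible traceless perturbation, which is a slightly more airtight packaging of the paper's ``$d$ equations cannot determine $d+2$ unknowns'' statement, but it is the same idea.
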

\begin{proof}
Without loss of generality, we assume $e^{-iHt}=V_H\Lambda_tV_H^\dagger$ with two same eigenvalues $\Lambda_{a_1,a_1}=\Lambda_{a_2,a_2}$. As $V_H$ is a fixed unitary, completely learning $\rho$ is equivalent with completely learning $\rho_H=V_H^\dagger\rho V_H$. As the estimating probability $\bra{b}e^{-iHt}\rho e^{iHt}\ket{b}=\bra{b}V_H\Lambda_t\rho_H\overline{\Lambda}_tV_H^\dagger\ket{b}$, we need to decide whether measurements in basis of $\{\overline{\Lambda}_tV_H^\dagger\ketbra{b}{b}V_H\Lambda_t\}_{t,b}$ can extract full information of a state. Assuming $\Lambda_t=\mathrm{diag}(e^{-iE_1t},\cdots,e^{-iE_dt})$, the matrix form of $\Lambda_t\rho_H\overline{\Lambda}_t$ is 
\begin{equation}
\begin{aligned}
\Lambda_t\rho_H\overline{\Lambda}_t=\sum_{j,k}(\rho_{H})_{j,k}e^{-i(E_j-E_k)t}\ketbra{j}{k}.
\end{aligned}
\end{equation}
With the condition of $\Lambda_{a_1,a_1}=\Lambda_{a_2,a_2}$, in addition to the diagonal terms, $(\Lambda_t\rho_H\overline{\Lambda}_t)_{a_1,a_2}$ and $(\Lambda_t\rho_H\overline{\Lambda}_t)_{a_2,a_1}$ are also independent with the evolution time $t$. As $V_H$ is a fixed unitary, the probability $\bra{b}V_H\Lambda_t\rho_H\overline{\Lambda}_tV_H^\dagger\ket{b}$ is a linear function of all elements of $\rho_H$. While, coefficients of diagonal terms of $\rho_H$, $(\rho_H)_{a_1,a_2}$, and $(\rho_H)_{a_2,a_1}$ are independent with $t$. Thus, varying evolution time will not provide more independent equations to determine these terms. Using $d$ independent equations to estimate a total of $d+2$ unknown parameters is prohibited, so one cannot learn $\rho$ completely and the Hamiltonian shadow protocol is not tomography-complete.

\end{proof}

Notice that, even with degeneracy, other elements of $\Lambda_t\rho_H\overline{\Lambda}_t$ that depends on $t$ can still be estimated. This is because one could measure in many different evolution time $t$ and acquire many independent equations to determine these elements. Thus, although the Hamiltonian shadow protocol is not tomography-complete in this case, we can still estimate certain observables with special forms. Besides, if more than one degenerate Hamiltonians with different $V_H$ are accessible, it is possible to recover the tomography-completeness of the Hamiltonian shadow.

We can also use the proof to understand why a Hamiltonian without any degeneracy can be used to completely learn a state. In this scenario, all off-diagonal terms of $\Lambda_t\rho_H\overline{\Lambda}_t$ depend on evolution time $t$ with different coefficients. Subsequently, the diagonal elements of $V_H\Lambda_t\rho_H\overline{\Lambda}_tV_H^\dagger$ linearly depend on these off-diagonal terms with different time-dependent coefficients. Thus, one could estimate all these off-diagonal terms by measuring $V_H\Lambda_t\rho_H\overline{\Lambda}_tV_H^\dagger$ in computational basis for different evolution time, as these measurements provide sufficiently many independent equations to determine these terms. At the same time, the number of time-independent diagonal terms of $\Lambda_t\rho_H\overline{\Lambda}_t$ is no more than the number of measurement results. As a result, all elements of $\rho_H$ can be estimated and the Hamiltonian shadow protocol is tomography-complete.

As introduced previously, the specific form of Hamiltonian shadow map is based on the second-order integral of random diagonal unitaries. As shown in Eq.~\eqref{eq:second_degeneracy}, the second-order degeneracy can also affect the second-order integral, and therefore change the Choi matrix of Hamiltonian shadow map and make the $X_H$ matrix cannot fully describe $\mathcal{N}$. Thus, a natural question is, does the second-order degeneracy affect the reversibility of the shadow map? We prove that this is not the case:

\begin{proposition}\label{prop:resonance}
Given a non-degenerate Hamiltonian with second order degeneracy, the corresponding Hamiltonian shadow protocol is tomography-complete. Here by second-order degeneracy, we mean that some eigen-energies of $H$ satisfy $E_{a_1}+E_{a_2}=E_{b_1}+E_{b_2}$ with $a_1\neq b_1$, $a_1\neq b_2$, $a_2\neq b_1$, $a_2\neq b_2$, and $a_1\neq a_2$. 
\end{proposition}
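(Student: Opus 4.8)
The plan is to follow the state-learning logic of the proof of Proposition~\ref{prop:degeneracy}, but now tracking how the resonance merely redistributes the time-dependence of the measured probabilities instead of killing it. Writing $\rho_H = V_H^\dagger \rho V_H$, completely learning $\rho$ is equivalent to completely learning $\rho_H$, and every accessible probability is
\begin{equation}
p_b(t) = \bra{b} V_H \Lambda_t \rho_H \overline{\Lambda}_t V_H^\dagger \ket{b} = \sum_{j,k} (V_H)_{b,j}\,\overline{(V_H)_{b,k}}\,(\rho_H)_{j,k}\,e^{-i(E_j-E_k)t}.
\end{equation}
First I would organize this sum by the oscillation frequencies $\omega_{jk}=E_j-E_k$. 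Sampling $t$ over a long window lets us isolate, for each distinct frequency $\omega$ and each outcome $b$, the coefficient $c_b(\omega)=\sum_{(j,k):\,\omega_{jk}=\omega}(V_H)_{b,j}\overline{(V_H)_{b,k}}(\rho_H)_{j,k}$, so tomography-completeness is equivalent to the statement that $\{c_b(\omega)\}_{b,\omega}$ determines all $d^2$ entries of $\rho_H$.

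Next I would treat the three types of frequency groups separately. Because $H$ has no first-order degeneracy, $\omega_{jk}=0$ only for $j=k$, so the zero-frequency data are $c_b(0)=\sum_j (V_H^{\mathrm{sq}})_{b,j}(\rho_H)_{j,j}$, and the assumed invertibility of $V_H^{\mathrm{sq}}$ (equivalently of $X_H=(V_H^{\mathrm{sq}})^T V_H^{\mathrm{sq}}$) recovers every diagonal entry. For an off-diagonal pair $(j,k)$ whose frequency is shared by no other pair, $c_b(\omega_{jk})=(V_H)_{b,j}\overline{(V_H)_{b,k}}(\rho_H)_{j,k}$, and the hypothesis that all off-diagonal elements of $X_H$ are nonzero guarantees some $b$ with $(V_H)_{b,j}\overline{(V_H)_{b,k}}\neq 0$, so $(\rho_H)_{j,k}$ is recovered. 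The only genuinely new feature is the resonance: from $E_{a_1}+E_{a_2}=E_{b_1}+E_{b_2}$ one gets coincidences such as $\omega_{a_1 b_1}=\omega_{b_2 a_2}$, so that a single frequency $\omega$ couples two unknowns,
\begin{equation}
c_b(\omega) = (V_H)_{b,a_1}\overline{(V_H)_{b,b_1}}\,(\rho_H)_{a_1,b_1} + (V_H)_{b,b_2}\overline{(V_H)_{b,a_2}}\,(\rho_H)_{b_2,a_2},
\end{equation}
together with its three partner coincidences and their conjugates.

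To finish, I would show each such $2\times 2$ subsystem is solvable. Viewed across the $d$ outcomes, $c_\bullet(\omega)$ is the image of $((\rho_H)_{a_1,b_1},(\rho_H)_{b_2,a_2})$ under the $d\times 2$ matrix with columns $u_b=(V_H)_{b,a_1}\overline{(V_H)_{b,b_1}}$ and $v_b=(V_H)_{b,b_2}\overline{(V_H)_{b,a_2}}$. Since a second-order resonance forces four distinct eigenvalues, $d\geq 4>2$, so there are far more equations than unknowns and it suffices to certify that $u$ and $v$ are linearly independent. I expect this independence to be the main obstacle: it is the only place where the resonance could spoil invertibility, and it does not follow formally from invertibility of $X_H$ alone, since the proportionality $v_b=\lambda u_b$ for all $b$ is a nongeneric, measure-zero constraint on the columns $a_1,a_2,b_1,b_2$ of $V_H$ that is nonetheless compatible with $X_H$ being invertible. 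I would therefore argue that away from this fine-tuned locus every resonant subsystem has full rank, so all entries of $\rho_H$ are determined. The contrast with Proposition~\ref{prop:degeneracy} is clean: a first-order degeneracy pins time-\emph{independent} entries that no choice of $t$ can separate, whereas a second-order degeneracy merely relabels which entries share a common \emph{nonzero} frequency, leaving enough outcome-resolved equations to disentangle them.
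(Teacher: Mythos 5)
Your proposal takes essentially the same route as the paper's proof: resolve the measured probabilities $p_b(t)$ by oscillation frequency, recover the diagonal of $\rho_H$ from the zero-frequency component via invertibility of $V_H^{\mathrm{sq}}$, recover non-resonant off-diagonal entries from their unique frequencies using $(X_H)_{i,j}\neq 0$, and reduce the resonant entries to small linear systems indexed by the measurement outcome. The one step you flag as the real obstacle --- certifying that the coefficient vectors $u_b=(V_H)_{b,a_1}\overline{(V_H)_{b,b_1}}$ and $v_b=(V_H)_{b,b_2}\overline{(V_H)_{b,a_2}}$ are linearly independent --- is precisely the step the paper also leaves unproven: its proof simply asserts that ``solving these equations'' for the $\cos$/$\sin$ amplitudes across outcomes yields $r_1,r_2,s_1,s_2$, and then supports the claim with a numerical check that the permuted Choi matrix $R(\mathcal{C})$ built from the second-order-degeneracy integral is invertible. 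So your argument matches the paper's in substance and is more candid that a genericity (non-fine-tuned $V_H$) assumption enters at that point; neither your proposal nor the paper closes that rank condition analytically.
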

\begin{proof}
With second-order degeneracy, the evolved state in this situation is 
\begin{equation}
\begin{aligned}
\Lambda_t\rho_H\overline{\Lambda}_t=&\sum_{j,k}(\rho_H)_{j,k}e^{-i(E_j-E_k)t}\ketbra{j}{k}\\
=&\cdots + \left[(\rho_H)_{a_1,b_1}e^{-i(E_{a_1}-E_{b_1})t}\ketbra{a_1}{b_1}+(\rho_H)_{b_2,a_2}e^{-i(E_{b_2}-E_{a_2})t}\ketbra{b_2}{a_2} + h.c\right]\\
=&\cdots + \left\{e^{-i(E_{a_1}-E_{b_1})t}\left[(\rho_H)_{a_1,b_1}\ketbra{a_1}{b_1}+(\rho_H)_{b_2,a_2}\ketbra{b_2}{a_2}\right] + h.c\right\},
\end{aligned}
\end{equation}
where we use the condition that $E_{a_1}-E_{b_1}=E_{b_2}-E_{a_2}$. Thus, coefficients of $(\rho_H)_{a_1,b_1}$ and $(\rho_H)_{b_2,a_2}$ are always same at any evolution time $t$. This also holds true for coefficients of $(\rho_H)_{a_1,b_2}$ and $(\rho_H)_{b_1,a_2}$. Similar to the case of non-degenerate Hamiltonian, in the current setting, other off-diagonal elements except for $(\rho_H)_{a_1,b_1}$, $(\rho_H)_{b2,a2}$, $(\rho_H)_{a_1,b_2}$, and $(\rho_H)_{b_1,a_2}$ and all diagonal elements of $\Lambda_t\rho_H\overline{\Lambda}_t$ can still be estimated. We will prove that these four elements can also be estimated.

We first consider the estimation of $(\rho_H)_{a_1,b_1}$ and $(\rho_H)_{b_2,a_2}$. Assuming $(\rho_H)_{a_1,b_1}=r_1+is_1$ and $(\rho_H)_{a_2,b_2}=r_2+is_2$, the diagonal element of $V_H\Lambda_t\rho_H\overline{\Lambda}_t V_H^\dagger$ can be written as
\begin{equation}
\bra{j}V_H\Lambda_t\rho_H\overline{\Lambda}_tV_H^\dagger\ket{j}=\cdots+\cos\left[(E_{a_1}-E_{b_1})t\right]\left(\alpha^j_{1}r_1+\alpha^j_2r_2+\alpha^j_3s_1+\alpha^j_4s_2\right)+\sin\left[(E_{a_1}-E_{b_1})t\right]\left(\beta^j_1r_1+\beta^j_2r_2+\beta^j_3r_3+\beta^j_4r_4\right),
\end{equation}
where $\alpha$ and $\beta$ are coefficients determined solely by $V_H$. So, by collecting measurement results  $\bra{j}V_H\Lambda_t\rho_H\overline{\Lambda}_tV_H^\dagger\ket{j}$ with different time $t$ and $\ket{j}$ can help us to estimate values of $\left(\alpha^j_1r_1+\alpha^j_2r_2+\alpha^j_3s_1+\alpha^j_4s_2\right)$ and $\left(\beta^j_1r_1+\beta^j_2r_2+\beta^j_3r_3+\beta^j_4r_4\right)$ for many different $\alpha$ and $\beta$. Solving these equations helps to get values of $r_1$, $r_2$, $s_1$, and $s_2$. Following the same logic, $(\rho_H)_{a_1,b_2}$, and $(\rho_H)_{b_1,a_2}$ can also be estimated. Thus, the Hamiltonian shadow is still tomography-complete.
\end{proof}

Denote $R(\mathcal{C})$ to be a matrix constructed by permuting indices of Choi matrix, $[R(\mathcal{C})]_{ij,kl}=\mathcal{C}_{lj,ik}$. According to the concatenating rule of Choi matrices, if $R(\mathcal{C})$ is an invertible matrix, the corresponding map is invertible, and vise versa. We numerically substantiated Proposition \ref{prop:degeneracy} and Proposition \ref{prop:resonance} by constructing Choi matrices of Hamiltonian shadow maps using Eq.~\eqref{eq:degeneracy} and Eq.~\eqref{eq:second_degeneracy} and the matrix $R(\mathcal{C})$. We verified that the matrix $R(\mathcal{C})$ is irreversible for first-order degeneracy, while invertible for second-order case.


Our analysis is based on the most fundamental perspective, the state learning task, which can also be used to understand why single-patch Hamiltonians used in local Hamiltonian shadow protocol, Fig.~\ref{fig:Hamiltonian}(b), need to be independent. If the evolution unitary is $U=\bigotimes_{p=1}^Ne^{-iH_pt}$, the corresponding Hamiltonian is $H=\sum_{p=1}^NH_p\otimes\mathbb{I}_{[N]-p}$, where $H_p$ only acts on patch $p$ and $\mathbb{I}_{[N]-p}$ is the identity operator acting on other patches. If two Hamiltonians are same, $H_p=H_{p^\prime}$, $H$ will be a degenerate Hamiltonian. According to Proposition \ref{prop:degeneracy}, such Hamiltonian cannot be used to completely learn a state. If one can set different evolution time for different patches, $U=\bigotimes_{p=1}^Ne^{-iH_pt_p}$, this is equivalent with changing the Hamiltonian to $H=\sum_{p=1}^N\frac{t_p}{t}H_p\otimes\mathbb{I}_{[N]-p}$. By randomly setting $t_p$, the Hamiltonian can lose its degeneracy.

\subsection{Incomplete Measurement}\label{subsec:limit_measurement}
For some practical analog quantum systems, addressing single particle may not be a easy task. It might be more feasible for them to measure some global properties like the total $z$-direction spin and the parity. So, it is important to decide whether the Hamiltonian shadow protocol is tomography-complete with incomplete measurement. While, a direct corollary from the analysis in the previous section is that an incomplete measurement will result in a tomography-incomplete Hamiltonian shadow map.

\begin{proposition}
In the final stage of the Hamiltonian shadow protocol, if the $N$-qubit measurement cannot give $2^N-1$ independent measurement results, the Hamiltonian shadow protocol is not tomography-complete.
\end{proposition}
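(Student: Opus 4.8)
The plan is to follow the same state-learning logic used in Proposition~\ref{prop:degeneracy}, reducing the question to a dimension count on the diagonal entries of the state. First I would pass to the eigenbasis frame: since $V_H$ is a fixed known unitary, learning $\rho$ is equivalent to learning $\rho_H = V_H^\dagger \rho V_H$, and for a measurement described by operators $\{P_k\}$ the outcome probabilities are $q_k(t) = \Tr\!\left(P_k V_H \Lambda_t \rho_H \overline{\Lambda}_t V_H^\dagger\right) = \Tr\!\left(\tilde{P}_k \Lambda_t \rho_H \overline{\Lambda}_t\right)$ with $\tilde{P}_k = V_H^\dagger P_k V_H$. The key structural observation, exactly as in the degeneracy analysis, is that $\Lambda_t$ is diagonal and therefore commutes with the diagonal part of $\rho_H$; hence the diagonal entries $(\rho_H)_{j,j}$ enter $\Lambda_t \rho_H \overline{\Lambda}_t$ in a strictly time-independent way, while every off-diagonal entry $(\rho_H)_{j,l}$ with $j\neq l$ is carried by the oscillating phase $e^{-i(E_j-E_l)t}$.

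Second, I would isolate the information about the diagonal. Consider two candidate states $\rho_H$ and $\rho_H'$ that share all off-diagonal entries but differ on the diagonal by a trace-zero vector $\Delta\vec{\rho} = \mathrm{diag}(\rho_H) - \mathrm{diag}(\rho_H')$. Because $\Delta := \rho_H - \rho_H'$ is diagonal it satisfies $\Lambda_t \Delta \overline{\Lambda}_t = \Delta$, so the induced difference in the statistics is itself time-independent, $q_k(t) - q_k'(t) = \sum_j (\tilde{P}_k)_{j,j}\,\Delta\rho_j =: \vec{P}_k\cdot\Delta\vec{\rho}$, where $\vec{P}_k$ is the diagonal of $\tilde{P}_k$ in the eigenbasis. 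This is the crucial point: no choice of evolution times can separate such a pair, since varying $t$ only probes the oscillating off-diagonal content, which the two states share. Thus the diagonal information is accessible only through the DC functionals $\Delta\vec{\rho}\mapsto\vec{P}_k\cdot\Delta\vec{\rho}$.

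Third comes the counting. The trace-zero diagonal perturbations form a space of dimension $2^N-1$. Any linear relation among the probability functions $\{q_k\}$ descends, by taking the time-independent component, to a relation among their DC functionals $\{\vec{P}_k\}$, so the number of linearly independent $\vec{P}_k$ is at most the number of independent measurement results. By hypothesis this number is strictly less than $2^N-1$, so the restrictions of $\{\vec{P}_k\}$ to the $(2^N-1)$-dimensional trace-zero diagonal space span at most a $(2^N-2)$-dimensional subspace of its dual and must possess a nonzero common kernel vector $\Delta\vec{\rho}$. Choosing $\rho_H$ in the interior of the state space and $\rho_H' = \rho_H - \mathrm{diag}(\Delta\vec{\rho})$ with $\Delta\vec{\rho}$ scaled small then produces two distinct valid density matrices with $q_k(t)=q_k'(t)$ for all $k$ and all $t$, so the data-generating map is not injective and the protocol is not tomography-complete.

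The main obstacle I anticipate is not the counting but making airtight the claim that time variation genuinely cannot recover the missing diagonal information. This rests entirely on the commutation of $\Lambda_t$ with diagonal matrices, which forces the diagonal signal to be a pure DC term; I would present this as the careful analogue of the time-independence argument in Proposition~\ref{prop:degeneracy}. I would also make explicit the mild bookkeeping that equates ``number of independent measurement results'' with the dimension available to the DC functionals, so that the threshold $2^N-1$ lines up precisely with the $2^N-1$ free diagonal parameters that a complete computational-basis measurement just manages to fix.
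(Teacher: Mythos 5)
Your proposal is correct and follows essentially the same route as the paper's proof: the diagonal of $\rho_H$ is invariant under conjugation by $\Lambda_t$, so its $2^N-1$ free parameters can only be accessed through the time-independent (DC) part of the statistics, and fewer than $2^N-1$ independent measurement results cannot pin them down. The paper states this as a three-sentence dimension count, whereas you additionally make the counting rigorous and exhibit an explicit indistinguishable pair via a common-kernel perturbation, which is a welcome strengthening of the same argument rather than a different approach.
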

\begin{proof}
The proof is similar with the proof of Proposition \ref{prop:degeneracy}. There is a total of $2^N$ time-independent elements in $\Lambda_t\rho_H \Lambda_t^\dagger$. Considering the normalization condition, there exists a total of $2^N-1$ independent time-independent parameters. To estimate these elements by applying measurements on $V_H\Lambda_t\rho_H \Lambda_t^\dagger V_H^\dagger$, one needs $2^N-1$ independent measurement results.
\end{proof}

It is worth mentioning that, although we cannot estimate diagonal elements of $\rho_H$ in the case of incomplete measurement, it is still feasible to estimate all off-diagonal elements of $\rho_H$, as they are all dependent on the evolution time. Besides, if we adopt the original shadow protocol instead of the Hamiltonian shadow, the complete estimation of target state becomes possible. As in this case, all the elements of $U\rho U^\dagger$ are dependent on the unitary $U$.

\subsection{Eigenstates}\label{subsec:eigenstates}

In addition to eigenvalues of $H$, the eigenstates of $H$, or equivalently $V_H$, can also affect the reversibility of the Hamiltonian shadow map. As shown in Appendix \ref{sec:shadow_map}, without any degeneracy, the reversibility of Hamiltonian shadow map is determined by $X_H$. Specifically, the Hamiltonian shadow map is invertible if and only if $X_H$ is an invertible matrix and $(X_H)_{i,j}\neq 0$ for all $i\neq j$. These conditions appear to have limited physical implications. We show a specific scenario, where the Hamiltonian shadow map is not invertible caused by its eigenstates.

\begin{proposition}\label{theorem:computational_eigenstates}
If the Hamiltonian $H$ has some eigenstates that aligns with measurement basis, the Hamiltonian shadow protocol is not tomography-complete.
\end{proposition}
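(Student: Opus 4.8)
The plan is to translate the hypothesis into a structural constraint on $V_H$ and then show it forces an entire off-diagonal row of the post-processing matrix $X_H$ to vanish, which by the criterion established in Appendix~\ref{sec:shadow_map} already rules out tomography-completeness. Concretely, the statement that ``an eigenstate aligns with the measurement basis'' means that some eigenvector $\ket{v_k}=V_H\ket{k}$ equals a computational basis vector $\ket{b_0}$ up to a global phase, i.e. $(V_H)_{b,k}=e^{i\alpha}\delta_{b,b_0}$. The phase is irrelevant since everything downstream depends only on $\abs{(V_H)_{b,k}}^2$. First I would exploit unitarity of $V_H$: fixing the $k$-th column to be supported on $b_0$ forces, by orthonormality of the columns $\sum_b\overline{(V_H)_{b,k}}(V_H)_{b,l}=\delta_{k,l}$, the $b_0$-th row to be supported on $k$, namely $(V_H)_{b_0,l}=e^{i\alpha}\delta_{k,l}$. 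Thus both a row and a column of $V_H$ are pinned to standard basis vectors.

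Second, I would evaluate the relevant entries of $X_H=(V_H^{\mathrm{sq}})^TV_H^{\mathrm{sq}}$ directly from $(X_H)_{k,j}=\sum_b\abs{(V_H)_{b,k}}^2\abs{(V_H)_{b,j}}^2$. Plugging in $\abs{(V_H)_{b,k}}^2=\delta_{b,b_0}$ collapses the sum to $\abs{(V_H)_{b_0,j}}^2=\delta_{k,j}$, so the $k$-th row (and by symmetry the $k$-th column) of $X_H$ equals $\ket{k}$. Hence every off-diagonal entry $(X_H)_{k,j}$ with $j\neq k$ is zero. Since tomography-completeness requires all off-diagonal elements of $X_H$ to be nonzero, the protocol is not tomography-complete; this is exactly the ``block-diagonal $X_H$'' situation flagged in the main text.

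I would also supply the state-learning reading, which makes the physical mechanism transparent and matches the style of Proposition~\ref{prop:degeneracy} and Proposition~\ref{prop:resonance}. Writing the measured probability as $\bra{b}V_H\Lambda_t\rho_H\overline{\Lambda}_tV_H^\dagger\ket{b}=\sum_{j,l}(\rho_H)_{j,l}e^{-i(E_j-E_l)t}(V_H)_{b,j}\overline{(V_H)_{b,l}}$, the coherence $(\rho_H)_{k,l}$ with $l\neq k$ enters every outcome $b$ only through the factor $(V_H)_{b,k}\overline{(V_H)_{b,l}}$, which vanishes identically: for $b\neq b_0$ the factor $(V_H)_{b,k}=0$, while for $b=b_0$ the factor $\overline{(V_H)_{b_0,l}}=0$. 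Hence these coherences never appear in any measurement at any evolution time and cannot be recovered. Physically, the aligned eigenstate's population is conserved and its coherences with the other eigenstates never rotate into the measurement basis.

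The step requiring the most care is the unitarity argument that pins down the $b_0$-th row once the $k$-th column is fixed, together with the caveat that this obstruction is genuinely due to eigenstate alignment rather than to any eigenvalue pathology. I would therefore note explicitly that the conclusion holds even when $H$ is otherwise ideal, i.e. non-degenerate and non-resonant so that $\Lambda_t$ is a genuine random diagonal unitary and the distinct frequencies $E_j-E_l$ let us isolate each coherence's coefficient; it is precisely the vanishing of that coefficient, not a frequency collision, that defeats the reconstruction, which is what distinguishes this proposition from the degeneracy and resonance cases.
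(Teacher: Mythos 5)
Your proposal is correct and follows essentially the same route as the paper's proof: both arguments show that an aligned eigenstate forces the corresponding off-diagonal entries of $X_H$ to vanish (the paper via the block-diagonal structure of $H$, $V_H$, and $X_H$; you via the explicit formula $(X_H)_{k,j}=\sum_b\abs{(V_H)_{b,k}}^2\abs{(V_H)_{b,j}}^2$ after pinning a column and, by unitarity, a row of $V_H$), and both supplement this with the same state-learning observation that the coherences involving the aligned eigenstate never appear in any measurement probability. Your version is marginally more careful in deriving the row-pinning from column orthonormality rather than asserting block-diagonality outright, but the underlying mechanism is identical.
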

\begin{proof}
This theorem can be easily proved using the $X_H$ matrix. In this case, the Hamiltonian has a block-diagonal form in computational basis, $H = \begin{bmatrix} H^\prime & 0 \\ 0 & H^{\mathrm{D}} \end{bmatrix}$, where $H^{\mathrm{D}}$ is a diagonal matrix constructed by those computational basis eigenstates and their eigenvalues. At the same time, the unitary $V_H$ and matrix $X_H$ all have the form $\begin{bmatrix} V^\prime & 0 \\ 0 & V^{\mathrm{D}} \end{bmatrix}$ and $\begin{bmatrix} X^\prime & 0 \\ 0 & X^{\mathrm{D}}  \end{bmatrix}$, which does not satisfy the condition of $(X_H)_{i,j}\neq 0$ for $i\neq j$. In this case, some off-diagonal terms of $\rho$ cannot be estimated and the Hamiltonian shadow protocol is not tomography-complete.

We can also prove this proposition from the viewpoint of state learning. The evolution unitary of this Hamiltonian has the form of $e^{-iHt}=\begin{bmatrix} V_{H}^\prime\Lambda_1 V_{H}^{\prime\dagger} & 0 \\ 0 & \Lambda_2 \end{bmatrix}$, where $\Lambda_1$ and $\Lambda_2$ are independent random diagonal unitaries. The evolved state is 
\begin{equation}
e^{-iHt}\begin{bmatrix}
    \rho_{00} & \rho_{01} \\ \rho_{10} & \rho_{11}
\end{bmatrix}
e^{iHt}=\begin{bmatrix}
V_{H}^\prime\Lambda_1 V_{H}^{\prime\dagger}\rho_{00}V_{H}^\prime\overline{\Lambda}_1 V_{H}^{\prime\dagger} & V_{H}^\prime\Lambda_1 V_{H}^{\prime\dagger}\rho_{01}\overline{\Lambda}_2 \\ \Lambda_2\rho_{10}V_{H}^\prime\overline{\Lambda}_1 V_{H}^{\prime\dagger} & \Lambda_2\rho_{11}\overline{\Lambda}_2
\end{bmatrix}.
\end{equation}
where ${\rho_{ij}}_{i,j}$ denote blocks constituting $\rho$. Following the same logic of Hamiltonian shadow, measuring the evolved state in computational basis can only estimate all elements of $\rho_{00}$ and diagonal elements of $\rho_{11}$. This fact would strongly limit the detection capability of Hamiltonian shadow, and only observables of the form $O=\begin{bmatrix}
O^\prime & 0 \\ 0 & O^{\mathrm{D}}
\end{bmatrix}$ can be estimated.
\end{proof}

However, an intriguing observation is that, the Hamiltonian itself has this form. 
This means that, when computational basis eigenstates exist, the expectation value of $\Tr(H\rho)$ can still be estimated using the Hamiltonian shadow constructed using $e^{-iHt}$. 
The protocol needs to be slightly modified as following. 
One first needs to treat the whole Hilbert system as the direct sum of two systems $\mathcal{H}=\mathcal{H}^\prime\oplus\mathcal{H}_{\mathrm{D}}$, where $\mathcal{H}_{\mathrm{D}}$ is the Hilbert space corresponding to those computational basis eigenstates. Then, if the measurement outcome $\ket{b}$ is in $\mathcal{H}_{\mathrm{D}}$, the estimator of $\Tr(H\rho)$ is $\bra{b}H\ket{b}$. If $\ket{b}$ is in $\mathcal{H}^\prime$, we can first rewrite it as a lower-dimensional vector and use $H^\prime$ as the Hamiltonian to construct the Hamiltonian shadow estimator $\hat{\rho}^\prime$, which is the unbiased estimator of $\rho_{00}$. Then, we use $\Tr(H^\prime\hat{\rho}^\prime)$ to estimate the energy $\Tr(H\rho)$.

\begin{corollary}
When $H$ has some eigenstates that align with computational basis, the Hamiltonian shadow protocol can still estimate the expectation value of $\Tr(H\rho)$.
\end{corollary}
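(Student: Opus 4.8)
The plan is to exploit the block-diagonal structure already identified in Proposition~\ref{theorem:computational_eigenstates}: writing $H = H^\prime \oplus H^{\mathrm{D}}$ relative to the splitting $\mathcal{H} = \mathcal{H}^\prime \oplus \mathcal{H}_{\mathrm{D}}$, and correspondingly $\rho = \begin{bmatrix} \rho_{00} & \rho_{01} \\ \rho_{10} & \rho_{11} \end{bmatrix}$, one immediately has $\Tr(H\rho) = \Tr(H^\prime \rho_{00}) + \Tr(H^{\mathrm{D}}\rho_{11})$, since $H$ has no support on the off-diagonal blocks. Thus it suffices to produce unbiased estimators for the two diagonal-block contributions separately, which is exactly what the modified protocol described above is designed to do.

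First I would handle the $\mathcal{H}_{\mathrm{D}}$ sector. On this subspace the evolution acts as the diagonal unitary $\Lambda_2$, so for any outcome $\ket{b}\in\mathcal{H}_{\mathrm{D}}$ the Born probability is $\bra{b}e^{-iHt}\rho e^{iHt}\ket{b} = (\rho_{11})_{b,b}$, independent of $t$. Hence the outcomes landing in $\mathcal{H}_{\mathrm{D}}$ sample the diagonal of $\rho_{11}$ directly, and the assignment $\hat{o} = \bra{b}H\ket{b} = H^{\mathrm{D}}_{b,b}$ reproduces $\sum_b (\rho_{11})_{b,b}\, H^{\mathrm{D}}_{b,b} = \Tr(H^{\mathrm{D}}\rho_{11})$ in expectation, because $H^{\mathrm{D}}$ is diagonal in the computational basis.

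Next I would apply the main result to the $\mathcal{H}^\prime$ sector. Here the restricted evolution is $V_H^\prime \Lambda_1 V_H^{\prime\dagger}$, a genuine quench dynamics, so Theorem~\ref{theorem:unbiased_estimator} applies verbatim provided the reduced post-processing matrix $X_{H^\prime} = (V_H^{\prime\,\mathrm{sq}})^T V_H^{\prime\,\mathrm{sq}}$ is invertible with nonzero off-diagonal entries. The point to check carefully is that $\rho_{00}$ is an \emph{unnormalized} block; but $\mathcal{M}_{H^\prime}$ is a linear map, so $\hat{\rho}^\prime = \mathcal{M}_{H^\prime}^{-1}(\hat{\sigma})$ satisfies $\sum_{b\in\mathcal{H}^\prime}\mathbb{E}_t[\,\bra{b}e^{-iH^\prime t}\rho_{00}e^{iH^\prime t}\ket{b}\,\hat{\rho}^\prime] = \mathcal{M}_{H^\prime}^{-1}(\mathcal{M}_{H^\prime}(\rho_{00})) = \rho_{00}$, and therefore $\Tr(H^\prime\hat{\rho}^\prime)$ is unbiased for $\Tr(H^\prime\rho_{00})$ with precisely the (unnormalized) Born weights that arise inside the full measurement distribution.

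Finally I would assemble the single combined estimator $\hat{o}$ and take the full expectation over $(t,b)$ ranging over both sectors. The Born distribution splits with total weight $\Tr(\rho_{00})$ on $\mathcal{H}^\prime$ and $\Tr(\rho_{11})$ on $\mathcal{H}_{\mathrm{D}}$, so the two sector contributions simply add to give $\Tr(H^\prime\rho_{00}) + \Tr(H^{\mathrm{D}}\rho_{11}) = \Tr(H\rho)$; the off-diagonal blocks $\rho_{01},\rho_{10}$, which the protocol cannot access, never enter because $H$ annihilates them under the trace. The main obstacle is the bookkeeping of the unnormalized blocks — keeping track of the two sector weights and confirming that the subspace shadow inverse acts correctly on a trace-$\Tr(\rho_{00})$ operator rather than on a normalized state — together with verifying the invertibility condition on $X_{H^\prime}$ inherited from restricting to $\mathcal{H}^\prime$.
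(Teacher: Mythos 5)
Your proposal is correct and follows essentially the same route as the paper: split $\mathcal{H}=\mathcal{H}^\prime\oplus\mathcal{H}_{\mathrm{D}}$, assign $\bra{b}H\ket{b}$ to outcomes in $\mathcal{H}_{\mathrm{D}}$, and run the restricted Hamiltonian shadow with $H^\prime$ on outcomes in $\mathcal{H}^\prime$, using the block-diagonality of $H$ to discard $\rho_{01},\rho_{10}$. You merely make explicit the unbiasedness bookkeeping (the unnormalized Born weights on each sector and the linearity of $\mathcal{M}_{H^\prime}^{-1}$) that the paper leaves implicit.
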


In fact, the Hamiltonian shadow protocol will approach the modified protocol we introduced above for estimating $\tr(H\rho)$ when eigenstates of $H$ approach computational basis. We can numerically show this by Fig.~\ref{fig:Hamiltonian_theta}. The numerical setting is similar with Fig.~3 in the main context, where we choose $V_H=e^{iP\theta}$ with $P$ being a random Hermitian matrix. When estimating observables like Pauli matrix and purity, variances increase when the value of $\theta$ decreases. This is because that $V_H$ approaches the identity matrix and it is hard to estimate off-diagonal terms of $\rho$. However, when we set the observable to the Hamiltonian $H$ itself, the variance of Hamiltonian shadow keeps a constant when $\theta$ approaches zero. This provides an evidence for our statement.

\begin{figure}
\centering
\includegraphics[width=0.4\textwidth]{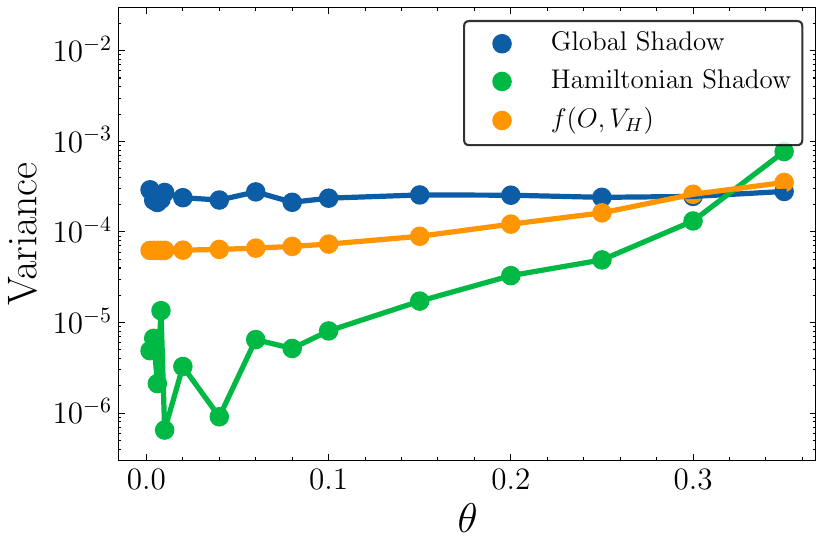}
\caption{The scaling of variance of estimating $H=V_H L V_H^\dagger$ when using $V_H=e^{iP\theta}$ to perform Hamiltonian shadow estimation. $L$ is a non-degenerate real diagonal matrix to represent eigenvalues of the Hamiltonian. The target state is a four-qubit GHZ state and the experiment times is set to be $K=1000$. The value of every point is chosen by taking the median of ten independently sampled $P$.}
\label{fig:Hamiltonian_theta}
\end{figure}

Following the same logic, the condition of computational eigenstates can be covered by a more general condition:
\begin{corollary}\label{coro:block_diagonal}
When the Hamiltonian $H$ is block-diagonal in computational basis, $H=\begin{bmatrix}
H_1 & 0 \\ 0 & H_2
\end{bmatrix}$, the Hamiltonian shadow protocol is not tomography-complete. While if the observable $O$ or target quantum state $\rho$ has a same block-diagonal structure, the value of $\Tr(O\rho)$ can still be estimated using Hamiltonian shadow.
\end{corollary}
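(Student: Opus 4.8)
The plan is to reduce everything to the block structure that a computational-basis block-diagonal $H$ forces on $V_H$ and hence on the post-processing matrix $X_H$. First I would observe that if $H=\begin{bmatrix}H_1 & 0 \\ 0 & H_2\end{bmatrix}$ then each block diagonalizes separately, so the diagonalizing unitary inherits the same block structure, $V_H=\begin{bmatrix}V_1 & 0 \\ 0 & V_2\end{bmatrix}$, with $\Lambda_t=\begin{bmatrix}\Lambda_1 & 0 \\ 0 & \Lambda_2\end{bmatrix}$ the corresponding diagonal phase blocks. Consequently $V_H^{\mathrm{sq}}=\sum_{i,j}|(V_H)_{i,j}|^2\ketbra{i}{j}$ is block-diagonal as well, since $(V_H)_{i,j}$ vanishes whenever $i$ and $j$ lie in different blocks. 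Because $X_H=(V_H^{\mathrm{sq}})^TV_H^{\mathrm{sq}}$, the matrix $X_H=\begin{bmatrix}X_{H_1} & 0 \\ 0 & X_{H_2}\end{bmatrix}$ is block-diagonal too, so every cross-block entry $(X_H)_{i,j}$ with $i,j$ in different blocks is exactly zero.

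For the first claim I would invoke the reversibility criterion of Appendix \ref{sec:shadow_map}: the Hamiltonian shadow map is invertible only if $(X_H)_{i,j}\neq 0$ for all $i\neq j$. The block structure violates this for every cross-block pair, so $\mathcal{N}$ is not invertible and the protocol is not tomography-complete. I would reinforce this with the state-learning picture used in Propositions \ref{prop:degeneracy} and \ref{theorem:computational_eigenstates}: writing $\rho=\begin{bmatrix}\rho_{11} & \rho_{12} \\ \rho_{21} & \rho_{22}\end{bmatrix}$, the evolved state has off-diagonal blocks $V_1\Lambda_1V_1^\dagger\rho_{12}V_2\overline{\Lambda}_2V_2^\dagger$ and its conjugate, which never enter a computational-basis diagonal element $\bra{b}e^{-iHt}\rho e^{iHt}\ket{b}$ because each $\ket{b}$ sits entirely inside one block. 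Hence no choice of measurement outcome and evolution time yields an equation constraining $\rho_{12}$ or $\rho_{21}$, confirming tomographic incompleteness.

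For the second claim I would first record the algebraic reduction $\Tr(O\rho)=\Tr(O_1\rho_{11})+\Tr(O_2\rho_{22})$, where $O_1,O_2$ are the diagonal blocks of $O$; this holds whenever $O$ is block-diagonal (the off-diagonal blocks of $O\rho$ carry no diagonal) or $\rho$ is block-diagonal (the cross terms vanish outright). It then suffices to estimate the diagonal blocks $\rho_{11}$ and $\rho_{22}$. Restricting the measurement record to outcomes $\ket{b}$ within block $p$, the dynamics reduces to the within-block Hamiltonian shadow generated by $H_p$, whose post-processing matrix is precisely $X_{H_p}$; assuming each $H_p$ meets the within-block conditions (invertible $X_{H_p}$ with nonzero within-block off-diagonals), the block-restricted inverse reconstructs the full diagonal block of $\rho_H=V_H^\dagger\rho V_H$, and rotating back by $V_H$ yields unbiased estimators of $\rho_{11}$ and $\rho_{22}$. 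The main obstacle I anticipate is making the block-restricted inverse rigorous: one must show that taking the pseudo-inverse of the globally non-invertible $\mathcal{N}$ acts block-wise and still returns unbiased estimators of the diagonal blocks, which amounts to checking that the vanishing cross-block entries of $X_H$ decouple the diagonal-block reconstruction from the irrecoverable off-diagonal blocks, exactly as the energy-estimation argument following Proposition \ref{theorem:computational_eigenstates} does in the special case where one block is one-dimensional.
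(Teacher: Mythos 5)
Your proposal is correct and follows essentially the same route the paper intends: the corollary is stated as a direct generalization of the proof of Proposition~\ref{theorem:computational_eigenstates} (block-diagonal $V_H$ forces block-diagonal $X_H$, killing the cross-block off-diagonal condition, while the diagonal blocks $\rho_{11},\rho_{22}$ remain recoverable via the within-block shadow maps, exactly as in the paper's modified protocol for $\Tr(H\rho)$). The only point worth flagging is that you correctly make explicit an assumption the paper leaves implicit, namely that each block Hamiltonian $H_p$ must itself satisfy the within-block invertibility conditions on $X_{H_p}$.
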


Corollary \ref{coro:block_diagonal} is just one situation where the Hamiltonian shadow map becomes invertible caused by eigenstates, there are many other scenarios in which $X_H$ is not invertible or has zero off-diagonal elements. An example is shown in Sec.~\ref{sec:hadamard}, where $V_H=h^{\otimes N}$ and $(X_H)_{i,j}=2^{-N}$ for all $i$ and $j$. Note that, we can develop a simpler criterion to decide whether $X_H$ is invertible or not. As $X_H=(V_H^{\mathrm{sq}})^TV_H^{\mathrm{sq}}$, deciding whether $X_H$ is invertible or not is equivalent to deciding whether $V_H^{\mathrm{sq}}$ is invertible or not.

\subsection{Target State}

Until now, all discussions are about the evolution unitary $e^{-iHt}$ and measurement. It seems that the reversibility of Hamiltonian shadow map is not related with the target state $\rho$. This is counter-intuitive as if the state is a thermal state $\rho=\frac{e^{-\beta H}}{\Tr(e^{-\beta H})}$, it does not change under Hamiltonian evolution, $e^{-iHt}\rho e^{iHt}=\rho$. Therefore, the final computational basis measurements can only estimate the diagonal terms of $\rho$, and the Hamiltonian shadow seems to be not tomography-complete. While we surprisingly find that this is not the case!

The reason is simple while interesting. Since the Hamiltonian shadow protocol allows us to have the complete information of the Hamiltonian, the unitary $V_H$ is also known. If the state commutes with the Hamiltonian, it has the form of $\rho=V_HL V_H^\dagger$, where $L$ is a positive diagonal matrix. Thus, when $V_H$ is known, there are only $2^N$ unknown parameters to determine $\rho$. As there are also $2^N$ diagonal elements of $\rho$, it is possible that one can reconstruct the whole density matrix $\rho$ with only its diagonal elements. Here we will show that the Hamiltonian shadow can do this task directly.

\begin{proposition}
If the target state commutes with the Hamiltonian, $[H,\rho]=0$, the Hamiltonian shadow protocol also gives the unbiased estimator of $\rho$,
\begin{equation}
\mathbb{E}_{t,b}\left[V_H\mathcal{N}^{-1}\left(\overline{\Lambda}_tV_H^\dagger\ketbra{b}{b}V_H\Lambda_t\right)V_H^\dagger\right]=\rho
\end{equation}
\end{proposition}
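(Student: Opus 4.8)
The plan is to prove the identity by a direct computation that also dispels the apparent paradox, rather than merely invoking the general unbiasedness of Theorem~\ref{theorem:unbiased_estimator}. (That theorem already applies verbatim, since its derivation of $\mathcal{M}_H(\rho)=V_H\mathcal{N}(V_H^\dagger\rho V_H)V_H^\dagger$ never used any special property of $\rho$; so $\mathbb{E}_{t,b}[\hat{\sigma}]=\mathcal{M}_H(\rho)$ holds regardless of whether $[H,\rho]=0$, and inverting gives $\mathbb{E}_{t,b}[\hat{\rho}]=\rho$.) The illuminating part is to exhibit explicitly how the time-independent populations $\rho_{bb}$, together with the known $V_H$, already suffice. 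First I would record the structural consequence of the commutation relation: since the non-degeneracy of $H$ is assumed throughout, $[H,\rho]=0$ forces $\rho$ to be diagonal in the eigenbasis of $H$, i.e. $V_H^\dagger\rho V_H=L$ with $L$ diagonal. Writing $\ell$ for the vector of diagonal entries of $L$, the computational-basis populations are then $\rho_{bb}=\bra{b}V_HLV_H^\dagger\ket{b}=\sum_k|(V_H)_{b,k}|^2\ell_k=(V_H^{\mathrm{sq}}\ell)_b$, a relation I will reuse at the end.

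Next I would carry out the averaging. Because $e^{-iHt}\rho e^{iHt}=\rho$, the Born probabilities $p(b,t)=\bra{b}e^{-iHt}\rho e^{iHt}\ket{b}=\rho_{bb}$ are independent of $t$. Setting $c_b=\rho_{bb}$ and pulling the linear maps $V_H(\cdot)V_H^\dagger$ and $\mathcal{N}^{-1}$ outside the expectation, the whole computation reduces to evaluating the diagonal-unitary average $\mathbb{E}_t\sum_b c_b\,\overline{\Lambda}_tV_H^\dagger\ketbra{b}{b}V_H\Lambda_t$. Its $(i,j)$ entry carries the factor $\mathbb{E}_t[(\overline{\Lambda}_t)_{ii}(\Lambda_t)_{jj}]=\mathbb{E}_t[e^{i(E_i-E_j)t}]=\delta_{ij}$, the first-order property of the random diagonal ensemble, so only the diagonal survives and the average equals $\sum_i\big(\sum_b c_b|(V_H)_{b,i}|^2\big)\ketbra{i}{i}=\sum_i\big((V_H^{\mathrm{sq}})^Tc\big)_i\ketbra{i}{i}$.

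It remains to push this diagonal matrix through $\mathcal{N}^{-1}$ and conjugate by $V_H$. For a diagonal argument only the first term of Eq.~\eqref{eq:N_inverse} contributes, so its $i$-th diagonal entry is $\sum_j(X_H^{-1})_{ij}\big((V_H^{\mathrm{sq}})^Tc\big)_j=\big(X_H^{-1}(V_H^{\mathrm{sq}})^Tc\big)_i$. Here I would invoke $X_H=(V_H^{\mathrm{sq}})^TV_H^{\mathrm{sq}}$ to obtain the key cancellation $X_H^{-1}(V_H^{\mathrm{sq}})^T=(V_H^{\mathrm{sq}})^{-1}$, which collapses the diagonal to $(V_H^{\mathrm{sq}})^{-1}c$. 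Conjugating by $V_H$ then yields $V_H\,\mathrm{diag}\big((V_H^{\mathrm{sq}})^{-1}c\big)V_H^\dagger$, and substituting the earlier relation $c=V_H^{\mathrm{sq}}\ell$ gives $(V_H^{\mathrm{sq}})^{-1}c=\ell$, hence $\mathbb{E}_{t,b}[\hat{\rho}]=V_H\,\mathrm{diag}(\ell)\,V_H^\dagger=V_HLV_H^\dagger=\rho$, as claimed.

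The main obstacle is conceptual rather than computational: one must see why learning appears impossible and why it is not. Varying $t$ produces no new statistics, so naively only the $2^N$ populations $\rho_{bb}$ are accessible and the coherences seem lost. The resolution, which the computation makes precise, is that the constraint $[H,\rho]=0$ cuts the number of free parameters in $\rho$ down to exactly $2^N$, and the cancellation $X_H^{-1}(V_H^{\mathrm{sq}})^T=(V_H^{\mathrm{sq}})^{-1}$ is exactly the linear-algebraic step that converts the measured populations into the eigenvalues $\ell$ and thereby reconstructs the full $\rho=V_HLV_H^\dagger$. I would also flag the one place where non-degeneracy is essential: it is what guarantees that $V_H^\dagger\rho V_H$ is diagonal; without it $\rho$ could commute with $H$ while remaining non-diagonal in the $V_H$ basis, and the argument would require modification.
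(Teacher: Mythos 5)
Your proof is correct and follows essentially the same route as the paper's: apply Born's rule with $\rho=V_HLV_H^\dagger$, use the first-order random-diagonal-unitary average to kill the off-diagonal entries, and then cancel $X_H^{-1}$ against $X_H=(V_H^{\mathrm{sq}})^TV_H^{\mathrm{sq}}$ to recover $L$; your vectorized bookkeeping via $X_H^{-1}(V_H^{\mathrm{sq}})^T=(V_H^{\mathrm{sq}})^{-1}$ and $c=V_H^{\mathrm{sq}}\ell$ is just a cleaner packaging of the paper's index computation. Your added remarks — that the identity already follows from the general unbiasedness of Theorem~\ref{theorem:unbiased_estimator}, and that non-degeneracy is what forces $V_H^\dagger\rho V_H$ to be diagonal — are both accurate and worth making explicit.
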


\begin{proof}
\comments{
We first show how diagonal terms of $L$ are related with diagonal terms of $\rho$. By definition, we have
\begin{equation}
\rho_{i,i}=(V_H L V_H^\dagger)_{i,i}=\sum_j(V_H)_{i,j}L_{j,j}(V_H^\dagger)_{j,i}=\sum_j\abs{(V_H)_{i,j}}^2L_{j,j}=\sum_{j}(V_H^{\mathrm{sq}})_{i,j}L_{j,j}.
\end{equation}
Therefore, $L$ can be derived using diagonal terms of $\rho$ via the relation of $L_{i,i}=\sum_j[(V_H^{\mathrm{sq}})^{-1}]_{i,j}\rho_{j,j}$.
}
To prove the Hamiltonian shadow also works for the case of $[H,\rho]=0$, we just need to prove 
\begin{equation}
\mathbb{E}_{t,b}\left[\mathcal{N}^{-1}\left(\overline{\Lambda}_tV_H^\dagger\ketbra{b}{b}V_H\Lambda_t\right)\right]=L.
\end{equation}
Instituting Born's rule and the condition of $\rho=V_HLV_H^\dagger$, we have
\begin{equation}
\begin{aligned}
\mathbb{E}_{t,b}\left[\mathcal{N}^{-1}\left(\overline{\Lambda}_tV_H^\dagger\ketbra{b}{b}V_H\Lambda_t\right)\right]=&\mathbb{E}_t\left[\sum_b\bra{b}e^{-iHt}\rho e^{iHt}\ket{b}\mathcal{N}^{-1}\left(\overline{\Lambda}_tV_H^\dagger\ketbra{b}{b}V_H\Lambda_t\right)\right]\\
=&\sum_b\bra{b}V_HLV_H^\dagger\ket{b}\mathcal{N}^{-1}\left(\mathbb{E}_t\overline{\Lambda}_tV_H^\dagger\ketbra{b}{b}V_H\Lambda_t\right)\\
=&\sum_b\bra{b}V_HLV_H^\dagger\ket{b}\mathcal{N}^{-1}\left[\sum_i\left(V_H^\dagger\ketbra{b}{b}V_H\right)_{i,i}\ketbra{i}{i}\right]\\
=&\sum_{b}\sum_k(V_H)_{b,k}L_{k,k}(V_H^\dagger)_{k,b}\sum_{i,j}(X_H^{-1})_{i,j}\left(V_H^\dagger\ketbra{b}{b}V_H\right)_{j,j}\ketbra{i}{i}\\
=&\sum_{i,j,k}\sum_b\abs{(V_H)_{b,k}}^2\abs{(V_H)_{b,j}}^2(X_H^{-1})_{i,j}L_{k,k}\ketbra{i}{i}\\
=&\sum_{i,j,k}\left[(V_H^{\mathrm{sq}})^TV_H^{\mathrm{sq}}\right]_{k,j}(X_H^{-1})_{i,j}L_{k,k}\ketbra{i}{i}\\
=&\sum_{i,j,k}(X_H)_{k,j}(X_H^{-1})_{j,i}L_{k,k}\ketbra{i}{i}\\
=&\sum_{i,k}\delta_{i,k}L_{k,k}\ketbra{i}{i}\\
=&\sum_iL_{i,i}\ketbra{i}{i}=L,
\end{aligned}
\end{equation}
where the third equal sign is due to the property of first-order integral of diagonal random unitary, where all the off-diagonal terms are twirled out; the fourth to the last equal sign is derived using the fact that $X_H$ and $X_H^{-1}$ are both real symmetric matrices; the third to the last equal sign is derived using a simple fact that $X_HX_H^{-1}=\mathbb{I}$.
\end{proof}

\end{document}